\documentclass[12pt]{article}
\usepackage{enumerate}
\usepackage{url} % not crucial - just used below for the URL 

%\pdfminorversion=4
% NOTE: To produce blinded version, replace "0" with "1" below.
\newcommand{\blind}{0}

\addtolength{\oddsidemargin}{-0.5in}%
\addtolength{\evensidemargin}{-1in}%
\addtolength{\textwidth}{1.2in}%
\addtolength{\textheight}{1.7in}%
\addtolength{\topmargin}{-1in}%

\usepackage[font={small}]{caption}
\setlength{\abovecaptionskip}{1ex}
\setlength{\belowcaptionskip}{1ex}
\setlength{\floatsep}{1ex}
\setlength{\textfloatsep}{1ex}
\DeclareMathSizes{10}{9}{6}{5}
\abovedisplayskip.50ex
\belowdisplayskip.50ex
\abovedisplayshortskip.50ex
\belowdisplayshortskip.50ex

\usepackage{subcaption}

%% Packages
\RequirePackage{amsthm,amsmath,amsfonts,amssymb}
\usepackage[authoryear, round, sort&compress]{natbib}
\usepackage{xcolor}
\usepackage{xr-hyper}

\usepackage[pdftex,bookmarks,colorlinks,breaklinks]{hyperref}
\definecolor{dullmagenta}{rgb}{0.4,0,0.4} % #660066
\definecolor{darkblue}{rgb}{0,0,0.4}
\definecolor{coquelicot}{rgb}{0.20, 0.12, 0.72}
\definecolor{navyblue}{rgb}{0,0,0.5}
\hypersetup{linkcolor=blue,citecolor=blue,filecolor=blue,urlcolor=blue} % coloured 
\usepackage{tabularx}

\def\independenT#1#2{\mathrel{\rlap{$#1#2$}\mkern2mu{#1#2}}}

\usepackage{epstopdf}

\usepackage{multirow}
\usepackage{mathtools}
\usepackage{bbm}
\usepackage{graphicx} % Add graphics capabilities
\usepackage{flafter} % Don't place floats before their definition
\usepackage{bm} % Define \bm{} to use bold math fonts

% table environments
\usepackage{colortbl}
\usepackage{arydshln}
\setlength\dashlinedash{0.4pt}
\setlength\dashlinegap{1.0pt}
\setlength\arrayrulewidth{0.5pt}
\usepackage{booktabs}

\usepackage{algorithm}
\usepackage{algpseudocode}

\usepackage{pifont}% http://ctan.org/pkg/pifont

\usepackage{tikz}
\usetikzlibrary{positioning}
\usetikzlibrary{arrows.meta,arrows}

\definecolor{darkblue}{rgb}{0,0,0.4}
\definecolor{coquelicot}{rgb}{0.90, 0.42, 0.72}
\definecolor{burntorange}{rgb}{0.8, 0.33, 0.0}

%command colors

\def\bx{\mathbf{x}}
\def\bX{\mathbf{X}}

\def\bZ{\mathbf{Z}}

\def\bU{\mathbf{U}}

\def\bV{\mathbf{V}}

\def\bI{\mathbf{I}}
\def\bM{\mathbf{M}}
\def\bQ{\mathbf{Q}}

\def\balpha{\boldsymbol{\alpha}}

\def\bbeta{\boldsymbol{\beta}}

\def\bSigma{\boldsymbol{\Sigma}}

\def\fhat{\widehat{f}}

\newcommand\ind{\protect\mathpalette{\protect\independenT}{\perp}}
\def\independenT#1#2{\mathrel{\rlap{$#1#2$}\mkern4mu{#1#2}}}

%\def\tcr{\textcolor{red}}

%%%%%%%%%%%%%%%%%%%%%%%%%%%%%%%%%%%%%%%%%%%%%%%%%%%%%%%%%%%%%%%%%%%%%%%%%%%%%%%%%%%%%%%%%%%%%%%%%%%%%%%%
%%% Additional defintions that were not in the JASA Draft for Paper 1a, but were in the Annals Draft %%%
%%%%%%%%%%%%%%%%%%%%%%%%%%%%%%%%%%%%%%%%%%%%%%%%%%%%%%%%%%%%%%%%%%%%%%%%%%%%%%%%%%%%%%%%%%%%%%%%%%%%%%%%

\def\bS{\mathbf{S}}
\def\bs{\mathbf{s}}

\def\ghat{\widehat{g}}

\def\bDelta{\boldsymbol{\Delta}}

\def\bW{\mathbf{W}}
%%%%%%%%%%%%%%%%%%%%%%%%%%%%%%%%%%%%%%%%%%%
%%%% Definitions added for Paper 2 %%%%%%%%
%%%%%%%%%%%%%%%%%%%%%%%%%%%%%%%%%%%%%%%%%%%

\def\bgamma{\boldsymbol{\gamma}}
\def\bdelta{\boldsymbol{\delta}}

\def\R{\mathbb{R}}

\def\bSigma{\boldsymbol{\Sigma}}

\def \hs2{\hspace{2mm}}

\numberwithin{table}{section}
\numberwithin{equation}{section}

\definecolor{jcolor}{RGB}{041,122,000}
\definecolor{darkred}{RGB}{100,000,000}
\definecolor{purple}{RGB}{200,000,200}

\def\boxit#1{\vbox{\hrule\hbox{\vrule\kern6pt  \vbox{\kern6pt#1\kern6pt}\kern6pt\vrule}\hrule}}

%% Additional definitions introduced for writing the MAR section in the revised AoS manuscript. %%%

%\def\RN{R_N}

%\def\piN{\pi_N}

\def\bS{\mathbf{S}}

%%%%%%%%%%%%%%%%%%%%%%%%%%%%%%%%%%%%%%%%%%%%%%%%%%%%%%%%%%%%%%%%%%%%%%%%%%%%%%%%%%%%%%%%%%%
%%% From hereon, all definitions are new defintions introduced to write HD M-Est paper. %%%
%%% And they are listed exactly in the order they were introduced in the main tex file. %%%
%%%%%%%%%%%%%%%%%%%%%%%%%%%%%%%%%%%%%%%%%%%%%%%%%%%%%%%%%%%%%%%%%%%%%%%%%%%%%%%%%%%%%%%%%%%

%

%

%

%

%

%

%

%

%

%

%

%

%

%

%

%

%

%

%

%

%

%

%

%

%

%

%

%

%

%%%%%%% Extra definitions being added for the SS MAR paper with Jelena and Yuqian. %%%%%%%%%

\theoremstyle{plain}
\newtheorem{theorem}{Theorem}[section]

\newtheorem{lemma}[theorem]{Lemma}

\theoremstyle{remark}
\newtheorem{assumption}{Assumption}

\renewcommand{\thetable}{\arabic{table}}

\begin{document}

	\def\spacingset#1{\renewcommand{\baselinestretch}%
		{#1}\small\normalsize} \spacingset{1}

	%%%%%%%%%%%%%%%%%%%%%%%%%%%%%%%%%%%%%%%%%%%%%%%%%%%%%%%%%%%%%%%%%%%%%%%%%%%%%%

	\date{}

	\if0\blind
	{
		\title{\bf Balancing utility and cost in dynamic treatment regimes}
		\author{Kai Chen \hspace{.2cm}\\
			Institute of Statistics and Big Data, Renmin University of China \\
			\href{mailto:kaichen@ruc.edu.cn}{kaichen@ruc.edu.cn}\\
			and \\
			Yuqian Zhang \hspace{.2cm}\\
			Institute of Statistics and Big Data, Renmin University of China\\
			\href{mailto:yuqianzhang@ruc.edu.cn}{yuqianzhang@ruc.edu.cn}
		}
		\maketitle
		\vspace{-2em}
	} \fi

	\if1\blind
	{
		\title{\bf Balancing utility and cost in dynamic treatment regimes}
		\maketitle
	} \fi

	\bigskip
	\begin{abstract}
	Dynamic treatment regimes (DTRs) are personalized, adaptive strategies designed to guide the sequential allocation of treatments based on individual characteristics over time. Before each treatment assignment, covariate information is collected to refine treatment decisions and enhance their effectiveness. The more information we gather, the more precise our decisions can be. However, this also leads to higher costs during the data collection phase. In this work, we propose a balanced Q-learning method that strikes a balance between the utility of the DTR and the costs associated with both treatment assignment and covariate assessment. The performance of the proposed method is demonstrated through extensive numerical studies, including simulations and a real-data application to the MIMIC-III database.
	\end{abstract}
	
	\noindent%
	{\it Keywords:} Optimal dynamic treatment regimes, Utility-cost balance, Q-learning, Personalized medicine
	%3 to 6 keywords, that do not appear in the title
	\vfill
	
	\newpage
	\spacingset{1.9} % DON'T change the spacing!
	
	\section{Introduction}\label{sec:introduction}
	Dynamic treatment regimes (DTRs) represent a class of statistical methods designed to guide individualized treatment decisions across multiple stages. In contrast to static strategies that maintain the same treatment regardless of how circumstances evolve, DTRs allow treatment decisions to be updated over time based on the changing state of the individual. At each decision point, context-specific information, including prior treatment response, relevant measurements, and other observed characteristics, is used to construct a decision rule that prescribes the most appropriate treatment. For instance, in precision medicine, DTRs are applied to personalize the management of chronic diseases such as diabetes and cancer \citep{kosorok2019precision,komorowski2018artificial,yu2021reinforcement}. In these applications, treatment decisions are adjusted based on real-time biomarker information to modify medication dosages or treatment protocols according to individual patient needs. 
	%In public policy design, DTRs have provided a scientific decision-making framework for areas such as vocational training \citep{rodriguez2022dynamic} and class size setting in the education sector \citep{ding2010estimating}. DTRs have also been applied in behavioral interventions \citep{collins2014optimization, james2016comparative}.
	
	The traditional objective of DTRs is to maximize a utility function through a sequence of adaptive decision rules defined by an individual's evolving history. Toward this end, a range of effective methods have been developed for estimating optimal DTRs. These include regression-based approaches such as Q-learning \citep{zhu2019proper, clifton2020q}, A-learning \citep{murphy2003optimal, shi2018high}, doubly robust methods \citep{ertefaie2021robust, zhang2012robust}, and classification-based approaches such as outcome-weighted learning \citep{zhao2015new} and augmented outcome-weighted learning \citep{liu2018augmented}.
	
	However, a major limitation of the aforementioned methods in practical applications is their exclusive focus on maximizing utility while overlooking the associated costs. A prominent category of cost stems from treatment assignments, which has received considerable attention in the existing literature. Notable examples include \cite{kitagawa2018should,athey2021policy,xu2024optimal}, which consider known non-random cost functions, and \cite{liu2024controlling,laber2014set,wang2018learning,lizotte2012linear,luckett2021estimation,huang2020estimating}, which model the cost or risk as a random variable generated by an unknown process. Nevertheless, these studies primarily focus on the costs of treatment assignments, while the costs associated with covariate assessment have been relatively underexplored.
	
	In the decision-making process, there is a continual demand to collect more and higher-quality covariates to enable more precise and personalized guidance. However, this goal is often constrained by the high costs associated with covariate collection. In precision medicine, gene expression profiling has been demonstrated to improve diagnosis and treatment outcomes for severe diseases. For instance, \cite{wright2023genomic} reports high diagnostic accuracy for pediatric rare diseases using genome-wide data, while \cite{jamal2017tracking} identifies chromosomal instability as a prognostic indicator in lung cancer. In addition, neuroimaging techniques, such as MRI and fMRI, play an important role in disease diagnosis and management \citep{bakker2019supplemental,gorgec2024mri}. Despite their clinical value, these tools are often prohibitively expensive and financially inaccessible in routine practice. Therefore, it is important to develop methods that offer guidance on when and which covariate information should be collected to achieve an effective balance between utility and cost.
	
	Within the field of optimal DTR learning, by enforcing sparsity to exclude redundant variables, regularizations in Q-learning and A-learning \citep{zhu2019proper,shi2018high} are helpful in reducing the costs associated with covariate assessment. However, these methods focus exclusively on the contribution of covariates to improving utility and do not incorporate cost information into the learning process. As a result, covariates that are clinically expensive to collect may still be retained, even if their contribution to utility improvement is marginal.
		
	Another  related line of research focuses on covariate monitoring \citep{neugebauer2017identification,caniglia2016monitor,kreif2021exploiting}. In longitudinal studies, the frequency of covariate monitoring plays an important role in shaping treatment decisions. Increasing the monitoring frequency generally allows for more precise treatment adjustments and potentially improved clinical outcomes, but also leads to higher associated costs. The aforementioned studies focus on evaluating the DTRs under pre-specified monitoring strategies with a given frequency. These approaches are suitable for settings with a small number of candidate strategies, where it is feasible to assess each option individually and choose the one that achieves the best utility-cost trade-off. However, when the set of candidate strategies is large or even infinite, evaluating each strategy individually becomes impractical or computationally intractable. In such situations, directly optimizing over the entire space of strategies offers a more practical solution. Methods such as Q-learning, which directly learn and optimize treatment strategies, are particularly well-suited to these scenarios. Furthermore, while existing work provides guidance on when to collect covariate information by treating all covariates as a single group, it does not address the question of which specific covariates should be collected. In this work, we aim to develop a framework that offers individualized guidance on which covariates should be assessed at different stages, based on the evolving status of each individual.

\paragraph*{Organization}

	The remainder of this paper is structured as follows. Section \ref{sec:formulation} introduces the problem formulation and notation. Section \ref{sec:BQL} describes the proposed Balanced Q-learning (BQL) methodology. Section \ref{sec:theory} establishes the theoretical guarantees for BQL. Section \ref{sec:numerical} evaluates its performance through numerical studies. Discussions are provided in Section \ref{sec:dis}. All technical proofs and supplementary results are provided in the supplementary material.

	\section{Problem Formulation}\label{sec:formulation}
	
	We introduce the necessary notation and the problem formulation used in this paper.

\paragraph*{Notation}

	For any integer $k>0$, define $[k]=\{1,2,\dots,k\}$. Given a vector $\bx\in\mathbb{R}^d$ and a subset $L\subseteq [d]$, let $\bx_L$ denote the subvector of $\bx$ consisting of the components indexed by $L$. For any event $A$, the indicator function $I(A)$ equals $1$ if $A$ occurs and $0$ otherwise. The cardinality of set $B$ is denoted as $|B|$. For a symmetric matrix $\bSigma\in\mathbb{R}^{d\times d}$, $\lambda_{\rm min}(\bSigma)$ and $\lambda_{\rm max}(\bSigma)$ represent its smallest and largest eigenvalues, respectively. We denote the $d\times d$ identity matrix by $\bI_d$, and use $\mathbf{1}_d$ and $\mathbf{0}_d$ to represent the d-dimensional vectors of all ones and all zeros, respectively. For a function $f(\cdot)$, let $\|f\|_{P,2}=\{\int f(x)^2 dP(x)\}^{1/2}$ to denote the $L_2(P)$ norm under the probability measure $P$.

\paragraph*{Observed data} For simplicity of presentation, we focus on two-stage trials with binary treatment assignments, although the proposed theoretical framework and methodology can be readily extended to multi-stage trials with multiple treatment options. The observed data consist of $n$ independent and identically distributed samples, $(\bW_i)_{i=1}^n=(\bS_{1i}, A_{1i}, \bS_{2i}, A_{2i}, Y_i)$. Here, $A_{1i}$ and $A_{2i}$ are binary treatment indicators for the first and second stages, respectively; $\bS_{1i} \in \mathbb{R}^{d_1}$ and $\bS_{2i} \in \mathbb{R}^{d_2}$ represent the pre-treatment covariates at the two stages, which may differ in dimension; and $Y_i \in \mathbb{R}$ denotes the outcome of interest, with higher values indicating more favorable outcomes. We adopt the potential outcome framework. For any $a_1, a_2 \in \{0,1\}$, let $\bS_{2i}(a_1)$ denote the potential value of of the second-stage covariate had treatment $a_1$ been assigned at the first stage, and let $Y_i(a_1, a_2)$ denote the potential outcome corresponding to the treatment sequence $(a_1, a_2)$. For each individual, only the potential values $\bS_{2i} = \bS_{2i}(A_{1i})$ and $Y_i = Y_i(A_{1i}, A_{2i})$ are observable.
	
\paragraph*{Test data} Based on the observed samples $(\bW_i)_{i=1}^n$, our goal is to assign an optimal treatment rule to new individuals from an independent test set, aiming to achieve a balance between utility and associated costs. Throughout the paper, we use the subscript $i$ to denote variables corresponding to observed samples, while variables without the subscript $i$ refer to a new individual from the test population. For each new individual, we posit the existence of hypothetical variables $\bW = (\bS_1, A_1, \bS_2, A_2, Y)$, although this full set of information may not be fully collected due to cost considerations. The available information is collected sequentially according to the following procedure.

	When a new subject arrives, we assume that a subset of baseline covariates, such as demographic information, denoted by $\bS_{l_1} \subseteq \bS_1$ with a known index set $l_1 \subseteq [d_1]$, can be collected at negligible cost. Based on the observed baseline covariates $\bS_{l_1}$, the first decision involves determining which additional covariates should be collected before assigning the first-stage treatment. For example, in precision medicine, this decision may involve assessing whether it is worthwhile to conduct additional laboratory tests, gene expression profiling, neuroimaging scans, or other diagnostic procedures to obtain more comprehensive information on the patient's health status. For a decision rule $\pi^{1c}(\cdot): \R^{|l_1|} \mapsto \mathcal{J}_1$, we denote $J_1 = \pi^{1c}(\bS_{l_1}) \in \mathcal{J}_1$ as the (random) index set of additional covariates to be collected, where $\mathcal{J}_1$ is a user-specified collection of candidate subsets of $[d_1] \setminus l_1$. After selecting the index set $J_1$, we collect the corresponding covariates $\bS_{J_1} = \bS_{\pi^{1c}(\bS_{l_1})}$ and define $\bS_{\bar{J}_1} := (\bS_{l_1}, \bS_{J_1})$ as the full set of covariates assessed prior to the first treatment decision. Consider a treatment rule $\pi^{1t}(\cdot,\cdot)$, where for any $j_1\in\mathcal J_1$, $\pi^{1t}(\cdot,j_1):\R^{|l_1|+|j_1|}\mapsto\{0,1\}$, the first-stage treatment is assigned as $A_1= \pi^{1t}(\bS_{\bar{J}_1},J_1) = \pi^{1t}(\bS_{l_1}, \bS_{\pi^{1c}(\bS_{l_1})},\pi^{1c}(\bS_{l_1}))\in\{0,1\}$.
	
	Similarly, prior to the second-stage treatment, a baseline subset of covariates $\bS_{l_2}$, corresponding to a known index set $l_2 \subseteq [d_2]$, is observed at no cost. Let $\bS_{\bar{L}_2} := (\bS_{\bar{J}_1}, \bS_{l_2})$ represent the history of all covariates collected up to the current stage and $\mathcal{J}_2$ denote the collection of candidate index sets, where each element is a subset of $[d_2] \backslash l_2$. For a decision rule $\pi^{2c}(\cdot, \cdot, \cdot)$, where $\pi^{2c}(\cdot, j_1, a_1):\R^{|l_1|+|j_1|+|l_2|}\mapsto\mathcal J_2$ for any $j_1\in\mathcal J_1$ and $a_1\in\{0,1\}$, we define $J_2=\pi^{2c}(\bS_{\bar{L}_2}, J_1, A_1)\in\mathcal{J}_2$ as the (random) index set of additional covariates to be collected prior to the second-stage treatment. The corresponding additional covariates are denoted by $\bS_{J_2}$. We define $\bS_{\bar{J}_2} := (\bS_{\bar{L}_2}, \bS_{J_2})$ as the full set of covariates available prior to the second-stage treatment. Lastly, consider a treatment rule $\pi^{2t}(\cdot, \cdot, \cdot, \cdot)$, where $\pi^{2t}(\cdot, j_1, a_1, j_2):\R^{|l_1|+|j_1|+|l_2|+|j_2|}\mapsto\{0,1\}$, the second-stage treatment is then assigned as $A_2= \pi^{2t}(\bS_{\bar{J}_2}, J_1, A_1, J_2) \in \{0, 1\}$.

\paragraph*{The utility and cost} 

For a given regime $(\pi^{1c}, \pi^{1t}, \pi^{2c}, \pi^{2t})$, we define its utility as the expected outcome obtained by following this regime, that is, $\text{Utility} = E\{Y(A_1, A_2)\}$. The utility depends on all components of the regime $(\pi^{1c}, \pi^{1t}, \pi^{2c}, \pi^{2t})$ because the assigned treatments, $A_1 = \pi^{1t}(\bS_{\bar{J}_1}, J_1)$ and $A_2 = \pi^{2t}(\bS_{\bar{J}_2}, J_1, A_1, J_2)$, are functions of both the treatment assignment rules $(\pi^{1t}, \pi^{2t})$ and the index sets $(J_1, J_2)$. Meanwhile, these index sets, $J_1 = \pi^{1c}(\bS_{l_1})$ and $J_2 = \pi^{2c}(\bS_{\bar{L}_2}, J_1, A_1)$, are determined by the covariate assessment rules $(\pi^{1c}, \pi^{2c})$. Therefore, to be precise, we should write $J_1 = J_1(\pi^{1c})$, $A_1 = A_1(\pi^{1c}, \pi^{1t})$, $J_2 = J_2(\pi^{1c}, \pi^{1t}, \pi^{2c})$, and $A_2 = A_2(\pi^{1c}, \pi^{1t}, \pi^{2c}, \pi^{2t})$. For simplicity of notation, we omit the explicit dependence of $(J_1, A_1, J_2, A_2)$ on the regime.

In this formulation, we implicitly assume that the covariate assessment decisions $(J_1, J_2)$ may influence the final outcome $Y=Y(A_1, A_2)$ only indirectly, through their impact on the treatment assignments $(A_1, A_2)$, and do not have a direct effect on the outcome. In practice, this implies that although performing additional diagnostic procedures, such as laboratory tests, may contribute to more precise treatment assignments $(A_1, A_2)$ and thereby improve the health outcome $Y(A_1, A_2)$, we assume that the act of conducting these additional diagnostic procedures does not directly influence the health status itself.

	Moreover, we consider user-specified cost functions $C^{1c}(\cdot)$, $C^{1t}(\cdot)$, $C^{2c}(\cdot)$, and $C^{2t}(\cdot)$. Specifically, for any $a_1,a_2\in\{0,1\}$, $C^{1t}(a_1)$ and $C^{2t}(a_2)$ represent the costs associated with allocating $a_1$ and $a_2$ in the first and second stages, respectively. Similarly, $C^{1c}(j_1)$ and $C^{2c}(j_2)$ denote the costs of acquiring covariates corresponding to index sets $j_1 \in \mathcal{J}_1$ and $j_2 \in \mathcal{J}_2$, respectively. The profit is then defined as the utility minus all the expected cost:
	\begin{align}\label{def:profit}
	{\rm Profit}(\pi^{1c},\pi^{1t},\pi^{2c},\pi^{2t}):=E\{Y(A_1,A_2)-C^{1c}(J_1)-C^{1t}(A_1)-C^{2c}(J_2)-C^{2t}(A_2)\}.
	\end{align}
	
	In this paper, our objective is to learn the optimal regime that maximizes the overall profit. Handling the costs associated with treatment assignments, $C^{1t}(A_1)$ and $C^{2t}(A_2)$, is relatively straightforward. This can be achieved by incorporating constant thresholds $C^{1t}(1) - C^{1t}(0)$ and $C^{2t}(1) - C^{2t}(0)$ into the decision boundaries for the first- and second-stage treatment decisions, respectively \citep{athey2021policy}. In contrast, incorporating the costs arising from covariate assessments, $C^{1c}(J_1)$ and $C^{2c}(J_2)$, presents a more challenging problem. Due to these cost considerations, it is often infeasible to collect complete covariate information, rendering standard approaches for optimal DTR estimation under fully observed data inapplicable. 
	
	The central challenge is to determine the optimal size and composition of the index sets $(J_1, J_2)$ used for covariate assessment. Larger index sets allow for the collection of more information, potentially resulting in more accurate treatment decisions and improved outcomes. However, collecting additional covariates generally leads to increased costs. Balancing this trade-off is a key focus of our approach. Furthermore, it is important to identify and retain ``important and inexpensive'' covariates in the assessment process, while excluding ``non-informative and expensive'' covariates to enhance both cost efficiency and decision quality.
	
	To identify the optimal DTR, we adopt the following standard assumptions on the \emph{observed data} \citep{murphy2003optimal,schulte2015q}.
	
	\begin{assumption}[Identification conditions]\label{ass:identification}
		For any $a=(a_1,a_2)\in\{0,1\}^2$, let the following conditions hold: (a) (Consistency) $Y_i=Y_i(A_{1i},A_{2i})$ and $\bS_{2i}=\bS_{2i}(A_{1i})$; (b) (Sequential ignorability) $A_{1i}\ind \{\bS_{2i}(a_1),Y_i(a)\}\mid \bS_{1i}$ and $A_{2i}\ind Y_i(a)\mid (\bS_{1i},A_{1i}=a_1,\bS_{2i})$; (c) (Positivity) $P(A_{1i}=a_1\mid\bS_{1i})>0$ and $P(A_{2i}=a_2\mid\bS_{1i},A_{1i}=a_1,\bS_{2i})>0$ almost surely.
	\end{assumption}
	
	To enable the estimation of the optimal DTR for the test population using the observed data, we assume that the following conditional distributions remain invariant between the observed and test data: \(P_{\bS_{1}}\), \(P_{\bS_{2} \mid (\bS_{1}, A_{1})}\), and \(P_{Y \mid (\bS_{1}, A_{1}, \bS_{2}, A_{2})}\). In particular, we do not assume that individuals in the observed data followed the optimal regime we seek to estimate for the test population. Additionally, we assume full access to all covariates in the observed data and concentrate exclusively on the cost considerations arising in the test population.

	To simplify the presentation, we assume $j_1^{\mathrm{f}} := [d_1] \backslash l_1 \in \mathcal{J}_1$ and $j_2^{\mathrm{f}} := [d_2] \backslash l_2 \in \mathcal{J}_2$, indicating that collecting all covariates is included as one possible option in the covariate assessment strategies. This assumption is not essential and can be omitted. Under this construction, we have $\bS_1 = (\bS_{l_1}, \bS_{j_1^{\mathrm{f}}})$ and $\bS_2 = (\bS_{l_2}, \bS_{j_2^{\mathrm{f}}})$. Additionally, we define $\bS_{\bar{l}_2^{\,\mathrm{f}}} := (\bS_1, \bS_{l_2})$ and $\bar{\bS}_2 := (\bS_1, \bS_2)$.

	\section{Balanced Q-learning}\label{sec:BQL}

	In the following, we present the identification results for the optimal DTR that maximizes the overall profit at the population level in Section \ref{sec:optimal}. In Section \ref{sec:algorithm}, we introduce our Balanced Q-learning (BQL) method for estimating the optimal DTR.

	\subsection{The Optimal Regime}\label{sec:optimal}
	
	We first define the Q-functions corresponding to the four key decisions in the considered setup: the first- and second-stage covariate assessments and the first- and second-stage treatment assignments. This construction extends the classical framework \citep{clifton2020q} to accommodate scenarios where covariate assessments are integrated into the decision-making process. The Q-functions are formulated in a backward fashion, beginning from the second-stage treatment assignment and sequentially moving backward through the decision stages.
	
	In standard settings where covariate assessment costs are not considered, full access to covariate information is available prior to treatment assignments. Under such scenarios, the second-stage Q-function is typically defined as follows: for any $\bar{\bs}_2 \in \mathbb{R}^{d_1 + d_2}$ and $a_1, a_2 \in \{0,1\}$, 
$$
\bar{Q}^{2t}(\bar{\bs}_2, a_1, a_2) = E\{ Y - C^{2t}(a_2) \mid \bar{\bS}_2 = \bar{\bs}_2, A_1 = a_1, A_2 = a_2 \},
$$
which quantifies the expected profit of assigning treatment $a_2$ given the current state.

Notably, although the construction of $\bar{Q}^{2t}$ relies on the full covariate vector $\bar{\bS}_2$, it remains estimable using the \emph{observed data}, where complete covariate information is available. However, for new individuals in the \emph{test population}, treatment decisions cannot be made by directly applying the standard approach of maximizing $\bar{Q}^{2t}(\bar{\bs}_2, a_1, a_2)$ over $a_2 \in \{0,1\}$, because this procedure requires access to the entire covariate vector $\bar{\bS}_2$, which is often unavailable due to cost considerations. When covariate assessment decisions are incorporated into the decision-making process, the historical information accessible at each stage is constrained by the preceding decisions on which covariates to collect. As a result, treatment assignments must be determined based solely on the subset of covariates that have been collected up to the current decision point, rather than the full covariate history.

For any given index sets $j_1 \in \mathcal{J}_1$ and $j_2 \in \mathcal{J}_2$, we denote $\bS_{\bar{j}_2} := (\bS_{l_1}, \bS_{j_1}, \bS_{l_2}, \bS_{j_2})$ as the set of historical covariates available prior to the second-stage treatment decision. We then define the second-stage Q-function under restricted covariate access as: for any $\bs_{\bar{j}_2}\in\R^{|l_1|+|j_1|+|l_2|+|j_2|}$, $a_1, a_2 \in \{0,1\}$, $j_1\in\mathcal J_1$, and $j_2\in\mathcal J_2$,
\begin{align*}
Q^{2t}(\bs_{\bar{j}_2}, j_1, a_1, j_2, a_2) = E\left\{\bar{Q}^{2t}(\bar{\bS}_2, a_1, a_2) \mid \bS_{\bar{j}_2} = \bs_{\bar{j}_2} \right\},
\end{align*}
which represents the expected profit associated with assigning treatment $a_2$ at the second stage, given access to the historical covariates $\bS_{\bar{j}_2}$ and the first-stage treatment $A_1=a_1$. Accordingly, we define the rule that selects the treatment maximizing the Q-function as
$\check{\pi}^{2t}(\bs_{\bar{j}_2}, j_1, a_1, j_2) = \mathop{\arg\max}_{a_2 \in \{0,1\}} Q^{2t}(\bs_{\bar{j}_2}, j_1, a_1, j_2, a_2).$

	For the second-stage covariate assessment, we first define the following unrestricted Q-function based on the rule $\check{\pi}^{2t}$: for any $\bs_{\bar{l}_2^{\,\mathrm{f}}}\in\R^{d_1+|l_2|}$, $a_1 \in \{0,1\}$, $j_1\in\mathcal J_1$, and $j_2\in\mathcal J_2$,
\begin{align}\label{def:Qbar_2c}
\bar{Q}^{2c}(\bs_{\bar{l}_2^{\,\mathrm{f}}}, j_1, a_1, j_2) = E\left\{\bar{Q}^{2t}\left(\bar{\bS}_2, a_1, \check{\pi}^{2t}(\bS_{\bar{j}_2}, j_1, a_1, j_2)\right) - C^{2c}(j_2) \mid \bS_{\bar{l}_2^{\,\mathrm{f}}} = \bs_{\bar{l}_2^{\,\mathrm{f}}}, A_1 = a_1 \right\}.
\end{align}
	This Q-function represents the expected profit under the following hypothetical scenario: (a) the second-stage treatment will be assigned according to the optimal rule $\check{\pi}^{2t}$, which maximizes the profit based on the subset of covariates $\bS_{\bar{j}_2}$; and (b) the current expectation is evaluated assuming full access to the historical covariate $\bS_{\bar{l}_2^{\,\mathrm{f}}} = (\bS_1, \bS_{l_2}) = (\bS_{l_1}, \bS_{j_1^\mathrm{f}}, \bS_{l_2})$ up to the point of the second covariate assessment. However, if $j_1 \in \mathcal{J}_1$ denotes the index set of covariates assessed before the first treatment, the available information prior to the second-stage covariate assessment is restricted to the subset $\bS_{\bar{l}_2} := (\bS_{l_1}, \bS_{j_1}, \bS_{l_2})$. Therefore, we define the restricted Q-function as: for any $\bs_{\bar{l}_2}\in\R^{|l_1|+|j_1|+|l_2|}$, $a_1 \in \{0,1\}$, $j_1\in\mathcal J_1$, and $j_2\in\mathcal J_2$,
\begin{align*}
Q^{2c}(\bs_{\bar{l}_2}, j_1, a_1, j_2) = E\left\{\bar{Q}^{2c}(\bS_{\bar{l}_2^{\,\mathrm{f}}}, j_1, a_1, j_2) \mid \bS_{\bar{l}_2} = \bs_{\bar{l}_2}\right\},
\end{align*}
and specify the corresponding decision rule by
$\check{\pi}^{2c}(\bs_{\bar{l}_2}, j_1, a_1) = \mathop{\arg\max}_{j_2 \in \mathcal{J}_2} Q^{2c}(\bs_{\bar{l}_2}, j_1, a_1, j_2).$

	Similarly, when assigning the first-stage treatment, we define the following Q-functions based on the rule $\check{\pi}^{2c}$: for any $\bs_1\in\R^{d_1}$, $\bs_{\bar{j}_1}\in\R^{|l_1|+|j_1|}$, $a_1 \in \{0,1\}$, and $j_1\in\mathcal J_1$,
\begin{align}
\bar{Q}^{1t}(\bs_{1}, j_1, a_1) &= E\left\{\bar{Q}^{2c}\left(\bS_{\bar{l}_2^{\,\mathrm{f}}}, j_1, a_1, \check{\pi}^{2c}(\bS_{\bar{l}_2}, j_1, a_1)\right) - C^{1t}(a_1) \mid \bS_{1} = \bs_{1}, A_1 = a_1\right\},\label{def:Q1t}\\
Q^{1t}(\bs_{\bar{j}_1}, j_1, a_1) &= E\left\{\bar{Q}^{1t}(\bS_{1}, j_1, a_1) \mid \bS_{\bar{j}_1} = \bs_{\bar{j}_1}\right\}.\nonumber
\end{align}
The first Q-function, $\bar{Q}^{1t}$, represents the expected profit when subsequent decisions are made using the (restricted) optimal rules $(\check{\pi}^{2c},\check{\pi}^{2t})$, with the expectation evaluated assuming full access to the historical covariate information $\bS_1 = (\bS_{l_1}, \bS_{j_1^{\mathrm{f}}})$ up to the point of the first treatment assignment. The second Q-function, $Q^{1t}$, denotes the expected profit using only the collected covariates $\bS_{\bar{j}_1} := (\bS_{l_1}, \bS_{j_1})$. The optimal first-stage treatment decision rule is then defined by
$\check{\pi}^{1t}(\bs_{\bar{j}_1},j_1) = \mathop{\arg\max}_{a_1 \in \{0,1\}} Q^{1t}(\bs_{\bar{j}_1}, j_1, a_1).$
	
	Lastly, for the first-stage covariate assessment, we define the Q-function as: for any $\bs_{l_1}\in\R^{|l_1|}$ and $j_1\in\mathcal J_1$,
\begin{align}\label{def:Q1c}
Q^{1c}(\bs_{l_1}, j_1) = E\left\{ \bar{Q}^{1t}\left(\bS_1, j_1, \check{\pi}^{1t}(\bS_{\bar{j}_1},j_1)\right) - C^{1c}(j_1) \mid \bS_{l_1} = \bs_{l_1} \right\},
\end{align}
which represents the expected profit obtained by assessing the covariate subset indexed by $j_1$, followed by applying the treatment rules $(\check{\pi}^{1t},\check{\pi}^{2c},\check{\pi}^{2t})$. The corresponding optimal covariate assessment decision rule is given by
$\check{\pi}^{1c}(\bs_{l_1}) = \mathop{\arg\max}_{j_1 \in \mathcal{J}_1} Q^{1c}(\bs_{l_1}, j_1),$
which selects the covariate subset that maximizes the expected profit based on the baseline covariates $\bs_{l_1}$.
	
	The following theorem establishes the optimality of the constructed regime $(\check{\pi}^{1c}, \check{\pi}^{1t}, \check{\pi}^{2c}, \check{\pi}^{2t})$.

	\begin{theorem}\label{thm:optimal_DTR}
		Let Assumption \ref{ass:identification} hold. Consider any dynamic treatment regime (DTR) $(\pi^{1c},\pi^{1t},\pi^{2c},\pi^{2t})$, we have %where for any $a_1,a_2\in\{0,1\}$, $j_1\in\mathcal J_1$, and $j_2\in\mathcal J_2$, $\pi^{1c}(\cdot):\R^{|l_1|}\mapsto\mathcal J_1$, $\pi^{1t}(\cdot,j_1):\R^{|l_1|+|j_1|}\mapsto\mathcal\{0,1\}$, $\pi^{2c}(\cdot,j_1,a_1):\R^{|l_1|+|j_1|+|l_2|}\mapsto\mathcal J_2$, and $\pi^{2t}(\cdot,j_1,a_1,j_2):\R^{|l_1|+|j_1|+|l_2|+|j_2|}\mapsto\mathcal\{0,1\}$, then
		${\rm Profit}(\check\pi^{1c},\check\pi^{1t},\check\pi^{2c},\check\pi^{2t})\ge {\rm Profit}(\pi^{1c},\pi^{1t},\pi^{2c},\pi^{2t}).$
	\end{theorem}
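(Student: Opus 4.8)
The plan is to argue by backward induction (dynamic programming), exploiting that each rule $\check\pi$ is, by construction, the pointwise maximizer of the corresponding $Q$-function. The first ingredient I would establish is an identification/transport step at the terminal (second-treatment) stage: combining consistency and the second clause of sequential ignorability in Assumption \ref{ass:identification} with the assumed invariance of the conditional laws $P_{\bS_1}$, $P_{\bS_2\mid(\bS_1,A_1)}$ and $P_{Y\mid(\bS_1,A_1,\bS_2,A_2)}$, one shows that in the test population $\bar Q^{2t}(\bar\bS_2,a_1,a_2)=E\{Y(a_1,a_2)\mid\bar\bS_2,A_1=a_1\}-C^{2t}(a_2)$. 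This converts the potential-outcome expectation that defines the Profit into the estimable observed-data Q-function and anchors the recursion.

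Next I would write $\text{Profit}(\pi)=E\{Y(A_1,A_2)-C^{2t}(A_2)-C^{2c}(J_2)-C^{1t}(A_1)-C^{1c}(J_1)\}$ and peel off the decisions in reverse temporal order via the tower property, conditioning successively on the information actually available at each decision point: $\sigma(\bS_{\bar J_2},J_1,A_1,J_2)$ for $A_2$, then $\sigma(\bS_{\bar L_2},J_1,A_1)$ for $J_2$, then $\sigma(\bS_{\bar J_1},J_1)$ for $A_1$, and finally $\sigma(\bS_{l_1})$ for $J_1$. At each decision the decision variable is a deterministic function of the conditioning information, so the conditional expectation of the continuation value equals the appropriate $Q$-function evaluated at the rule's output. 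Two facts drive the recursion: (i) the continuation value obtained when all \emph{future} decisions are taken greedily is, by the very definitions, the ``barred'' Q-function ($\bar Q^{2c}$, $\bar Q^{1t}$); and (ii) projecting a barred Q-function onto the strictly smaller covariate set actually available at that stage yields the ``unbarred'' Q-function ($Q^{2t}$, $Q^{2c}$, $Q^{1t}$, $Q^{1c}$), again by the tower property. The argument thus alternates between barred and unbarred functions as it climbs back through the stages.

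The heart of the matter is then immediate: at every stage $\pi$'s decision is only one feasible choice, so its Q-value is dominated pointwise by the maximal Q-value, which the greedy rule attains by definition, e.g. $Q^{2t}(\bS_{\bar J_2},J_1,A_1,J_2,A_2)\le\max_{a_2}Q^{2t}(\cdots,a_2)=Q^{2t}(\cdots,\check\pi^{2t}(\cdots))$, and analogously for $J_2$, $A_1$ and $J_1$. Chaining these inequalities through the telescoping gives $\text{Profit}(\pi)\le E\{Q^{1c}(\bS_{l_1},\check\pi^{1c}(\bS_{l_1}))\}$. Running the identical chain with $\pi=\check\pi$ turns every inequality into an equality, since the evaluated rule is then exactly the maximizer; hence the right-hand side equals $\text{Profit}(\check\pi^{1c},\check\pi^{1t},\check\pi^{2c},\check\pi^{2t})$, which yields the claim.

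I expect the main obstacle to be the conditioning bookkeeping rather than the maximization. One must verify at each projection step that the extra conditioning variables (notably the realized $A_1$ and the assessment indices $J_1,J_2$) do not alter the relevant conditional expectation beyond the covariates retained in the unbarred Q-function; this holds because, under the test-population generative mechanism, $A_1$, $J_1$ and $J_2$ are themselves measurable functions of those retained covariates, so conditioning on them is redundant. One must also check that the nested full-information versus restricted-information pairs match the definitions of the $\bar Q$ and $Q$ functions exactly, that the greedy rules $\check\pi$ are well-defined measurable maximizers, and that the transport step above is applied consistently (within the stratum $A_1=a_1$) whenever one passes between the potential-outcome formulation of the Profit and the observed-data Q-functions.
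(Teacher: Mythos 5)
Your proposal is correct and follows essentially the same route as the paper's proof: the paper likewise expands the profit via the tower rule in reverse temporal order, invokes consistency and sequential ignorability from Assumption \ref{ass:identification} to move between the potential-outcome formulation and the observed-data Q-functions (in particular to justify conditioning on $A_1=\pi^{1t}(\bS_{\bar{J}_1},J_1)$), alternates between the barred and restricted Q-functions by projection, and bounds each stage's decision pointwise by the greedy maximizer. The only cosmetic difference is bookkeeping: where you chain inequalities telescopically, the paper records the stagewise losses as explicit nonnegative gap terms $G^{2t}$, $G^{2c}$, $G^{1t}$ plus the first-stage comparison $E[Q^{1c}\{\bS_{l_1},\check{\pi}^{1c}(\bS_{l_1})\}-Q^{1c}\{\bS_{l_1},\pi^{1c}(\bS_{l_1})\}]$, and computes ${\rm Profit}(\check\pi^{1c},\check\pi^{1t},\check\pi^{2c},\check\pi^{2t})=E[Q^{1c}\{\bS_{l_1},\check{\pi}^{1c}(\bS_{l_1})\}]$ exactly, matching your equality step.
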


	\subsection{Algorithm}\label{sec:algorithm}
	
	After establishing the identification results for the population-level optimal DTR in Theorem \ref{thm:optimal_DTR}, we propose a Balanced Q-learning (BQL) algorithm to estimate the optimal DTR using the observed samples. The detailed construction is provided in Algorithm \ref{alg:BQL2}. In the following, we outline the key intuition underlying the construction of the algorithm.

	Theorem \ref{thm:optimal_DTR} demonstrates that the population-level optimal DTR can be identified by finding the maximizers of the previously defined Q-functions. To determine the maximizer of the Q-function at a given state, it is sufficient to select one action as the baseline and calculate the differences between the Q-functions corresponding to different actions and the chosen baseline. Specifically, for each $a_1,a_2\in\{0,1\}$, $j_1\in\mathcal J_1$, and $j_2\in\mathcal J_2$, we define the following contrast functions: 
	\begin{align*}
	\Delta^{2t}(\bs_{\bar{j}_2},j_1,a_1,j_2)&:=Q^{2t}(\bs_{\bar{j}_2},j_1,a_1,j_2,1)-Q^{2t}(\bs_{\bar{j}_2},j_1,a_1,j_2,0),\\
	\Delta^{2c}(\bs_{\bar{l}_2},j_1,a_1,j_2)&:=Q^{2c}(\bs_{\bar{l}_2},j_1,a_1,j_2)-Q^{2c}(\bs_{\bar{l}_2},j_1,a_1,j_2^{\rm f}),\\
	\Delta^{1t}(\bs_{\bar{j}_1},j_1)&:=Q^{1t}(\bs_{\bar{j}_1},j_1,1)-Q^{1t}(\bs_{\bar{j}_1},j_1,0),\\
	\Delta^{1c}(\bs_{l_1},j_1)&:=Q^{1c}(\bs_{l_1},j_1)-Q^{1c}(\bs_{l_1},j_{1}^{\rm f}).
	\end{align*}
	
	\begin{algorithm}[htbp] \caption{Balanced Q-learning (BQL)}\label{alg:BQL2}
		\linespread{1}\selectfont
		\begin{algorithmic}[1]
			\Require Observations $(\bW_i)_{i=1}^n=(\bS_{1i},A_{1i},\bS_{2i},A_{2i},Y_i)_{i=1}^n$, with user-specified cost functions.
			\State Split the data into $K$ folds ($K \geq 2$), indexed by $\mathcal{I}_k$. Define $\mathcal{D}_{-k} = (\bW_i)_{i \in [n] \setminus \mathcal{I}_k}$.
			\State Regress $Y_i \sim (\bar{\bS}_{2i}, A_{1i})$ and $A_{2i} \sim (\bar{\bS}_{2i}, A_{1i})$ using $\mathcal{D}_{-k}$ to obtain $\widehat{f}_2^{-k}(\cdot)$ and $\widehat{g}_2^{-k}(\cdot)$.
			\State Define $\hat r_{f_2,i}^{-k}:=Y_{i}-\fhat_{2}^{-k}(\bar{\bS}_{2i},A_{1i})$, $\hat r_{g_2,i}^{-k}:=A_{2i}-\ghat_{2}^{-k}(\bar{\bS}_{2i},A_{1i})$, and compute
			\begin{align*}
				\widetilde{\balpha}&=\mathop{\arg\min}_{\balpha}\sum_{k\in[K]}\sum_{i\in\mathcal{I}_k}\left[\hat r_{f_2,i}^{-k}-\hat r_{g_2,i}^{-k}\left\{(\bar{\bS}_{2i},A_{1i})^\top\balpha+ C^{2t}(1) - C^{2t}(0)\right\}\right]^2,\\
				\widehat{\balpha}_{j_1 a_1 j_2} &= \mathop{\arg\min}_{\balpha} \sum_{i \in [n]} \left\{ (\bar{\bS}_{2i}, a_1)^\top \widetilde{\balpha} - \bS_{\bar{j}_2 i}^\top \balpha \right\}^2\;\;\text{for each}\;\;j_1 \in \mathcal{J}_1,\;\;a_1 \in \{0,1\},\;\;j_2 \in \mathcal{J}_2.
			\end{align*}
			\State Define pseudo-outcomes
			\begin{align}\label{def:pseudo}
	\widehat{Y}_{j_1j_2i}^{2c}=(\bar{\bS}_{2i},A_{1i})^\top\widehat{\balpha}\left\{I\left(\bS_{\bar{j}_2i}^\top\widehat{\balpha}_{j_1A_{1i}j_2}> 0\right)-I\left(\bS_{\bar{j}_2^{\rm f}i}^\top\widehat{\balpha}_{j_1A_{1i}j_2^{\rm f}}> 0\right)\right\}-C^{2c}(j_2)+C^{2c}(j_2^\mathrm{f}).
	\end{align}
			 \State For each $a_1\in\{0,1\}$, $j_1\in\mathcal J_1$, and $j_2\in\mathcal J_2$, compute
			\begin{align*}
			\widetilde{\bbeta}_{j_1j_2}&=\mathop{\arg\min}_{\bbeta}\sum_{i\in[n]}\left\{\widehat{Y}_{j_1j_2i}^{2c}-(\bS_{\bar l_2^{\,\rm f}i},A_{1i})^\top\bbeta\right\}^2,\\
			\widehat{\bbeta}_{j_1a_1j_2}&=\mathop{\arg\min}_{\bbeta}\sum_{i\in[n]}\left\{(\bS_{\bar l_2^{\,\rm f}i},a_1)^\top\widetilde{\bbeta}_{j_1j_2}-\bS_{\bar l_2i}^\top\bbeta\right\}^2.
			\end{align*}
			\State Define pseudo-outcomes
			\begin{align}
				\widehat{Y}_{j_1i}^{1t}&=Y_i-C^{2t}(A_{2i})+(\bar{\bS}_{2i},A_{1i})^\top\widehat{\balpha}\left\{I\left(\bS_{\bar{j}_2^{\rm f}i}^\top\widehat{\balpha}_{j_1A_{1i}j_2^{\rm f}}> 0\right)-A_{2i}\right\}-C^{2c}(j_2^{\rm f})\nonumber\\
				&\quad+\sum\nolimits_{j_2\in\mathcal{J}_2}I\left(j_2={\arg\max}_{j_2^\prime\in\mathcal{J}_2}\bS_{\bar{l}_2i}^\top\widehat{\bbeta}_{j_1 A_{1i} j_2^\prime}\right)(\bS_{\bar{l}_2^{\, \rm f}i},A_{1i})^\top\widetilde{\bbeta}_{j_1j_2}-C^{1t}(A_{1i}).\label{def:Y_J1}
			\end{align}
			\State Repeat Steps 3-7 and construct a cross-fitted version of \eqref{def:Y_J1} using only samples in $\mathcal{D}_{-k}$ to obtain $\widehat{Y}_{j_1i}^{1t,-k}$. Regress $\widehat{Y}_{j_1i}^{1t,-k} \sim \bS_{1i}$ and $A_{1i} \sim \bS_{1i}$ using $\mathcal{D}_{-k}$ to obtain $\widehat{f}_{j_1}^{-k}(\cdot)$ and $\widehat{g}_1^{-k}(\cdot)$.
			\State Define $\hat r_{f_1,i}^{-k}:=\widehat{Y}_{j_1i}^{1t}-\fhat_{j_1}^{-k}(\bS_{1i})$, $\hat r_{g_1,i}^{-k}:=A_{1i}-\ghat_{1}^{-k}(\bS_{1i})$. For each $j_1\in\mathcal J_1$, compute
			\begin{align*}
				\widetilde{\bgamma}_{j_1}&=\mathop{\arg\min}_{\bgamma}\sum_{k\in[K]}\sum_{i\in\mathcal{I}_k}(\hat r_{f_1,i}^{-k}-\hat r_{g_1,i}^{-k}\bS_{1i}^\top\bgamma)^2,\;\; \widehat{\bgamma}_{j_1}=\mathop{\arg\min}_{\bgamma}\sum_{i\in[n]}\Big(\bS_{1i}^\top\widetilde{\bgamma}_{j_1}-\bS_{\bar{j}_1i}^\top\bgamma\Big)^2.
			\end{align*}
			\State Construct pseudo-outcomes $\widehat{Y}_{j_1i}^{1c}=\widehat{Y}_{j_1i}+\bS_{1i}^\top\widetilde{\bgamma}_{j_1}\{I(\bS_{\bar{j}_1}^\top\widehat{\bgamma}_{j_1}>0)-A_{1i}\}-C^{1c}(j_1)$ and compute
			\begin{align*}
				\widehat{\bdelta}_{j_1}&=\mathop{\arg\min}_{\bdelta}\sum_{i\in[n]}\left(\widehat{Y}_{j_1i}^{1c}-\widehat{Y}_{j_1^{\rm f}i}^{1c}-\bS_{l_1i}^\top\bdelta\right)^2\;\;\text{for each}\;\;j_1\in\mathcal J_1.
			\end{align*}
			\Return $(\widehat{\balpha}_{j_1a_1j_2})_{j_1\in\mathcal J_1,j_2\in\mathcal J_2,a_1\in\{0,1\}}$, $(\widehat{\bbeta}_{j_1a_1j_2})_{j_1\in\mathcal J_1,j_2\in\mathcal J_2,a_1\in\{0,1\}}$, $(\widehat{\bgamma}_{j_1})_{j_1\in\mathcal J_1}$, and $(\widehat{\bdelta}_{j_1})_{j_1\in\mathcal J_1}$.
		\end{algorithmic}
	\end{algorithm}

	These contrast functions quantify the relative benefits of different actions compared to the baseline. The optimal DTR can be equivalently defined as
	\begin{align*}
	\check{\pi}^{2t}(\bs_{\bar{j}_2}, j_1, a_1, j_2) &= I\{\Delta^{2t}(\bs_{\bar{j}_2},j_1,a_1,j_2)>0\},&\quad\check{\pi}^{2c}(\bs_{\bar{l}_2}, j_1, a_1) &= \mathop{\arg\max}_{j_2 \in \mathcal{J}_2} \Delta^{2c}(\bs_{\bar{l}_2}, j_1, a_1, j_2),\\
	\check{\pi}^{1t}(\bs_{\bar{j}_1},j_1) &= I\{\Delta^{1t}(\bs_{\bar{j}_1}, j_1, a_1)>0\},&\quad\check{\pi}^{1c}(\bs_{l_1}) &= \mathop{\arg\max}_{j_1 \in \mathcal{J}_1} \Delta^{1c}(\bs_{l_1}, j_1).
	\end{align*}
	To estimate the optimal regime, it suffices to estimate the above contrast functions. Modeling the contrast functions rather than the Q-functions directly offers the advantage of reducing the number of models that must be specified at each stage \citep{shi2018high,ertefaie2021robust} -- since the contrast function corresponding to the baseline action is always zero by definition, it does not require separate estimation.

	Moreover, since the restricted Q-functions $(Q^{2t}, Q^{2c}, Q^{1t})$ are defined as the conditional expectations of the unrestricted Q-functions $(\bar Q^{2t}, \bar Q^{2c}, \bar Q^{1t})$, it is natural to introduce the corresponding contrast functions for the unrestricted Q-functions. These contrast functions capture the differences in the unrestricted Q-functions relative to the chosen baseline actions and serve as intermediate quantities that facilitate the estimation of the restricted contrast functions. Specifically, define $\bar{\Delta}^{2t}(\bar{\bs}_2,a_1):=\bar{Q}^{2t}(\bar{\bs}_2,a_1,1)-\bar{Q}^{2t}(\bar{\bs}_2,a_1,0)$, $\bar{\Delta}^{2c}(\bs_{\bar{l}_2^{\,\rm f}},j_1,a_1,j_2):=\bar{Q}^{2c}(\bs_{\bar{l}_2^{\,\rm f}},j_1,a_1,j_2)-\bar{Q}^{2c}(\bs_{\bar{l}_2^{\,\rm f}},j_1,a_1,j_2^{\rm f})$, and $\bar{\Delta}^{1t}(\bs_{1},j_1):=\bar{Q}^{1t}(\bs_{1},j_1,1)-\bar{Q}^{1t}(\bs_{1},j_1,0)$. Then,
	\begin{align*}
	\Delta^{2t}(\bs_{\bar{j}_2},j_1,a_1,j_2)&=E\left\{\bar{\Delta}^{2t}(\bar{\bS}_2,a_1)\mid \bS_{\bar{j}_2} = \bs_{\bar{j}_2} \right\},\\
	\Delta^{2c}(\bs_{\bar{l}_2}, j_1, a_1, j_2)&=E\left\{\bar{\Delta}^{2c}(\bS_{\bar{l}_2^{\,\rm f}},j_1,a_1,j_2) \mid \bS_{\bar{l}_2} = \bs_{\bar{l}_2}\right\},\\
	\Delta^{1t}(\bs_{\bar{j}_1}, j_1, a_1)&=E\left\{\bar{\Delta}^{1t}(\bS_{1},j_1) \mid \bS_{\bar{j}_1} = \bs_{\bar{j}_1}\right\}.
	\end{align*}
	Therefore, once we have obtained estimates for the functions $(\bar{\Delta}^{2t}, \bar{\Delta}^{2c}, \bar{\Delta}^{1t})$, we can apply nested regression on the collected historical covariates to derive estimates for the restricted contrast functions $(\Delta^{2t}, \Delta^{2c}, \Delta^{1t})$. The contrast function corresponding to the Q-function at the first covariate assessment, $\Delta^{1c}$, does not require nested regression since it is defined based on the baseline variables $\bS_{l_1}$, which are always directly observable.

	In the following, we introduce the estimation procedures for the contrast functions $\bar{\Delta}^{2t}$, $\bar{\Delta}^{2c}$, $\bar{\Delta}^{1t}$, and $\Delta^{1c}$ in a sequential manner. We first focus on the second-stage treatment assignment. Define the nuisance functions $f_2(\bar{\bs}_{2}, a_1) = E(Y \mid \bar{\bS}_{2} = \bar{\bs}_{2}, A_1 = a_1)$ and $g_2(\bar{\bs}_{2}, a_1) = E(A_2 \mid \bar{\bS}_{2} = \bar{\bs}_{2}, A_1 = a_1)$. Using these definitions, we have the following representation:
\[
Y - f_2(\bar{\bS}_2, A_1) = \{A_2 - g_2(\bar{\bS}_2, A_1)\} \left\{\bar{\Delta}^{2t}(\bar{\bS}_2, A_1) + C^{2t}(1) - C^{2t}(0)\right\} +\varepsilon_{2t},
\]
with some $\varepsilon_{2t}\in\R$ satisfying $E(\varepsilon_{2t}\mid\bar{\bS}_2,A_1)=0$ and $E[\{A_2 - g_2(\bar{\bS}_2, A_1)\}\varepsilon_{2t}\mid\bar{\bS}_2,A_1]=0$. This relationship implies that, after obtaining estimates for the nuisance functions $(f_2, g_2)$, we can substitute these estimates and apply an R-learner approach \citep{robinson1988root,nie2021quasi,ertefaie2021robust} to estimate the contrast function $\bar{\Delta}^{2t}$. This procedure is detailed in Steps 2-3 of Algorithm \ref{alg:BQL2}, where a linear estimator $(\bar{\bs}_{2}, a_1)^\top \widetilde{\balpha}$ is used for the contrast function $\bar{\Delta}^{2t}(\bar{\bs}_2, a_1)$. Following this, we apply a nested regression to obtain a linear estimator $\bs_{\bar{j}_2}^\top \widehat{\balpha}_{j_1 a_1  j_2}$ for the restricted contrast function $\Delta^{2t}(\bs_{\bar{j}_2}, j_1, a_1, j_2)$, as outlined earlier.
	
	We now return to the second-stage covariate assessment and note the representation:
\begin{align*}
&\bar{Q}^{2t}\left(\bar{\bS}_2, A_1, \check{\pi}^{2t}(\bS_{\bar{j}_2}, j_1, A_1, j_2)\right) - C^{2c}(j_2) \\
&\qquad = \bar{Q}^{2t}(\bar{\bS}_2, A_1, 0) + \bar{\Delta}^{2t}(\bar{\bS}_2, A_1) I\left\{\Delta^{2t}(\bS_{\bar{j}_2}, j_1, A_1, j_2) > 0\right\} - C^{2c}(j_2),
\end{align*}
for any $j_1 \in \mathcal{J}_1$ and $j_2 \in \mathcal{J}_2$. Together with \eqref{def:Qbar_2c}, we obtain
\begin{align*}
&\bar{\Delta}^{2t}(\bar{\bS}_2, A_1) \left[ I\left\{\Delta^{2t}(\bS_{\bar{j}_2}, j_1, A_1, j_2) > 0 \right\} - I\left\{\Delta^{2t}(\bS_{\bar{j}_2^{\rm f}}, j_1, A_1, j_2^{\rm f}) > 0 \right\} \right] \\
&\qquad = \bar{\Delta}^{2c}(\bS_{\bar{l}_2^{\,\rm f}}, j_1, A_1, j_2) + C^{2c}(j_2) - C^{2c}(j_2^{\rm f}) + \varepsilon_{2c},
\end{align*}
where $\varepsilon_{2c} \in \mathbb{R}$ satisfies $E(\varepsilon_{2c} \mid \bS_{\bar{l}_2^{\,\mathrm{f}}}, A_1) = 0$. Based on this result, we define pseudo-outcomes as in \eqref{def:pseudo} by substituting the previously obtained estimates for the contrast functions $\bar{\Delta}^{2t}$ and $\Delta^{2t}$. We then conduct a nested regression to obtain a linear estimator $(\bs_{\bar{l}_2^{\,\rm f}}, a_1)^\top \widetilde{\bbeta}_{j_1 j_2}$ for $\bar{\Delta}^{2c}(\bs_{\bar{l}_2^{\,\rm f}}, j_1, a_1, j_2)$, followed by an additional nested regression to derive a linear estimator $\bs_{\bar{l}_2}^\top \widehat{\bbeta}_{j_1 a_1 j_2}$ for the restricted contrast function $\Delta^{2c}(\bs_{\bar{l}_2}, j_1, a_1, j_2)$; see Steps 4-5 of Algorithm \ref{alg:BQL2}.

	For the first-stage treatment assignment, we consider the oracle pseudo-outcome
	\begin{align*}
	\widetilde{Y}_{j_1}^{1t}&:=\bar{Q}^{2c}(\bS_{\bar{l}_2^{\,\mathrm{f}}}, j_1, A_1, \check{\pi}^{2c}(\bS_{\bar{l}_2}, j_1, A_1))-C^{1t}(A_1)\\
	&=E\left[\bar{Q}^{2t}\left(\bar{\bS}_2, A_1, 0\right)+\bar{\Delta}^{2t}(\bar{\bS}_2, A_1)I\left\{\Delta^{2t}(\bS_{\bar{J}_2^{\rm f}}, j_1, A_1, j_2^{\rm f})>0\right\} \mid \bS_{\bar{l}_2^{\,\mathrm{f}}}, A_1\right]- C^{2c}(j_2^{\rm f})\\
	&\qquad+\sum\nolimits_{j_2\in\mathcal{J}_2}I\left\{\check{\pi}^{2c}(\bS_{\bar{l}_2}, j_1, A_1)=j_2\right\}\bar{\Delta}^{2c}(\bS_{\bar{l}_2^{\,\rm f}},j_1,A_1,j_2)-C^{1t}(A_1).
	\end{align*}
By definition in \eqref{def:Q1t}, $\bar{Q}^{1t}(\bS_{1}, j_1, A_1) = E(\widetilde{Y}_{j_1}^{1t} \mid \bS_1, A_1)$. Moreover, we observe that $E\{Y-C^{2t}(A_2)-A_2\bar{\Delta}^{2t}(\bar{\bS}_2,A_1)-\bar{Q}^{2t}(\bar{\bS}_2,A_1,0)\mid \bar{\bS}_2,A_1\}=0$. This motivates the definition of
	\begin{align}
	Y_{j_1}^{1t}&:=Y-C^{2t}(A_2)+\bar{\Delta}^{2t}(\bar{\bS}_2, A_1)\{I(\Delta^{2t}(\bS_{\bar{j}_2^{\rm f}}, j_1, A_1, j_2^{\rm f})>0)-A_2\} - C^{2c}(j_2^{\rm f})\nonumber\\
	&\qquad+\sum\nolimits_{j_2\in\mathcal{J}_2}I\{\check{\pi}^{2c}(\bS_{\bar{l}_2}, j_1, A_1)=j_2\}\bar{\Delta}^{2c}(\bS_{\bar{l}_2^{\,\rm f}},j_1,A_1,j_2)-C^{1t}(A_1).\label{def:Y1t}
	\end{align}
It then follows that $\bar{Q}^{1t}(\bS_{1}, j_1, A_1) = E(Y_{j_1}^{1t} \mid \bS_1, A_1)$. Define the nuisance functions $f_{j_1}(\bs_1) = E( Y_{j_1}^{1t} \mid \bS_1 = \bs_1 )$ and $g_1(\bs_1) = E( A_1 \mid \bS_1 = \bs_1 )$. We have the following relationship:
\[
Y_{j_1}^{1t} - f_{j_1}(\bS_1) = \left\{ A_1 - g_1(\bS_1) \right\} \bar{\Delta}^{1t}(\bS_1,j_1) + \varepsilon_{1t},
\]
where $\varepsilon_{1t} \in \mathbb{R}$ satisfies $E\left( \varepsilon_{1t} \mid \bS_1 \right) = 0$ and $E[ \{ A_1 - g_1(\bS_1) \} \varepsilon_{1t} \mid \bS_1 ] = 0$. This formulation allows us to apply the R-learner framework again. Building on the constructed estimates of the contrast functions $\bar{\Delta}^{2t}$, $\Delta^{2t}$, $\bar{\Delta}^{2c}$, and $\Delta^{2c}$, we define the estimated pseudo-outcomes for $Y_{j_1}^{1t}$ as in \eqref{def:Y_J1}. After estimating the nuisance functions $(f_{j_1}, g_1)$, we apply the R-learner via a residual-on-residual regression to obtain a linear estimator $\bs_1^\top \widetilde{\bgamma}_{j_1}$ for $\bar{\Delta}^{1t}(\bs_1, j_1)$. The restricted contrast function $\Delta^{1t}(\bs_{\bar{j}_1}, j_1)$ is then estimated using a linear model $\bs_{\bar{j}_1}^\top \widehat{\bgamma}_{j_1}$ through a nested regression procedure. For detailed implementation steps, see Steps 6-8 of Algorithm \ref{alg:BQL2}.

	Finally, for the first-stage covariate assessment, we consider the oracle pseudo-outcome
	\begin{align*}
	\widetilde{Y}_{j_1}^{1c}&:=\bar{Q}^{1t}(\bS_1,j_1,\check{\pi}^{1t}(\bS_{\bar{j}_1},j_1))-C^{1c}(j_1)\\
	&=f_{j_1}(\bS_1)+\bar{\Delta}^{1t}\{\bS_1,j_1\}\left[I\left\{\Delta^{1t}(\bS_{\bar{j}_1},j_1)>0\right\}-g_{1}(\bS_1)\right]-C^{1c}(j_1).
	\end{align*}
According to \eqref{def:Q1c}, it holds that $Q^{1c}(\bS_{l_1}, j_1) = E( \widetilde{Y}_{j_1}^{1c} \mid \bS_{l_1} )$. We introduce 
$Y_{j_1}^{1c}:=Y^{1t}_{j_1}+\bar{\Delta}^{1t}(\bS_1,j_1)[I\{\Delta^{1t}(\bS_{\bar{j}_1},j_1)>0\}-A_1]-C^{1c}(j_1),$
which satisfies $E( \widetilde{Y}_{j_1}^{1c} - Y_{j_1}^{1c} \mid \bS_{l_1} ) = 0$, and consequently $Q^{1c}(\bS_{l_1}, j_1) = E( Y_{j_1}^{1c} \mid \bS_{l_1} )$. As a result, we have
$\Delta^{1c}(\bS_{l_1},j_1) = Y_{j_1}^{1c} - Y_{j_1^{\rm f}}^{1c} + \varepsilon_{1c},$
where $\varepsilon_{1c}\in\R$ satisfies $E\left( \varepsilon^{1c} \mid \bS_{l_1} \right) = 0$. This motivates the use of a nested regression to estimate $\Delta^{1c}(\bs_{l_1},j_1)$; see Step 9 of Algorithm \ref{alg:BQL2} where we introduce a linear estimator $\bs_{l_1}^\top\widehat{\bdelta}_{j_1}$.

	\begin{figure}[t]
		\vspace{1em}
		\centering
		\begin{tikzpicture}
			\path
			(0,0) node[rectangle,draw](A) {\scriptsize Assess covariates $\bS_{l_1}$}
			(0,-1) node[rectangle,draw](B) {\scriptsize Find the optimal assessment set $ J_1=\hat{\pi}^{1c}(\bS_{l_1}) := \arg\max_{j_1\in\mathcal J_1}\bS_{l_1}^\top\widehat{\bdelta}_{j_1}$}
			(0,-2) node[rectangle,draw](C) {\scriptsize Assess covariates $\bS_{J_1}$}
			(0,-3) node[rectangle,draw](D) {\scriptsize Assign $A_1=\hat{\pi}^{1t}(\bS_{\bar{J}_1},J_1) := I\left(\bS_{\bar J_1}^\top\widehat{\bgamma}_{J_1}>0\right)$}
			(0,-4) node[rectangle,draw](E) {\scriptsize Assess covariates $\bS_{l_2}$}
			(0,-5) node[rectangle,draw](F) {\scriptsize Find the optimal assessment set $J_2=\hat{\pi}^{2c}(\bS_{\bar{L}_2}, J_1, A_1) := \arg\max_{j_2\in\mathcal J_2}\bS_{\bar{L}_2}^\top\widehat{\bbeta}_{J_1A_1j_2}$}
			(0,-6) node[rectangle,draw](G) {\scriptsize Assess covariates $\bS_{J_2}$}
			(0,-7) node[rectangle,draw](H) {\scriptsize Assign $A_2=\hat{\pi}^{2t}(\bS_{\bar{J}_2}, J_1, A_1, J_2) := I\left(\bS_{\bar{J}_2}^\top\widehat{\balpha}_{J_1A_1J_2}>0\right)$}
			(0,-8) node[rectangle,draw](I) {\scriptsize Obtain the outcome $Y=Y(A_1,A_2)$}
			;
			\draw[-{Stealth[length=2mm, width=1mm]}] (A) -- (B);
			\draw[-{Stealth[length=2mm, width=1mm]}] (B) -- (C);
			\draw[-{Stealth[length=2mm, width=1mm]}] (C) -- (D);
			\draw[-{Stealth[length=2mm, width=1mm]}] (D) -- (E);
			\draw[-{Stealth[length=2mm, width=1mm]}] (E) -- (F);
			\draw[-{Stealth[length=2mm, width=1mm]}] (F) -- (G);
			\draw[-{Stealth[length=2mm, width=1mm]}] (G) -- (H);
			\draw[-{Stealth[length=2mm, width=1mm]}] (H) -- (I);
		\end{tikzpicture}
		\vspace{1em}
		\caption{Deploying the DTR learned from Algorithm \ref{alg:BQL2}}\label{fig:balance_GRF}
	\end{figure}
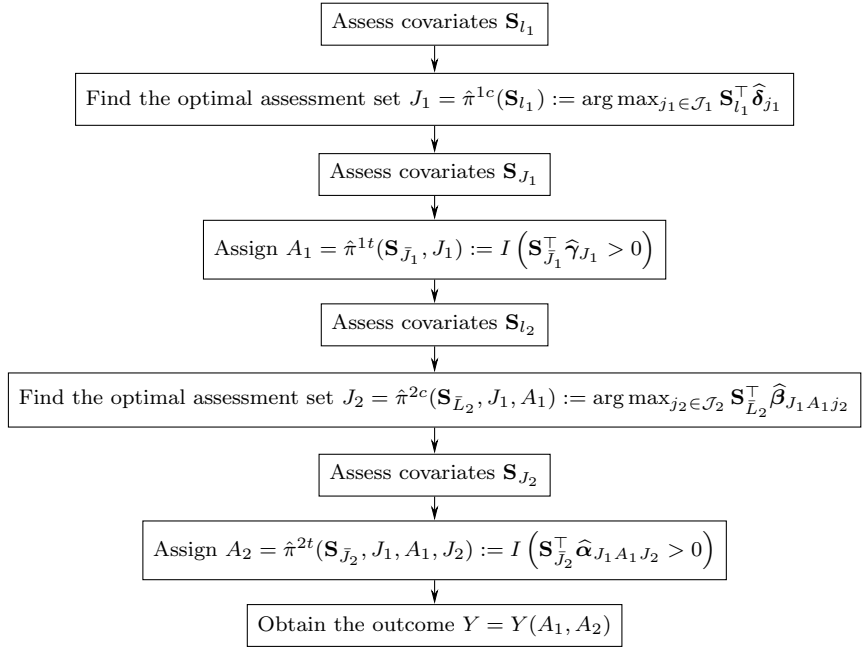

	In Algorithm \ref{alg:BQL2}, we allow flexible estimation for the nuisance functions $(f_2, g_2, f_{j_1}, g_1)$, permitting the use of machine learning methods, such as random forests and neural networks, to improve estimation accuracy. To mitigate the associated bias and reduce overfitting, we adopt a cross-fitting procedure \citep{chernozhukov2018double}. For the contrast functions, we adopt linear models to enhance interpretability and reduce modeling complexity, thereby facilitating the practical implementation of the estimated regime; see, for example, \cite{ertefaie2021robust} for similar approaches. Nonetheless, the proposed procedure can be readily extended to accommodate non-parametric estimators for the contrast functions when greater modeling flexibility is required. Moreover, while we employ R-learners to estimate the contrast functions for treatment assignments, the framework can be extended to incorporate alternative methods such as the S-, T-, X-, and DR-learners \citep{kunzel2019metalearners, foster2023orthogonal, kennedy2023towards}.

	After applying Algorithm \ref{alg:BQL2} and obtaining the estimates of the contrast functions, the learned DTR can be applied sequentially to new individuals from the test population. See an illustration of the whole deploying process in Figure \ref{fig:balance_GRF}.

	\section{Theoretical Results}\label{sec:theory}
	In this section, we establish the asymptotic properties of the proposed estimators and derive regret bounds for the estimated regime. 
	
	We first introduce the target linear parameters $(\balpha_{j_1a_1j_2}^*, \bbeta_{j_1a_1j_2}^*, \bgamma_{j_1}^*, \bdelta_{j_1}^*)$ associated with the restricted contrast functions. These parameters are defined as the population-level minimizers of the corresponding risk functions, serving as the population analogues of the empirical minimizers presented in Algorithm \ref{alg:BQL2}. For each $j_1 \in \mathcal{J}_1$ and $a_1 \in \{0,1\}$, we also define $\bar{Q}^{1t*}(\bs_{1}, j_1, a_1)$ as the working unrestricted Q-functions for the first-stage treatment assignment. They are constructed based on the linear working models for the contrast functions, representing the population-level target under the specified modeling framework. Moreover, we denote $f_{j_1}^*$ as the working model for $f_{j_1}(\bs_1) = E( Y_{j_1}^{1t} \mid \bS_1 = \bs_1 )$, where the pseudo-outcome $Y_{j_1}^{1t}$ in \eqref{def:Y1t} is replaced by the pseudo-outcome defined using linear regimes. The explicit forms of these parameters and working models are provided in \eqref{def:alpha-star}-\eqref{def:delta-star} in Section \ref{sec:sm_target} of the Supplementary Material.
	
	The following assumptions are introduced.
	
	\begin{assumption}[Boundedness]\label{ass:uni_bound_}
	There exists a constant $c_1>0$ such that $|\bar{\Delta}^{2t}(\bar{\bS}_{2i},A_{1i})|<c_1$, $|\bar{Q}^{1t*}(\bS_{1i},j_1,1)-\bar{Q}^{1t*}(\bS_{1i},j_1,0)|<c_1$, and $\|\bar{\bS}_{2i}\|_\infty<c_1$ for any $j_1\in\mathcal{J}_1$ almost surely.
	\end{assumption} 

	\begin{assumption}[Nuisance estimation errors]\label{ass:nuisance1_}
		For each $k\in[K]$ and $j_1\in\mathcal J_1$, 
		\begin{align*}
			\|\fhat_{2}^{-k}(\cdot)-f_{2}(\cdot)\|_{P,2}+\|\fhat_{j_1}^{-k}(\cdot)-f_{j_1}^*(\cdot)\|_{P,2}&=o_p(1),\\
			\|\ghat_{2}^{-k}(\cdot)-g_{2}(\cdot)\|_{P,2}+\|\ghat_{1}^{-k}(\cdot)-g_{1}(\cdot)\|_{P,2}&=o_p(n^{-1/4}),\\
			\|\ghat_{2}^{-k}(\cdot)-g_{2}(\cdot)\|_{P,2}\|\fhat_{2}^{-k}(\cdot)-f_{2}(\cdot)\|_{P,2}+\|\fhat_{j_1}^{-k}(\cdot)-f_{j_1}^*(\cdot)\|_{P,2}\|\ghat_{1}^{-k}(\cdot)-g_{1}(\cdot)\|_{P,2}&=o_p(n^{-1/2}).
			%\|\fhat_{j_1}^{-k}(\cdot)-f_{j_1}(\cdot)\|_{P,2}\|\ghat_{1}^{-k}(\cdot)-g_{1}(\cdot)\|_{P,2}&=o_p(n^{-1/2}).
		\end{align*}
	\end{assumption}
	
	\begin{assumption}[Lower bounds]\label{ass:PD_matrix_}
	There exists a constant $c_2>0$ such that $\lambda_{\min}[E\{{\rm var}(A_{2i}\mid \bar{\bS}_{2i},A_{1i})(\bar{\bS}_{2i},A_{1i})(\bar{\bS}_{2i},A_{1i})^\top\}]>c_2$ as well as $\lambda_{\min}[E\{{\rm var}(A_{1i}\mid \bS_{1i})\bS_{1i}\bS_{1i}^\top\}]>c_2$.
	\end{assumption}
	
	\begin{assumption}[Margin conditions]\label{ass:margin2}
		There exist constants $c_3>0$ and $r\ge 1$ such that for any $t>0$, $a_1\in\{0,1\}$, $j_1, j_1'\in\mathcal{J}_1$, $j_2, j_2'\in\mathcal{J}_2$ with $j_1\neq j_1'$ and $j_2\neq j_2'$:
		$P(|\bS_{\bar j_2i}^\top\balpha_{j_1a_1j_2}^*|\le t)\le c_3 t^r$, $P(|\bS_{\bar l_2i}^\top\bbeta^*_{j_1a_1j_2}-\bS_{\bar l_2i}^\top\bbeta^*_{j_1 a_1j_2^\prime}|\le t)\le c_3 t^r$, $P(|\bS_{\bar j_1i}^\top\bgamma_{j_1}^*|\le t)\le c_3 t^r$, $P(|\bS_{l_1i}^\top\bdelta^*_{j_1}-\bS_{l_1i}^\top\bdelta^*_{j_1^\prime}|\le t)\le c_3 t^r$.
	\end{assumption}
	\begin{assumption}[Model misspecification error]\label{ass:model_misspecification}
		For each $a_1\in\{0,1\}$, $j_1\in\mathcal{J}_1$, and $j_2\in\mathcal{J}_2$, the following holds with some $\epsilon_n\geq0$ almost surely: $|\Delta^{2t}(\bS_{\bar{j}_2i},j_1,a_1,j_2)-\bS_{\bar{j}_2i}^\top\balpha^*_{j_1a_1j_2}|+|\Delta^{2c}(\bS_{\bar{l}_2i},j_1,a_1,j_1)-\bS_{\bar{l}_2i}^\top\bbeta_{j_1a_1j_2}^*|+|\Delta^{1t}(\bS_{\bar{j}_1i},j_1)-\bS_{\bar{j}_1i}^\top\bgamma^*_{j_1}|+|\Delta^{1c}(\bS_{l_1i},j_1)-\bS_{l_1i}^\top\bdelta^*_{j_1}|\leq\epsilon_n$. 
	\end{assumption}
	
	Assumption \ref{ass:uni_bound_} imposes standard boundedness conditions on the covariates and contrast functions; see, for example, \cite{hu2020dtr, ertefaie2021robust}. Assumption \ref{ass:nuisance1_} specifies convergence rate requirements for the nuisance estimators, which are commonly required when applying the R-learner framework \citep{nie2021quasi}. In practice, various non-parametric methods can be used to achieve these convergence rates and ensure the validity of the estimation. Assumption \ref{ass:PD_matrix_} ensures that the ordinary least squares are well-defined by imposing positive definiteness conditions on the relevant design matrices. Assumption \ref{ass:margin2} quantifies the probability of individual covariates occurring near decision boundaries, thereby characterizing the complexity of making accurate decisions. Specifically, a smaller $r$ implies a larger proportion of individuals are located close to the decision boundary, which increases the difficulty of the decision-making task. The margin condition is widely used in classification \citep{mammen1999smooth} to describe classification complexity and generalization error, and it also plays a key role in regret bound analyses for decision-making problems \citep{hu2020dtr, bastani2020online}. %A related assumption is introduced in \cite{shi2018high}, where the probability density functions of the linear contrast functions are assumed to exist and be bounded, implying that the margin condition is satisfied with $r=1$.	
	Lastly, while permitting potential misspecification in the linear models for contrast functions, Assumption \ref{ass:model_misspecification} explicitly quantifies the approximation errors between the linear working models and the true underlying contrast functions.

	The following theorem establishes the asymptotic normality of the proposed estimators. 
	
	\begin{theorem}\label{thm:asymptotic_normality}
		Let Assumptions \ref{ass:uni_bound_}-\ref{ass:margin2} hold with some $r>1$. Then for each $a_1\in\{0,1\}$, $j_1\in\mathcal J_1$, and $j_2\in\mathcal J_2$, as $n\to\infty$, 
		$\sqrt{n}(\widehat{\balpha}_{j_1a_1j_2}-\balpha_{j_1a_1j_2}^*)\xrightarrow{\rm d}\mathcal{N}(0,\bM_{j_1a_1j_2}^{\balpha})$, $\sqrt{n}(\widehat{\bbeta}_{j_1a_1j_2}-\bbeta_{j_1a_1j_2}^*)\xrightarrow{\rm d}\mathcal{N}(0,\bM_{j_1a_1j_2}^{\bbeta})$, $\sqrt{n}(\widehat{\bgamma}_{j_1}-\bgamma_{j_1}^*)\xrightarrow{\rm d}\mathcal{N}(0,\bM_{j_1}^{\bgamma})$, and $\sqrt{n}(\widehat{\bdelta}_{j_1}-\bdelta_{j_1}^*)\xrightarrow{\rm d}\mathcal{N}(0,\bM_{j_1}^{\bdelta})$ with some matrices $\bM_{j_1a_1j_2}^{\balpha}$, $\bM_{j_1a_1j_2}^{\bbeta}$, $\bM_{j_1}^{\bgamma}$, and $\bM_{j_1}^{\bdelta}$.	\end{theorem}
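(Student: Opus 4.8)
The plan is to treat the four estimators in the order in which Algorithm~\ref{alg:BQL2} produces them, namely $\widehat{\balpha}_{j_1a_1j_2}$, then $\widehat{\bbeta}_{j_1a_1j_2}$, then $\widehat{\bgamma}_{j_1}$, and finally $\widehat{\bdelta}_{j_1}$, and to establish for each a joint asymptotically linear representation of the form $\sqrt{n}(\widehat{(\cdot)}-(\cdot)^*)=n^{-1/2}\sum_{i}\psi_i+o_p(1)$, where $\psi_i$ is a mean-zero, square-integrable influence function and $(\cdot)^*$ is the population-level target parameter (the projection/best-linear-predictor parameter defined in the Supplementary Material, so that no correct-specification assumption is needed). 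Asymptotic normality with the stated matrices then follows from the multivariate central limit theorem, with each $\bM$ equal to the variance of the associated $\psi_i$. Since the pseudo-outcomes at every stage are built from the estimators of the preceding stages, the influence functions are obtained by backward induction, each one inheriting the influence contributions of all upstream estimators; the four marginal statements in the theorem are then read off directly.

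First I would analyze the two R-learner steps, which produce $\balphatil$ and $\bgammatil_{j_1}$ via residual-on-residual regressions. The representations $Y-f_2=\{A_2-g_2\}\{\Deltabar^{2t}+C^{2t}(1)-C^{2t}(0)\}+\varepsilon_{2t}$ and $Y_{j_1}^{1t}-f_{j_1}=\{A_1-g_1\}\Deltabar^{1t}+\varepsilon_{1t}$ render the corresponding population losses Neyman-orthogonal in the nuisances $(f_2,g_2)$ and $(f_{j_1},g_1)$. Combined with the cross-fitting in Steps~2 and~7, which decouples nuisance estimation from the fold used for the final fit, and the product-rate bound in Assumption~\ref{ass:nuisance1_}, the first-order effect of nuisance estimation error is $o_p(n^{-1/2})$, so that $\balphatil$ and $\bgammatil_{j_1}$ admit the asymptotically linear expansions above. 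Assumption~\ref{ass:PD_matrix_} ensures the relevant Gram matrices are invertible in the limit, making each least-squares map well-defined and differentiable in its inputs, while Assumption~\ref{ass:uni_bound_} controls the magnitudes of the residuals and contrast functions.

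The crux of the argument, and the step I expect to be the main obstacle, is handling the plug-in indicators $I(\bS_{\bar{j}_2i}^\top\widehat{\balpha}_{j_1a_1j_2}>0)$ and $I(\bS_{\bar{j}_1i}^\top\widehat{\bgamma}_{j_1}>0)$, together with the $\arg\max$ selections, appearing in the pseudo-outcomes \eqref{def:pseudo}, \eqref{def:Y_J1}, and Step~9. These are non-smooth functionals of the upstream estimators, so the delta method cannot be applied to them directly. The key is to show that freezing each indicator at its population value, i.e.\ replacing $\widehat{(\cdot)}$ by $(\cdot)^*$ inside the indicator, perturbs the relevant scaled empirical average by only $o_p(1)$. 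Decomposing the scaled indicator-difference into a centered empirical-process term and a drift term, the former is controlled by stochastic equicontinuity over the finite-VC class of half-space indicators and is negligible for any $r>0$, while the latter is bounded by $\sqrt{n}$ times the probability of the disagreement region. Using the $\sqrt{n}$-consistency established above, this region lies in an event of the form $\{|\bS^\top(\cdot)^*|\le c\,n^{-1/2}\}$, whose probability is $O(n^{-r/2})$ by the margin condition in Assumption~\ref{ass:margin2}, so the drift is $O_p(\sqrt{n}\cdot n^{-r/2})=o_p(1)$ precisely when $r>1$; this is exactly where the hypothesis $r>1$ enters.

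With the indicators frozen, each remaining quantity becomes a smooth, linear-in-parameters functional of previously linearized estimators plus fresh i.i.d.\ noise, and the smooth coefficients multiplying the frozen indicators contribute the upstream influence terms through an ordinary delta-method expansion. The nested regressions yielding $\widehat{\balpha}_{j_1a_1j_2}$ from $\balphatil$, $\widehat{\bbeta}_{j_1a_1j_2}$ from $\bbetatil_{j_1j_2}$, and $\widehat{\bgamma}_{j_1}$ from $\bgammatil_{j_1}$ are ordinary least-squares projections, and $\widehat{\bdelta}_{j_1}$ is a direct regression on the always-observed baseline $\bS_{l_1i}$ requiring no nested step. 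A continuous-mapping and Slutsky argument applied to each projection propagates asymptotic linearity forward, and stacking the resulting influence functions stage by stage produces the four convergence statements. The only remaining bookkeeping is to confirm square-integrability of every influence function, which follows from Assumptions~\ref{ass:uni_bound_} and~\ref{ass:PD_matrix_}.
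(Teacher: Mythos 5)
Your proposal is correct, and its architecture matches the paper's proof: a stage-by-stage analysis in which the two R-learner fits are made first-order insensitive to nuisance error via cross-fitting and the product-rate condition of Assumption \ref{ass:nuisance1_}, asymptotic linearity is then propagated through the nested least-squares projections so that each limiting covariance accumulates the upstream influence terms, and the margin condition with $r>1$ is invoked exactly where the plug-in indicators and $\arg\max$ selections enter.

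The one place you genuinely diverge is the treatment of the non-smooth terms, which you correctly identify as the crux. You propose splitting the scaled indicator-discrepancy sum into a centered empirical-process part, controlled by stochastic equicontinuity over the VC class of half-space indicators, plus a drift of order $\sqrt{n}\,P(\mbox{disagreement region})$. The paper needs no empirical-process machinery at all: using the elementary inequality $|I(a>0)-I(b>0)|\le I(|a|\le|a-b|)$, it localizes the disagreement region inside $\{|\bS_{\bar j_2}^\top\balpha^*_{j_1a_1j_2}|\le C n^{-1/2}\}$ on the $\sqrt{n}$-consistency event of the upstream estimator, and then the law of large numbers plus Assumption \ref{ass:margin2} give
$n^{-1}\sum_{i}|I(\bS_{\bar j_2 i}^\top\widehat{\balpha}_{j_1A_{1i}j_2}>0)-I(\bS_{\bar j_2 i}^\top\balpha^*_{j_1A_{1i}j_2}>0)|=O_p(n^{-r/2})=o_p(n^{-1/2})$ precisely when $r>1$. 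Because this crude bound on the whole sum (not just its mean) already suffices, your centered-plus-drift split is unnecessary; and since your drift term forces $r>1$ anyway, the refinement buys nothing here — the paper's route is strictly more elementary. Two bookkeeping items your sketch leaves implicit are handled explicitly in the paper within the same framework: ties in the $\arg\max$ (the events $\widehat{\mathcal{B}}_{j_1j_2i}$, $\mathcal{B}_{j_1j_2i}$, $\mathcal{B}^*_{j_1j_2i}$ with last-index tie-breaking), which are killed by the pairwise margin conditions on $\bS_{\bar l_2}^\top(\bbeta^*_{j_1a_1j_2}-\bbeta^*_{j_1a_1j_2'})$, and the cross terms in which nuisance error multiplies the indicator discrepancy (the paper's $F_2$ terms in Step 3), which are bounded by Cauchy--Schwarz using the $o_p(n^{-1/4})$ rate on $\widehat{g}_1$ together with the margin bound — that cross term, not the pure R-learner step, is where the $n^{-1/4}$ rate is genuinely needed on top of the product-rate condition.
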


	A complete version of Theorem \ref{thm:asymptotic_normality}, which includes the explicit expressions of the covariance matrices, is presented in Theorem \ref{thm:asymptotic_normality2} of the Supplementary Material. As the covariance matrices can be consistently estimated using plug-in estimators, Theorem \ref{thm:asymptotic_normality} enables asymptotically valid statistical inference for the target linear parameters that characterize the contrast between the Q-functions at each decision stage.

	Additionally, we present a regret bound that characterizes the difference between the profits of the estimated regime and that of the population-level optimal regime.

	\begin{theorem}\label{thm:regret}
		Let Assumptions \ref{ass:uni_bound_}-\ref{ass:model_misspecification} hold for any $r\geq1$, then
		${\rm Profit}(\check{\pi}^{1c},\check{\pi}^{1t},\check{\pi}^{2c},\check{\pi}^{2t})-{\rm Profit}(\hat{\pi}^{1c},\hat{\pi}^{1t},\hat{\pi}^{2c},\hat{\pi}^{2t})=O_p\left\{(1/\sqrt{n})^{r+1}+\epsilon_n^{r+1}\right\}$ as $n\to\infty$.
	\end{theorem}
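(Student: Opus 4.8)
The plan is to decompose the profit regret across the four decision stages, bound each stage's contribution by a combination of the estimation error of the relevant contrast function and the margin condition, and then propagate these bounds backward through the nested structure. The key observation is that the overall profit is a smooth functional of the four decision rules, and at each stage the optimal rule maximizes a Q-function (equivalently, thresholds or argmaxes a contrast function). When we replace $\check\pi$ with $\hat\pi$, the loss incurred is governed by how often the estimated decision disagrees with the optimal one, weighted by the magnitude of the contrast at the disagreement points.

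\medskip

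First I would establish a single-stage regret lemma. For a treatment-assignment stage, the profit loss from using $\hat\pi^{1t}$ in place of $\check\pi^{1t}$ is controlled by $E[|\Delta^{1t}(\bS_{\bar j_1},j_1)| \cdot I\{\hat\pi^{1t}\neq\check\pi^{1t}\}]$. Disagreement can only occur when $\bS_{\bar j_1}^\top\widehat{\bgamma}_{j_1}$ and the true $\Delta^{1t}$ have opposite signs, which forces $|\Delta^{1t}|$ to be small relative to the estimation error $|\bS_{\bar j_1}^\top\widehat{\bgamma}_{j_1}-\Delta^{1t}|$. Combining the margin condition (Assumption \ref{ass:margin2}) with the asymptotic normality bound $\|\widehat{\bgamma}_{j_1}-\bgamma_{j_1}^*\|=O_p(1/\sqrt n)$ from Theorem \ref{thm:asymptotic_normality} and the misspecification bound $\epsilon_n$ from Assumption \ref{ass:model_misspecification}, a standard argument (integrating the tail $P(|\Delta^{1t}|\le t)\le c_3 t^r$ against deviations of size $\delta_n := 1/\sqrt n + \epsilon_n$) yields a per-stage regret of order $\delta_n^{r+1}$. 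The same device applies to the covariate-assessment stages, where the relevant contrast is a pairwise difference $\Delta^{1c}(\bs_{l_1},j_1)-\Delta^{1c}(\bs_{l_1},j_1')$ or the $\bbeta$-contrast, again covered by Assumption \ref{ass:margin2}.

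\medskip

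Next I would chain the four stages together. Because the Q-functions and pseudo-outcomes are defined recursively (each stage's target is built from the optimal rule at the subsequent stage), an error at a later stage feeds into the pseudo-outcome of an earlier stage. The plan is to bound the difference between the oracle pseudo-outcome $\widetilde Y^{1t}_{j_1}$ (built from $\check\pi^{2c},\check\pi^{2t}$) and its estimated counterpart, showing this difference is itself $O_p(\delta_n^{r+1})$ via the same margin-plus-estimation mechanism applied at stages $2t$ and $2c$. Crucially, Assumption \ref{ass:uni_bound_} keeps the contrast functions $\bar\Delta^{2t}$ bounded, so the indicator-difference terms in the pseudo-outcomes translate disagreement probabilities directly into profit-scale quantities. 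Summing the four stage-wise contributions, all of the same order $\delta_n^{r+1}=O_p\{(1/\sqrt n)^{r+1}+\epsilon_n^{r+1}\}$ (using $(a+b)^{r+1}\lesssim a^{r+1}+b^{r+1}$ up to constants), gives the stated rate.

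\medskip

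The hard part will be the error propagation across the nested regressions rather than the single-stage margin argument, which is fairly routine. Specifically, the decision at stage $1c$ depends on estimated contrasts that were themselves constructed by plugging in estimated downstream rules into pseudo-outcomes; I must verify that substituting $\hat\pi^{2t},\hat\pi^{2c},\hat\pi^{1t}$ for the oracle rules perturbs the stage-$1c$ contrast only by $O_p(\delta_n^{r+1})$ and not by a slower rate, so that the margin integration at the top stage still produces a $\delta_n^{r+1}$ contribution rather than, say, $\delta_n^{r}$. This requires showing the downstream substitution error enters the upstream pseudo-outcome multiplicatively through bounded contrast factors and disagreement indicators, so that a regret-order (not estimation-order) quantity propagates. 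A careful bookkeeping of which terms contribute at estimation order $\delta_n$ versus regret order $\delta_n^{r+1}$, and confirming that only the latter survive after taking expectations against the margin, is where the technical care concentrates.
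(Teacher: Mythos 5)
Your single-stage argument is essentially the paper's: at each stage the loss is bounded by the contrast magnitude times a disagreement indicator, disagreement forces the population linear index to be no larger than the estimation-plus-misspecification error, and integrating against the margin condition of Assumption \ref{ass:margin2} yields the exponent $r+1$. The paper carries this out keeping the three sources separate (a pure misspecification term of order $\epsilon_n^{r+1}$, a pure estimation term of order $(1/\sqrt{n})^{r+1}$, and a cross term of order $\epsilon_n(1/\sqrt{n})^{r}$, which is then absorbed), and it handles the argmax stages via pairwise margin conditions exactly as you indicate.

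The gap is in your chaining step, which you yourself flag as the hard part. You propose to verify that substituting $\hat{\pi}^{2t},\hat{\pi}^{2c},\hat{\pi}^{1t}$ for the oracle rules perturbs the stage-$1c$ contrast only by $O_p(\delta_n^{r+1})$. That bound is false and cannot be rescued: the estimated contrast $\bS_{l_1}^\top\widehat{\bdelta}_{j_1}$ differs from its population target $\bS_{l_1}^\top\bdelta^*_{j_1}$ at estimation order $O_p(n^{-1/2})$ (the pseudo-outcome discrepancies contain terms such as $\bar{\bS}_2$-linear factors times $\widehat{\balpha}-\bar{\balpha}^*$, which are $O_p(n^{-1/2})$, not $O_p(n^{-(r+1)/2})$), and the target differs from the true contrast $\Delta^{1c}$ by $\epsilon_n$ under Assumption \ref{ass:model_misspecification}. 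Fortunately, no regret-order perturbation bound is needed, and this is precisely where the paper's architecture differs from your plan. First, the proof of Theorem \ref{thm:optimal_DTR} gives an \emph{exact} performance-difference decomposition, ${\rm Profit}(\check{\pi}^{1c},\check{\pi}^{1t},\check{\pi}^{2c},\check{\pi}^{2t})-{\rm Profit}(\hat{\pi}^{1c},\hat{\pi}^{1t},\hat{\pi}^{2c},\hat{\pi}^{2t})=\hat{G}^{1c}+\hat{G}^{1t}+\hat{G}^{2c}+\hat{G}^{2t}$, where each term compares the \emph{true} Q-function at $\check{\pi}$'s versus $\hat{\pi}$'s action at that single stage, along the trajectory induced upstream by $\hat{\pi}$; downstream errors therefore never enter an upstream contrast inside the regret proof, but appear additively as their own $\hat{G}$ terms. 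Second, all propagation of estimation error through the nested regressions is handled once, in the proof of Theorem \ref{thm:asymptotic_normality} (supplementary full version), whose $O_p(n^{-1/2})$ coefficient rates the regret proof uses as black-box inputs. Third, the gap between contrasts defined with optimal downstream rules and the population linear working targets is exactly what Assumption \ref{ass:model_misspecification} postulates to be at most $\epsilon_n$; it is assumed, not proved. Note finally that your own single-stage lemma, applied with an estimation-order perturbation $\delta_n=n^{-1/2}+\epsilon_n$, already delivers $\delta_n^{r+1}$ at the top stage: it is the margin integration, not a regret-order bound on the perturbed contrast, that produces the exponent $r+1$, so the requirement you set for yourself is both unattainable and unnecessary.
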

	
	Although Theorem \ref{thm:asymptotic_normality} assumes the margin condition to hold with $r>1$ to establish the asymptotic normality of the linear estimators, the regret bound in Theorem \ref{thm:regret} only requires the margin condition holds for any $r\geq1$, with a larger $r$ yielding a faster convergence rate. While our framework accommodates model misspecification, the estimated linear regime achieves profit convergence to the global optimal regime only when the misspecification error diminishes to zero, that is, $\epsilon_n=o(1)$. Moreover, in the absence of model misspecification, that is, $\epsilon_n=0$, the regret bound achieves the order $O_p(1/n)$ when $r=1$, which aligns with the results established in \cite{hu2025fast} for linear Fitted Q-Iteration.

	\section{Numerical experiments}\label{sec:numerical}
	
	We examine the finite-sample performance of the proposed Balanced Q-learning (BQL) method through simulation studies in Section \ref{sec:sim}, followed by a real data application in Section \ref{sec:real}. 
	
	For comparison, we also implement and report results based on the high-dimensional Q-learning (HDQ) algorithm of \cite{zhu2019proper} and the robust Q-learning (RQL) method of \cite{ertefaie2021robust}. To account for the costs $(C^{1t},C^{2t})$ associated with treatment assignments, we follow the approach of \cite{athey2021policy} and extend both the HDQ and RQL methods by incorporating constant thresholds $C^{1t}(1)-C^{1t}(0)$ and $C^{2t}(1)-C^{2t}(0)$ into the decision boundaries for the first- and second-stage treatment decisions, respectively. We compare our method with the adjusted versions of the HDQ and RQL methods, highlighting the effectiveness of the proposed BQL method in incorporating the costs associated with covariate assessments to achieve more cost-effective decision-making.

	The HDQ method includes regularization in the estimation process, resulting in sparse solutions for the estimated contrast functions. Consequently, the costs $(C^{1c},C^{2c})$ for covariate assessments are computed based on the selected covariates. In contrast, the RQL method produces dense solutions, which lead to the full covariate assessment costs $C^{1c}(j_1^{\rm f})$ and $C^{2c}(j_2^{\rm f})$. For both the BQL and RQL methods, the nuisance functions are estimated using the SuperLearner algorithm \citep{van2007super}, where we consider random forests and generalized linear models as the base learners.

	\subsection{Simulation results}\label{sec:sim}
	
	We illustrate the performance of the proposed BQL algorithm through simulations under various models. Additional results can be found in Section \ref{sec:sm_sim} of the Supplementary Material.
	
	Consider the following data-generating process for the observed samples throughout. For each $i\in[n]$, generate $\bS_{1i}\sim\mathcal{N}(0,\bI_p)$ and $A_{i1}\in\{0,1\}$ with $P(A_{1i}=1\mid\bS_1)=\phi(\bS_{1i}^\top\balpha_1)$, where the logistic function is defined as $\phi(t)={\rm exp}(t)/\{1+{\rm exp}(t)\}$ for $t\in\mathbb{R}$. For the second stage, we generate $\bS_{2i}=\bS_{1i}+A_{1i}\bS_{1i}+\mathcal{N}(0,\bI_p)$, and $A_{2i}\in\{0,1\}$ is drawn according to $P(A_{2i}=1\mid \bar{\bX}_{2i})=\phi(\bar{\bX}_{2i}^\top\balpha_2)$, where $\bar{\bX}_{2i}=(\bS_{1i}^\top,A_{1i},\bS_{2i}^\top)^\top$. Finally, the outcome variable is generated through $Y_i=\bar{\bX}_{2i}^\top(\bbeta_1+A_{1i}\bbeta_2+A_{2i}\bbeta_3)+\mathcal{N}(0,0.5^2)$. Our objective is to design a dynamic treatment regime (DTR) that optimizes the following adjusted profit:
	\begin{align}
	{\rm Profit}_{\lambda}(\pi^{1c},\pi^{1t},\pi^{2c},\pi^{2t}) = E\left[Y(A_1,A_2) - \lambda \left\{ C^{1c}(J_1) + C^{1t}(A_1) + C^{2c}(J_2) + C^{2t}(A_2) \right\} \right],\label{def:profit-lambda}
	\end{align}
where the user-specified parameter $\lambda \in [0, \infty)$ controls the trade-off between maximizing the outcome and penalizing the costs. Optimizing the adjusted ${\rm Profit}_{\lambda}(\pi^{1c},\pi^{1t},\pi^{2c},\pi^{2t})$ is equivalent to optimizing the original unadjusted ${\rm Profit}(\pi^{1c},\pi^{1t},\pi^{2c},\pi^{2t})$ as defined in \eqref{def:profit}, with the adjusted costs $(\lambda C^{1c}, \lambda C^{1t}, \lambda C^{2c}, \lambda C^{2t})$ replacing the original cost terms.
	
	In Models 1-3, we set $C^{1t}(a) = C^{2t}(a) = 0$ for $a \in \{0,1\}$ and focus exclusively on the costs associated with covariate assessments. Additional simulation results that incorporate treatment assignment costs are provided in Section \ref{sec:sm_sim} of the Supplementary Material.

	Model 1: Let $\balpha_{1} = (\mathbf{1}_2, \mathbf{0}_3)^\top$, $\balpha_{2} = (\mathbf{1}_2, \mathbf{0}_4, \mathbf{1}_2, \mathbf{0}_3)^\top$, $\bbeta_{1} = (\mathbf{1}_2, 0.5, \mathbf{0}_4, 1, 0.5, 0, 1)^\top$, $\bbeta_{2} = (\mathbf{1}_2, 0.5, \mathbf{0}_8)^\top$, $\bbeta_{3} = (\mathbf{0}_7, 1, 0.5, 0, 1)^\top$, with sample size $n = 500$ and covariate dimension $p = 5$. We set $l_1 = [p]$ and $\mathcal{J}_1 = \{\emptyset\}$, focusing on the costs associated with the second-stage covariate assessment. Specifically, we consider $\mathcal{J}_2 = \{j_{21}, j_{22}\}$ as the set of candidate assessment options and $l_2 = \{1\}$, where $j_{21} = \{2,3,4\}$ and $j_{22} = \{2,3,4,5\}$. The associated costs are $C^{2c}(j_{21}) = 0$ and $C^{2c}(j_{22}) = 0.1$, with varying values of $\lambda$ to control the utility-cost trade-off. In this model, the features $\{S_{22}, S_{23}, S_{25}\}$ are informative for the second-stage treatment assignment, as they contribute to the heterogeneity in potential outcomes between treatment groups.

	Model 2: Let $\balpha_1=\{1,\mathbf{0}_{4}\}^\top$, $\balpha_2=\{1,0,0.5\cdot \mathbf{1}_{5},0,0.5\cdot \mathbf{1}_2,0\}^\top$, $\bbeta_1=\{1,0,0.5,\mathbf{0}_3,1,0,\mathbf{1}_2,0\}^\top$, $\bbeta_2=\{1,0,0.5,\mathbf{0}_8\}^\top$, $\bbeta_3=(\mathbf{0}_{6},1,0,\mathbf{1}_2,0)^\top$, $n=500$, and $p=5$. We set $l_1=l_2=\{1\}$, $\mathcal{J}_1=\{j_{11},j_{12}\}$, and $\mathcal{J}_2=\{j_{21},j_{22},j_{23}\}$, where $j_{11}=\{3,4,5\}$, $j_{12}=\{2,3,4,5\}$, $j_{21}=\{2,3\}$, $j_{22}=\{2,3,4\}$, and $j_{23}=\{2,3,4,5\}$. The associated costs are $C^{1c}(j_{11})=C^{2c}(j_{21})=0$, $C^{2c}(j_{22})=0.1$, and $C^{1c}(j_{12})=C^{2c}(j_{23})=0.2$, with a varying $\lambda$. In this model, $\{S_{11},S_{13},S_{14}\}$ and $\{S_{21},S_{23},S_{24}\}$ contribute to the heterogeneity in potential outcomes.

\begin{figure}[t]
	\centering
	\begin{subfigure}[b]{0.4\textwidth}
		\centering
		\includegraphics[width=\textwidth]{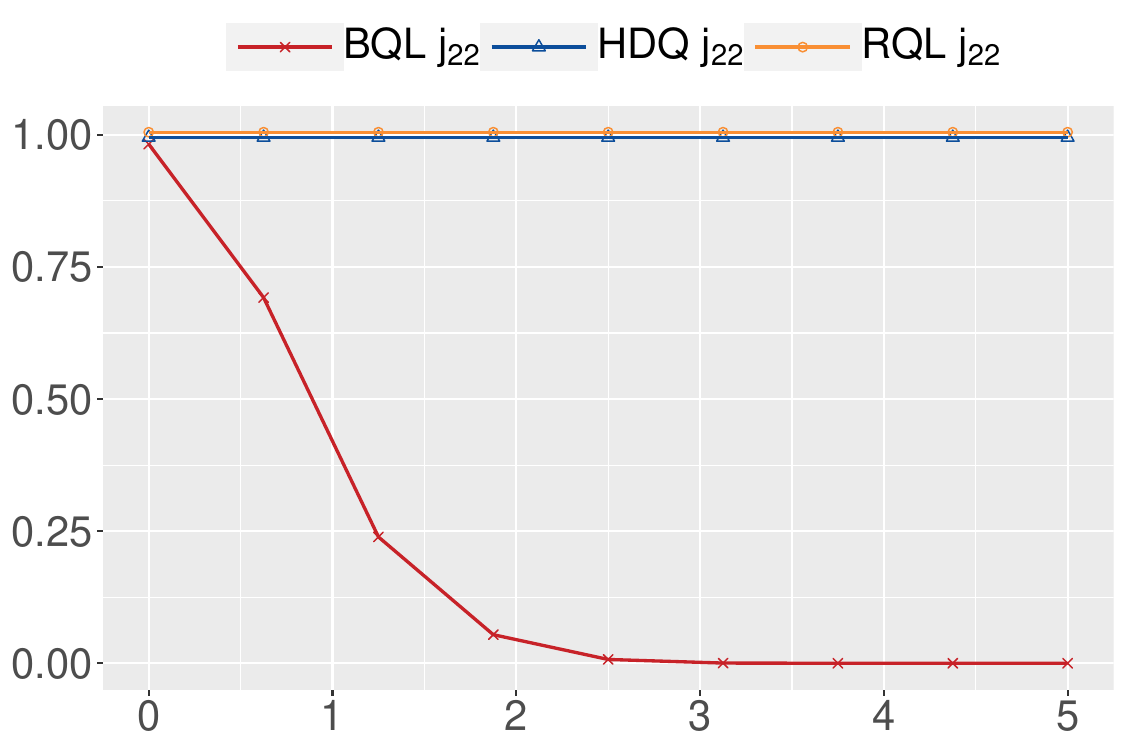}
		\captionsetup{skip=-5pt}
		\caption{Frequency as $\lambda$ varies}
		\label{fig:M1_freq}
	\end{subfigure}
	\hspace{0.03\textwidth}
	\begin{subfigure}[b]{0.4\textwidth}
		\centering
		\includegraphics[width=\textwidth]{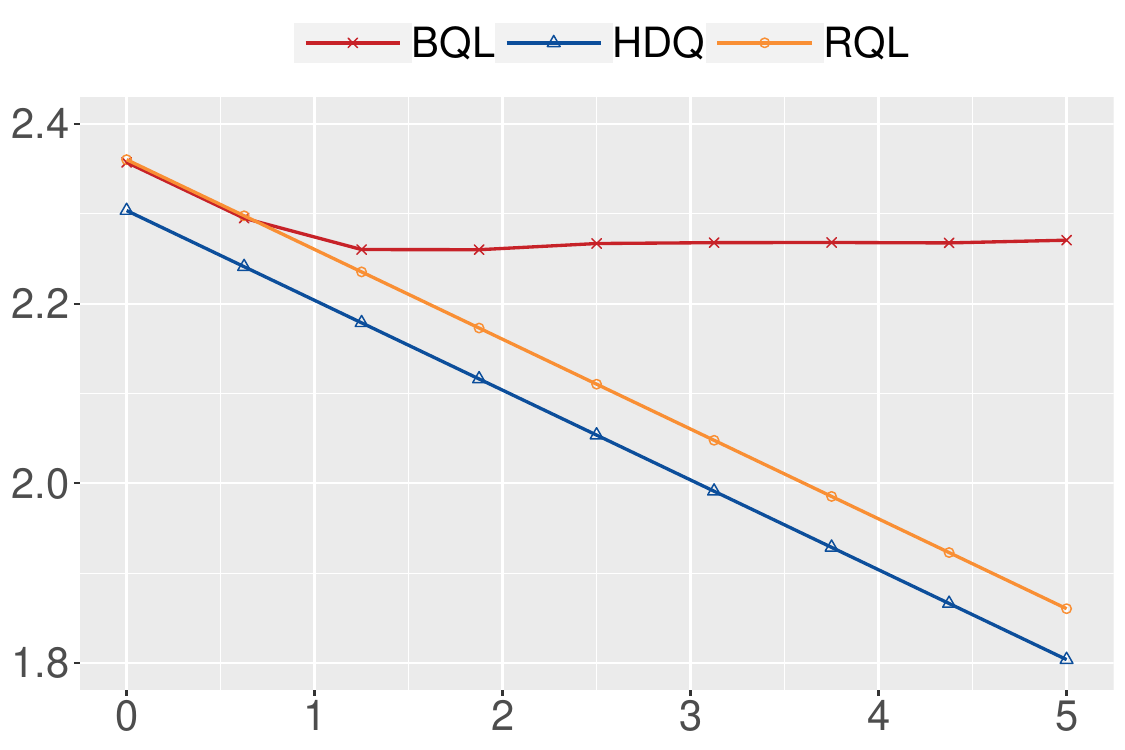}
		\captionsetup{skip=-5pt}
		\caption{Profit as $\lambda$ varies}
		\label{fig:M1_profit}
	\end{subfigure}
	\captionsetup{skip=-5pt}
	\caption{Simulation results under Model 1}
	\label{fig:Model 1}
	\end{figure}

	Model 3: Let $\balpha_{1}=(0.6,0.6^2,0.6^3)^\top$, $\balpha_{2}=(0.6,\cdots,0.6^4,0,0.6^6,0.6^7)^\top$, $\bbeta_{1}=(1.5,1,\mathbf{0}_3,1,2)^\top$, $\bbeta_{2}=\mathbf{0}_{11}^\top$, $\bbeta_{3}=(0.5,\mathbf{0}_4,1,2)^\top$, $n=500$, and $p=3$. We set $l_1=[p]$, $\mathcal{J}_1=\{\emptyset\}$, $l_2=\emptyset$, and $\mathcal{J}_2=\{j_{21},\cdots,j_{28}\}$, where $j_{21}=\emptyset$, $j_{22}=\{1\}$, $j_{23}=\{2\}$, $j_{24}=\{3\}$, $j_{25}=\{1,2\}$, $j_{26}=\{1,3\}$, $j_{27}=\{2,3\}$, $j_{28}=\{1,2,3\}$. Consider $C_2^c(j_{2i})=0.1|j_{2i}|$ for $i\in[8]$, with a varying $\lambda$. In this model, the features $\{S_{22},S_{23}\}$ contribute to the heterogeneity in potential outcomes.

	The simulations are repeated 200 times, and the results are summarized in Figures \ref{fig:Model 1}-\ref{fig:Model 3}. We report the selection frequencies of the candidate covariate sets by the DTRs learned from the BQL, HDQ, and RQL methods, along with their corresponding average profits evaluated using 5,000 independent test samples. Since neither the HDQ nor the RQL methods incorporate information on covariate assessment costs, their selection frequencies remain unchanged as $\lambda$ increases. Consequently, their overall profits exhibit a linear decline with increasing $\lambda$.

	Under Model 1, covariates in both candidate sets $j_{21}$ and $j_{22}$ are informative for the outcome, with $j_{22}$ including an additional feature that enhances predictive power but comes with a higher assessment cost. As shown in Figure \ref{fig:Model 1}, when $\lambda$ is close to zero, meaning the costs are given little weight, the BQL method tends to select the larger and more costly set $j_{22}$. Under such scenarios, BQL and RQL yield similar profit levels and both outperform HDQ. As $\lambda$ increases, the proposed BQL method progressively shifts towards selecting the smaller set $j_{21}$, which incurs lower assessment costs. This adaptive behavior leads to an improved trade-off between utility and cost, resulting in superior overall profit performance.

		\begin{figure}[ht]
		\centering
		\begin{subfigure}[b]{0.4\textwidth}
			\centering
			\includegraphics[width=\textwidth]{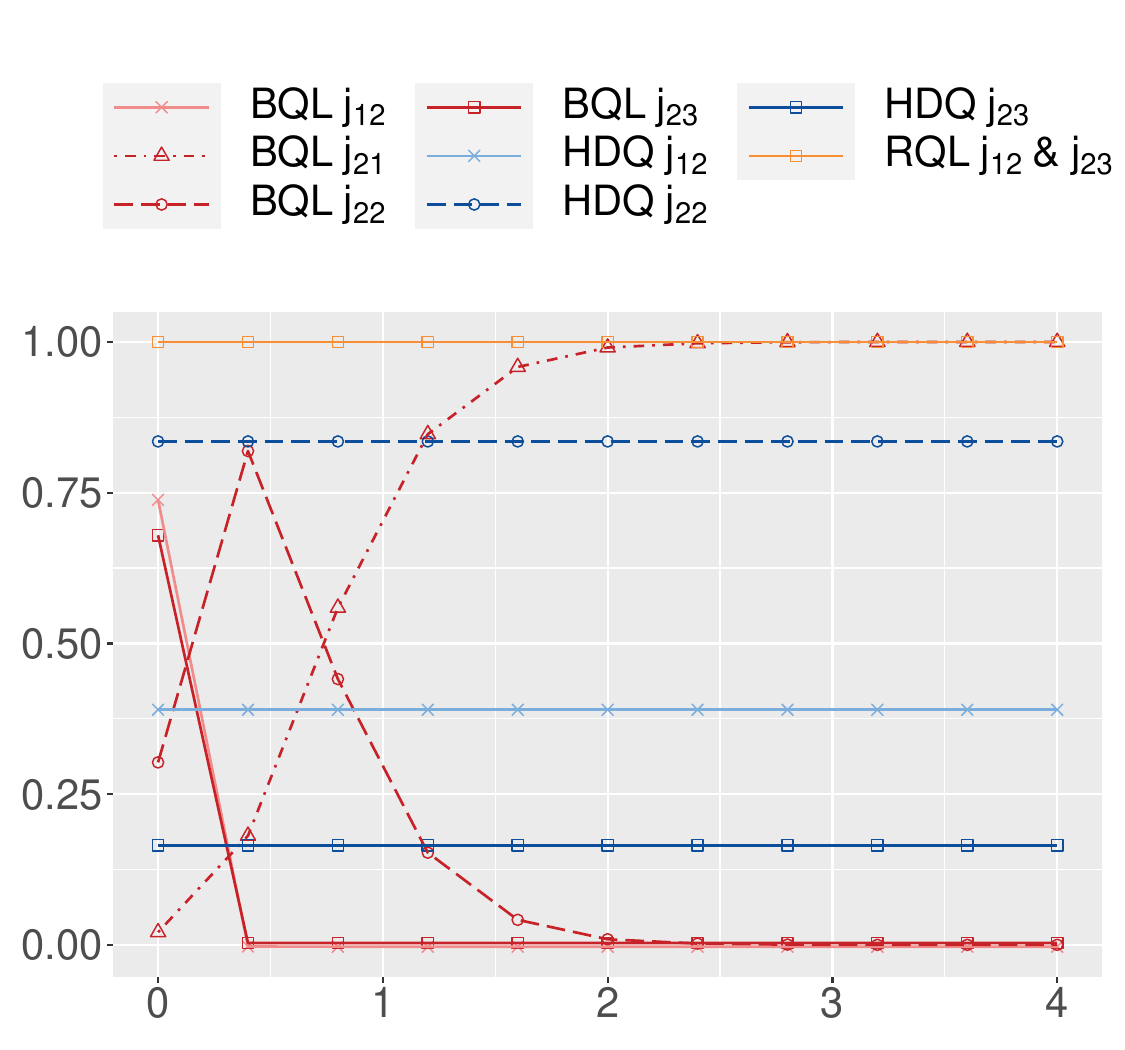}
			\captionsetup{skip=-5pt}
			\caption{Frequency as $\lambda$ varies}
			\label{fig:M3_freq}
		\end{subfigure}
		\hspace{0.03\textwidth}
		\begin{subfigure}[b]{0.4\textwidth}
			\centering
			\includegraphics[width=\textwidth]{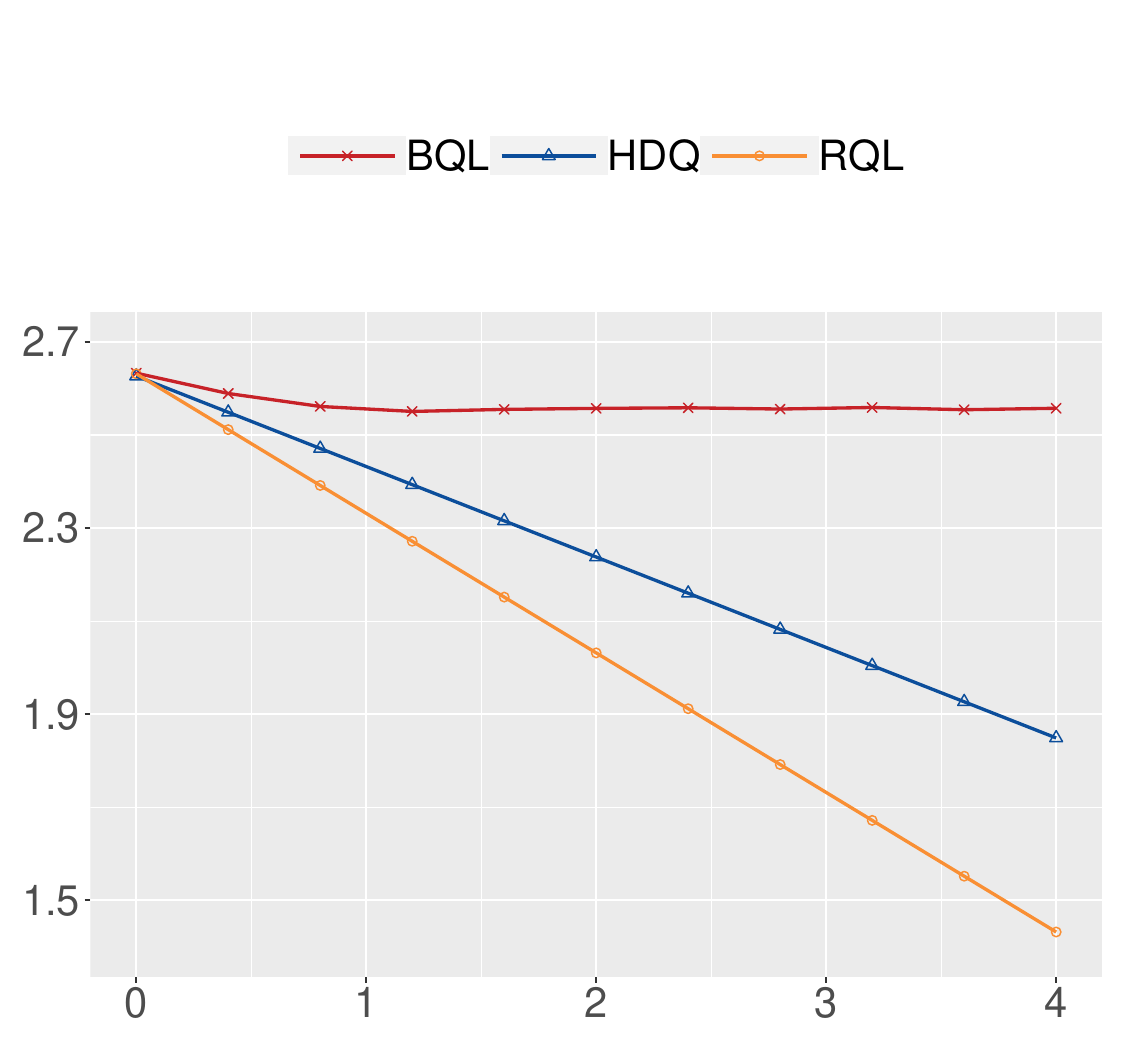}
			\captionsetup{skip=-5pt}
			\caption{Profit as $\lambda$ varies}
			\label{fig:M3_profit}
		\end{subfigure}
		\captionsetup{skip=-5pt}
		\caption{Simulation results under Model 2}
		\label{fig:Model 2}
		\end{figure}

	Under Model 2, although the candidate sets \(j_{12}\) and \(j_{23}\) include additional features, $S_{15}$ and $S_{25}$, compared to \(j_{11}\) and \(j_{22}\), respectively, incorporating these features does not improve treatment decision accuracy and is associated with higher costs. Consequently, the proposed BQL method avoids selecting the more expensive yet equally informative candidate sets \(j_{12}\) and \(j_{23}\) whenever $\lambda > 0$, where the costs are taken into account, as shown in Figure \ref{fig:Model 2}. By achieving a better balance between utility and cost, BQL yields higher profits than both the RQL and HDQ methods for all values of $\lambda > 0$.

	For Model 3, we consider all possible combinations for the second-stage covariate assessment. Since \(S_{21}\) does not contribute to the potential outcome, while \(S_{22}\) and \(S_{23}\) do, the BQL method tends to select \(j_{27} = \{2,3\}\) for small but strictly positive values of $\lambda$. As $\lambda$ increases, BQL favors \(j_{24} = \{3\}\), because (a) assessing a single feature incurs a lower cost compared to the 2-feature set \(j_{27}\), and (b) \(S_{23}\) is more important than \(S_{22}\), given that it is associated with a larger linear coefficient in the outcome model. When $\lambda$ becomes even larger, collecting any feature results in substantial costs. Consequently, BQL eventually shifts to \(j_{21} = \emptyset\), avoiding covariate assessments entirely. In comparison, RQL focuses on the full set \(j_{28} = \{1,2,3\}\), as it does not involve any feature selection steps. On the other hand, HDQ predominantly selects the set \(j_{27} = \{2,3\}\), regardless of the value of $\lambda$. As shown in Figure \ref{fig:Model 3}, BQL provides comparable profits to HDQ and RQL when $\lambda$ is small, and outperforms them as $\lambda$ increases.
	
		\begin{figure}[ht]
		\centering
		\begin{subfigure}[b]{0.4\textwidth}
			\centering
			\includegraphics[width=\textwidth]{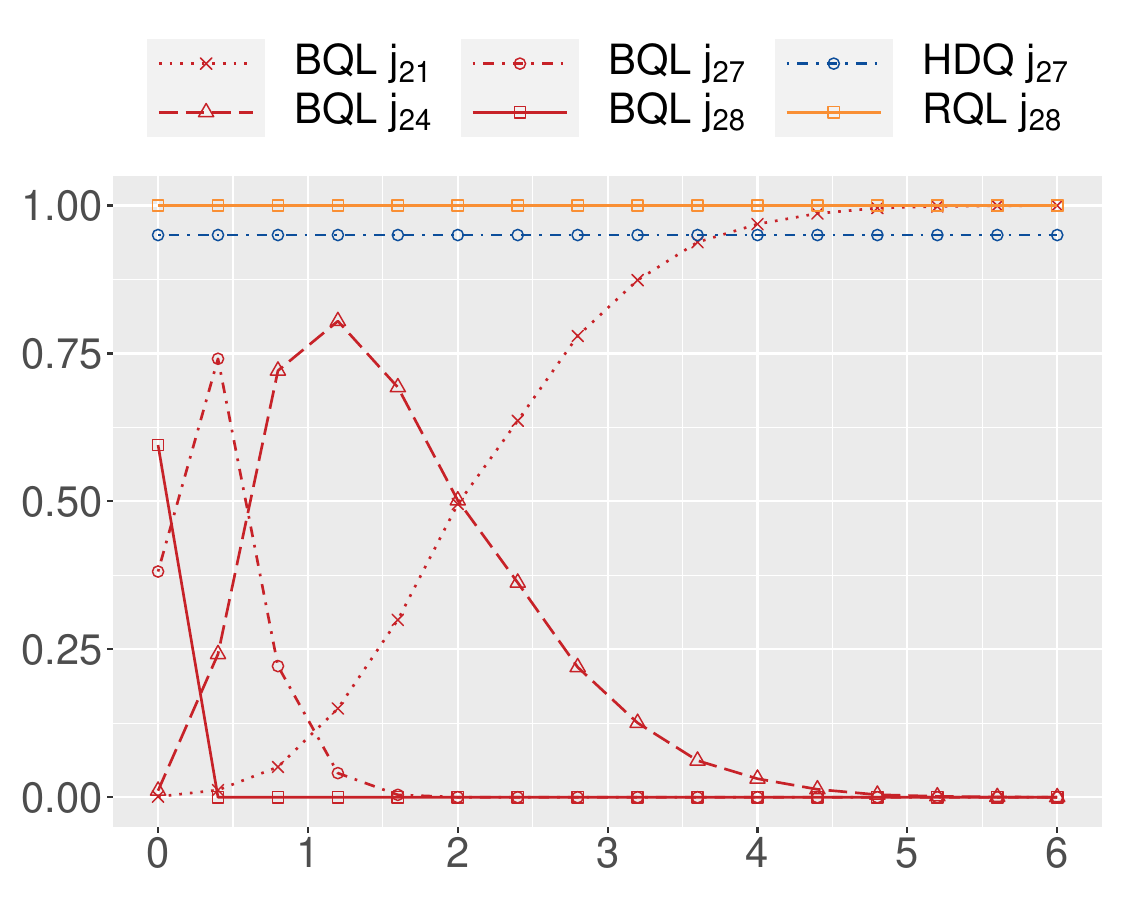}
			\captionsetup{skip=-5pt}
			\caption{Frequency as $\lambda$ varies}
			\label{fig:M4_freq}
		\end{subfigure}
		\hspace{0.03\textwidth}
		\begin{subfigure}[b]{0.4\textwidth}
			\centering
			\includegraphics[width=\textwidth]{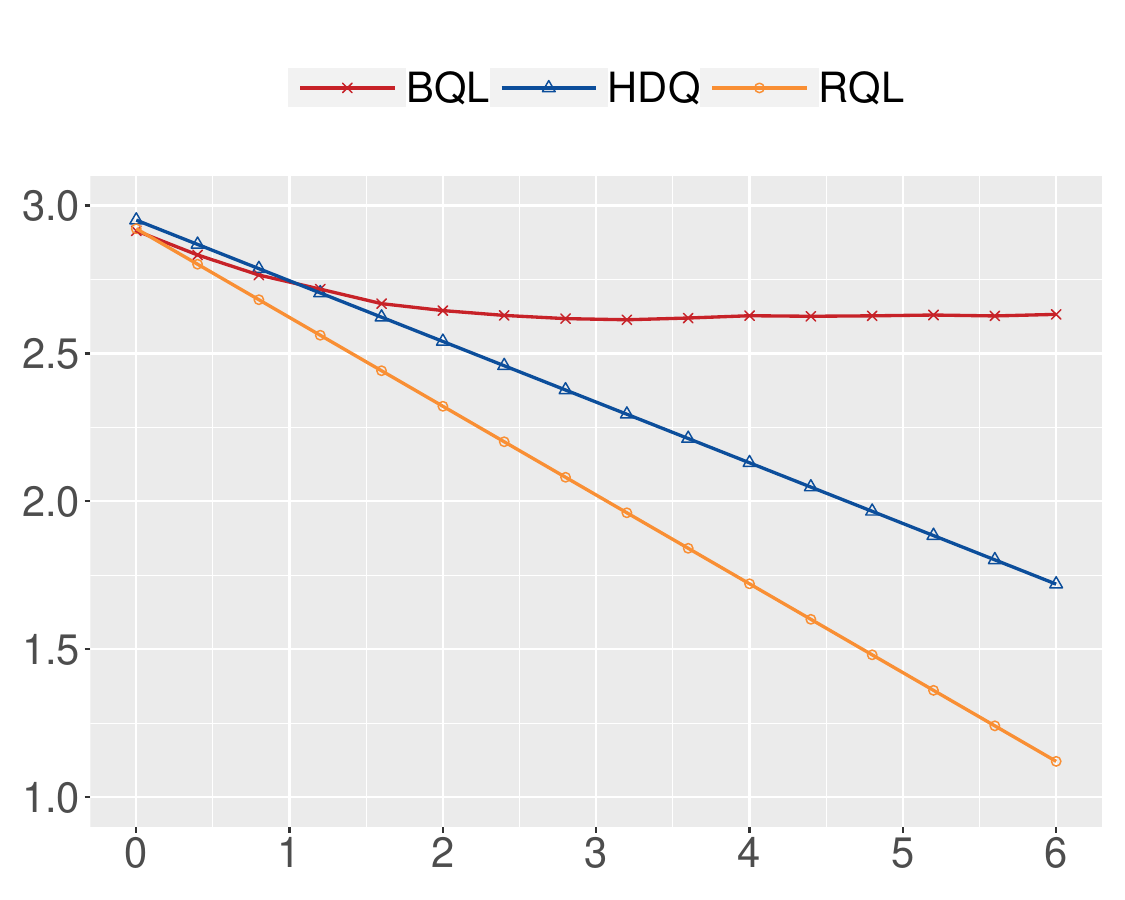}
			\captionsetup{skip=-5pt}
			\caption{Profit as $\lambda$ varies}
			\label{fig:M4_profit}
		\end{subfigure}
		\captionsetup{skip=-5pt}
		\caption{Simulation results under Model 3}
		\vspace{1em}
		\label{fig:Model 3}
	\end{figure}

	\subsection{Application to the MIMIC-III Database}\label{sec:real}

	We demonstrate the performance of the proposed method through an application to a dataset from the Medical Information Mart for Intensive Care version III (MIMIC-III) Database \citep{johnson2016mimic}, a large, publicly available critical care resource containing de-identified health records from over 40,000 ICU patients at Beth Israel Deaconess Medical Center between 2001 and 2012. Following the data processing pipeline established by \cite{komorowski2018artificial}, we identify a sepsis patient cohort consisting of 18,911 subjects, with clinical records encoded as discrete time-series at 4-hour intervals for temporal analysis. The data are then consolidated into two 24-hour stages using average values, resulting in a final cohort of 17,914 patients who have records observed at least once during both day one and day two.

		\begin{table}[t]
		\vspace{0.5em}
		\renewcommand{\arraystretch}{0.5} 
		\centering
		\begin{tabular}{ccc}
			\toprule
			Candidate set & Combinations of tests & Cost \\
			\midrule
			$j_{t1}$ & No test & $0$ \\
			$j_{t2}$ & CBC only & $24 \times 6$ \\
			$j_{t3}$ & BT, CBC, and CF & $113 \times 6$ \\
			$j_{t4}$ & ABG only & $255 \times 6$ \\
			$j_{t5}$ & CBC and ABG & $279 \times 6$ \\
			$j_{t6}$ & All tests & $368 \times 6$ \\
			\bottomrule
		\end{tabular}
		\caption{Candidate combinations for conducting tests and their associated costs in USD}
		\label{tab:test_costs}
	\end{table}

	We use the 90-day survival status of patients as the outcome variable, $Y\in\{0,1\}$. The treatment variables $A_1, A_2 \in \{0, 1\}$ are defined as the average intravenous fluid volume administered within each 24-hour stage, discretized based on the median. The baseline treatment costs are set to $C^{1t}(0) = C^{2t}(0) = 0$. The corresponding increased costs for choosing $A_1 = 1$ and $A_2 = 1$ are denoted by $C^{1t}(1)$ and $C^{2t}(1)$, respectively. Using an intravenous fluid price of \$30 per 100 mL, the resulting cost values are $C^{1t}(1) = 973.95$ and $C^{2t}(1) = 701.24$, calculated based on the average differences in intravenous fluid volume between the treatment groups.

	There are 20 and 17 baseline covariates at the first and second stages, respectively, for which we ignore the corresponding assessment costs. At each stage, there are 24 laboratory variables grouped into four categories: (1) Biochemical Tests (BT), with 11 features; (2) Complete Blood Count (CBC), with 3 features; (3) Coagulation Function (CF), with 3 features; and (4) Arterial Blood Gas (ABG) analysis, with 7 features. The prices of the four laboratory tests are \$67, \$24, \$22, and \$255\footnote{The test prices used in this study are referenced from the online healthcare marketplace \url{mdsave.com}.}, respectively. Additionally, there are two derived covariates, the SOFA and SIRS scores, which can only be calculated when the related tests are conducted. Specifically, collecting the SIRS score requires conducting the CBC and ABG tests, while collecting the SOFA score requires conducting the BT, CBC, and ABG tests. All covariates used in this study and their corresponding descriptions are listed in Section~\ref{sec:sm_rd} of the Supplementary Material. We assume that tests are conducted every 4 hours. For each stage $t = 1, 2$, we consider six combinations of tests: $\mathcal{J}_t = \{j_{t1}, j_{t2}, \dots, j_{t6}\}$, with the detailed composition and associated costs of the candidate sets listed in Table \ref{tab:test_costs}.

			\begin{table}[ht]
		\centering
		\caption{Estimated profit and utility under different $\tau$ values} 
		\label{table:profit}
		\renewcommand{\arraystretch}{0.5} 
		\resizebox{1\textwidth}{!}{
			\begin{tabular}{c c *{9}{c}} 
				\toprule
				\multicolumn{2}{c}{$\tau$} & \multicolumn{1}{c}{$\infty$} & \multicolumn{1}{c}{100000} & \multicolumn{1}{c}{10000} & \multicolumn{1}{c}{5000} & \multicolumn{1}{c}{2000} & \multicolumn{1}{c}{1000} & \multicolumn{1}{c}{500} & \multicolumn{1}{c}{200} & \multicolumn{1}{c}{100} \\
				\midrule
				\multirow{3}{*}{Profit} & BQL & 0.8487 & 0.8502 & 0.8448 & 0.8428 & 0.8401 & 0.8335 & 0.8253 & 0.8145 & 0.8151 \\
				& RQL & 0.8487 & 0.8483 & 0.8435 & 0.8381 & 0.8221 & 0.7963 & 0.7441 & 0.5955 & 0.3673 \\
				& HDQ & 0.8319 & 0.8314 & 0.8259 & 0.8205 & 0.8022 & 0.7744 & 0.7237 & 0.5825 & 0.3643 \\
				\midrule
				\multirow{3}{*}{Utility} & BQL & 0.8487 & 0.8507 & 0.8491 & 0.8488 & 0.8493 & 0.8475 & 0.8393 & 0.8295 & 0.8265 \\
				& RQL & 0.8487 & 0.8489 & 0.8488 & 0.8485 & 0.8481 & 0.8479 & 0.8457 & 0.8392 & 0.8314 \\
				& HDQ & 0.8319 & 0.8319 & 0.8311 & 0.8309 & 0.8278 & 0.8244 & 0.8198 & 0.8125 & 0.8166 \\
				\bottomrule
		\end{tabular}}
		\vspace{1em}
	\end{table}

	Our goal is to maximize the adjusted profit \eqref{def:profit-lambda} using a different parametrization by setting $\lambda = 1\%/\tau$, where $\tau$ represents the cost a patient is willing to pay for a one-percentage-point increase in the survival rate. We use 14,000 patient records as the training set to estimate treatment regimes using the three algorithms: BQL, RQL, and HDQ, as outlined in Section \ref{sec:sim}. The remaining data is used as the test set. Tables \ref{table:profit}-\ref{table:cov_set} summarize the test-set performance averaged over 100 random splits. These tables report the selection frequency of each candidate combination for the laboratory tests, the frequency of treatment assignments, the estimated utility, and the estimated profit, where the utility is calculated using the inverse probability weighting method described in \cite{zhao2015new}.

	\begin{table}[b]
		\centering
		\caption{Selection frequency of treatments under different $\tau$ values} 
		\label{table:trt}
		\renewcommand{\arraystretch}{0.5} 
		\resizebox{1\textwidth}{!}{
			\begin{tabular}{c c *{9}{c}} 
				\toprule
				\multicolumn{2}{c}{$\tau$} & \multicolumn{1}{c}{$\infty$} & \multicolumn{1}{c}{100000} & \multicolumn{1}{c}{10000} & \multicolumn{1}{c}{5000} & \multicolumn{1}{c}{2000} & \multicolumn{1}{c}{1000} & \multicolumn{1}{c}{500} & \multicolumn{1}{c}{200} & \multicolumn{1}{c}{100} \\
				\midrule
				\multirow{3}{*}{$A_1=1$} & BQL & 0.5281 & 0.5299 & 0.5217 & 0.5102 & 0.4833 & 0.4489 & 0.3255 & 0.1383 & 0.0295 \\
				& RQL & 0.5259 & 0.5254 & 0.5201 & 0.5145 & 0.4972 & 0.4689 & 0.4112 & 0.2537 & 0.0930 \\
				& HDQ & 0.6158 & 0.6137 & 0.5936 & 0.5698 & 0.4943 & 0.3652 & 0.1780 & 0.0112 & 0.0005 \\
				\midrule
				\multirow{3}{*}{$A_2=1$} & BQL & 0.4419 & 0.4408 & 0.4351 & 0.4310 & 0.4136 & 0.3866 & 0.3337 & 0.2295 & 0.1215 \\
				& RQL & 0.4422 & 0.4419 & 0.4392 & 0.4361 & 0.4269 & 0.4118 & 0.3825 & 0.3008 & 0.1927 \\
				& HDQ & 0.3490 & 0.3489 & 0.3482 & 0.3476 & 0.3456 & 0.3430 & 0.3307 & 0.2683 & 0.1730 \\
				\bottomrule
		\end{tabular}}
		\vspace{1em}
	\end{table}

		\begin{table}[ht]
		\centering
		\caption{Selection frequency of test combinations under different $\tau$ values}
		\label{table:cov_set}
		\renewcommand{\arraystretch}{0.8} 
		\resizebox{1\textwidth}{!}{
			\begin{tabular}{cccccccccccccc}
				\toprule
				 & $\tau$ & \textbf{$j_{11}$} & \textbf{$j_{12}$} & \textbf{$j_{13}$} & \textbf{$j_{14}$} & \textbf{$j_{15}$} & \textbf{$j_{16}$} & \textbf{$j_{21}$} & \textbf{$j_{22}$} & \textbf{$j_{23}$} & \textbf{$j_{24}$} & \textbf{$j_{25}$} & \textbf{$j_{26}$} \\
				\midrule
				\multirow{9}{*}{BQL} 
				& $\infty$ & 0.0004 & 0.0032 & 0.0244 & 0.0036 & 0.0033 & 0.9650 & 0.0046 & 0.0053 & 0.0172 & 0.0090 & 0.0081 & 0.9558 \\ 
				& 100000 & 0.0007 & 0.0007 & 0.0336 & 0.0009 & 0.0040 & 0.9601 & 0.0066 & 0.0065 & 0.0258 & 0.0100 & 0.0071 & 0.9440\\ 
				& 10000 & 0.0026 & 0.0027 & 0.1012 & 0.0027 & 0.0038 & 0.8870 & 0.0277 & 0.0284 & 0.3832 & 0.0088 & 0.0052 & 0.5486\\ 
				& 5000 & 0.0019 & 0.0029 & 0.4723 & 0.0018 & 0.0020 & 0.5190 & 0.0644 & 0.0640 & 0.7757 & 0.0036 & 0.0018 & 0.0904 \\ 
				& 2000 & 0.0061 & 0.0099 & 0.9478 & 0.0007 & 0.0017 & 0.0337 & 0.3269 & 0.1933 & 0.4796 & 0 & 0 & 0.0001\\ 
				& 1000 & 0.0450 & 0.0626 & 0.8801 & 0.0021 & 0.0003 & 0.0098 & 0.7715 & 0.1807 & 0.0478 & 0 & 0 & 0 \\ 
				& 500 & 0.5062 & 0.3535 & 0.1402 & 0 & 0 & 0 & 0.9683 & 0.0314 & 0.0003 & 0 & 0 & 0\\ 
				& 200 & 0.9751 & 0.0235 & 0.0014 & 0 & 0 & 0 & 0.9996 & 0.0004 & 0 & 0 & 0 & 0 \\ 
				& 100 & 0.9975 & 0.0025 & 0 & 0 & 0 & 0 & 1 & 0 & 0 & 0 & 0 & 0 \\ 
				\midrule
				\multicolumn{2}{c}{RQL} & 0 & 0 & 0 & 0 & 0 & 1 & 0 & 0 & 0 & 0 & 0 & 1 \\ 
				\midrule
				\multicolumn{2}{c}{HDQ} & 0 & 0 & 0 & 0 & 0 & 1 & 0 & 0 & 0.01 & 0 & 0 & 0.99 \\ 
				\bottomrule
		\end{tabular}}
		\vspace{1em}
	\end{table}

	As shown in Table \ref{table:profit}, when $\tau = \infty$, meaning that costs are not considered, the treatment regimes derived by the BQL and RQL methods yield similar utilities, both outperforming HDQ due to the more accurate non-parametric nuisance estimates. For all three considered methods, as $\tau$ decreases and cost becomes more heavily weighted, the selection frequencies of more expensive treatment assignments decline, resulting in reductions in both utility and profit, as shown in Tables \ref{table:profit}-\ref{table:trt}. Among them, the BQL method demonstrates superior performance by selecting test combinations with greater cost-effectiveness, enabling a more effective utility-cost trade-off and leading to higher overall profits, particularly when $\tau$ is relatively small. As observed in Table \ref{table:cov_set}, both the RQL and HDQ methods tend to select the full sets $j_{16}$ and $j_{26}$. This is because RQL does not involve any feature selection procedure, and although HDQ introduces regularizations, all the considered tests contain at least one covariate that is informative for the outcome. In contrast, as $\tau$ decreases, the frequencies with which the proposed BQL method selects $j_{16}$ and $j_{26}$ diminish. For intermediate values of $\tau$, BQL favors the combinations $j_{t2}$ and $j_{t3}$ for $t = 1, 2$. When $\tau$ is extremely small and costs are heavily emphasized, the BQL method concentrates on the cheapest option, $j_{t1} = \emptyset$. Notably, the combinations $j_{t4}$ and $j_{t5}$, both of which include the Arterial Blood Gas (ABG) analysis, are rarely selected across all $\tau$ values. This suggests that the ABG analysis may be less cost-effective compared to other tests, and it might be more appropriate to conduct ABG analysis only when all tests are administered.

	\section{Discussion}\label{sec:dis}
	
	This paper addresses the problem of estimating dynamic treatment regimes under cost considerations, where both treatment assignments and, more importantly, covariate assessments may incur costs. We propose a novel framework, Balanced Q-learning, to estimate the optimal dynamic treatment regime by balancing utility with the associated costs.

	Future work can proceed in several directions. A key extension involves high-dimensional settings, where estimating contrast functions between Q-functions for all covariate assessment combinations becomes computationally prohibitive. Additionally, addressing settings where observed data suffers from missingness -- an issue common in dynamic studies -- warrants further exploration. Furthermore, it would be valuable to extend the framework to infinite horizon settings under Markov assumptions and to other reinforcement learning problems.

	\section*{Supplementary Material}
	
	\textbf{Supplement to ``Balancing utility and cost in dynamic treatment regimes''.} Section \ref{sec:sm_sim} presents additional simulation results. Section \ref{sec:sm_rd} offers supplementary details on the real-data application, while Section \ref{sec:sm_target} introduces the specifics of the working models for the Q-functions and target parameters. Section \ref{sec:sm_proof} contains the proof of the main results.

	\bibliographystyle{plainnat}
	\bibliography{ref}

\begin{thebibliography}{40}
\providecommand{\natexlab}[1]{#1}
\providecommand{\url}[1]{\texttt{#1}}
\expandafter\ifx\csname urlstyle\endcsname\relax
  \providecommand{\doi}[1]{doi: #1}\else
  \providecommand{\doi}{doi: \begingroup \urlstyle{rm}\Url}\fi

\bibitem[Athey and Wager(2021)]{athey2021policy}
Susan Athey and Stefan Wager.
\newblock Policy learning with observational data.
\newblock \emph{Econometrica}, 89\penalty0 (1):\penalty0 133--161, 2021.

\bibitem[Bakker et~al.(2019)Bakker, de~Lange, Pijnappel, Mann, Peeters,
  Monninkhof, Emaus, Loo, Bisschops, Lobbes, et~al.]{bakker2019supplemental}
Marije~F Bakker, St{\'e}phanie~V de~Lange, Ruud~M Pijnappel, Ritse~M Mann,
  Petra~HM Peeters, Evelyn~M Monninkhof, Marleen~J Emaus, Claudette~E Loo,
  Robertus~HC Bisschops, Marc~BI Lobbes, et~al.
\newblock Supplemental {MRI} screening for women with extremely dense breast
  tissue.
\newblock \emph{New England Journal of Medicine}, 381\penalty0 (22):\penalty0
  2091--2102, 2019.

\bibitem[Bastani and Bayati(2020)]{bastani2020online}
Hamsa Bastani and Mohsen Bayati.
\newblock Online decision making with high-dimensional covariates.
\newblock \emph{Operations Research}, 68\penalty0 (1):\penalty0 276--294, 2020.

\bibitem[Caniglia et~al.(2016)Caniglia, Sabin, Robins, Logan, Cain, Abgrall,
  Mugavero, Hernandez-Diaz, Meyer, Seng, et~al.]{caniglia2016monitor}
Ellen~C Caniglia, Caroline Sabin, James~M Robins, Roger Logan, Lauren~E Cain,
  Sophie Abgrall, Michael~J Mugavero, Sonia Hernandez-Diaz, Laurence Meyer,
  Remonie Seng, et~al.
\newblock When to monitor {CD}4 cell count and {HIV} {RNA} to reduce mortality
  and {AIDS}-defining illness in virologically suppressed {HIV}-positive
  persons on antiretroviral therapy in high-income countries: a prospective
  observational study.
\newblock \emph{JAIDS Journal of Acquired Immune Deficiency Syndromes},
  72\penalty0 (2):\penalty0 214--221, 2016.

\bibitem[Chernozhukov et~al.(2018)Chernozhukov, Chetverikov, Demirer, Duflo,
  Hansen, Newey, and Robins]{chernozhukov2018double}
Victor Chernozhukov, Denis Chetverikov, Mert Demirer, Esther Duflo, Christian
  Hansen, Whitney Newey, and James Robins.
\newblock Double/debiased machine learning for treatment and structural
  parameters.
\newblock \emph{The Econometrics Journal}, 21\penalty0 (1):\penalty0 C1--C68,
  2018.

\bibitem[Clifton and Laber(2020)]{clifton2020q}
Jesse Clifton and Eric Laber.
\newblock Q-learning: Theory and applications.
\newblock \emph{Annual Review of Statistics and Its Application}, 7\penalty0
  (1):\penalty0 279--301, 2020.

\bibitem[Ertefaie et~al.(2021)Ertefaie, McKay, Oslin, and
  Strawderman]{ertefaie2021robust}
Ashkan Ertefaie, James~R McKay, David Oslin, and Robert~L Strawderman.
\newblock Robust {Q}-learning.
\newblock \emph{Journal of the American Statistical Association}, 116\penalty0
  (533):\penalty0 368--381, 2021.

\bibitem[Foster and Syrgkanis(2023)]{foster2023orthogonal}
Dylan~J Foster and Vasilis Syrgkanis.
\newblock Orthogonal statistical learning.
\newblock \emph{The Annals of Statistics}, 51\penalty0 (3):\penalty0 879--908,
  2023.

\bibitem[G{\"o}rgec et~al.(2024)G{\"o}rgec, Hansen, Kemmerich, Syversveen,
  Hilal, Belt, Bosscha, Burgmans, Cappendijk, D'Hondt, et~al.]{gorgec2024mri}
Burak G{\"o}rgec, Ingrid~S Hansen, Gunter Kemmerich, Trygve Syversveen,
  Mohammed~Abu Hilal, Eric~JT Belt, Koop Bosscha, Mark~C Burgmans, Vincent~C
  Cappendijk, Mathieu D'Hondt, et~al.
\newblock {MRI} in addition to {CT} in patients scheduled for local therapy of
  colorectal liver metastases ({CAMINO}): an international, multicentre,
  prospective, diagnostic accuracy trial.
\newblock \emph{The Lancet Oncology}, 25\penalty0 (1):\penalty0 137--146, 2024.

\bibitem[Hu and Kallus(2020)]{hu2020dtr}
Yichun Hu and Nathan Kallus.
\newblock {DTR} bandit: Learning to make response-adaptive decisions with low
  regret.
\newblock \emph{arXiv preprint arXiv:2005.02791}, 2020.

\bibitem[Hu et~al.(2025)Hu, Kallus, and Uehara]{hu2025fast}
Yichun Hu, Nathan Kallus, and Masatoshi Uehara.
\newblock Fast rates for the regret of offline reinforcement learning.
\newblock \emph{Mathematics of Operations Research}, 50\penalty0 (1):\penalty0
  633--655, 2025.

\bibitem[Huang and Xu(2020)]{huang2020estimating}
Xinyang Huang and Jin Xu.
\newblock Estimating individualized treatment rules with risk constraint.
\newblock \emph{Biometrics}, 76\penalty0 (4):\penalty0 1310--1318, 2020.

\bibitem[Jamal-Hanjani et~al.(2017)Jamal-Hanjani, Wilson, McGranahan, Birkbak,
  Watkins, Veeriah, Shafi, Johnson, Mitter, Rosenthal,
  et~al.]{jamal2017tracking}
Mariam Jamal-Hanjani, Gareth~A Wilson, Nicholas McGranahan, Nicolai~J Birkbak,
  Thomas~BK Watkins, Selvaraju Veeriah, Seema Shafi, Diana~H Johnson, Richard
  Mitter, Rachel Rosenthal, et~al.
\newblock Tracking the evolution of non--small-cell lung cancer.
\newblock \emph{New England Journal of Medicine}, 376\penalty0 (22):\penalty0
  2109--2121, 2017.

\bibitem[Johnson et~al.(2016)Johnson, Pollard, Shen, Lehman, Feng, Ghassemi,
  Moody, Szolovits, Anthony~Celi, and Mark]{johnson2016mimic}
Alistair~EW Johnson, Tom~J Pollard, Lu~Shen, Li-wei~H Lehman, Mengling Feng,
  Mohammad Ghassemi, Benjamin Moody, Peter Szolovits, Leo Anthony~Celi, and
  Roger~G Mark.
\newblock {MIMIC-III}, a freely accessible critical care database.
\newblock \emph{Scientific Data}, 3\penalty0 (1):\penalty0 1--9, 2016.

\bibitem[Kennedy(2023)]{kennedy2023towards}
Edward~H Kennedy.
\newblock Towards optimal doubly robust estimation of heterogeneous causal
  effects.
\newblock \emph{Electronic Journal of Statistics}, 17\penalty0 (2):\penalty0
  3008--3049, 2023.

\bibitem[Kitagawa and Tetenov(2018)]{kitagawa2018should}
Toru Kitagawa and Aleksey Tetenov.
\newblock Who should be treated? {E}mpirical welfare maximization methods for
  treatment choice.
\newblock \emph{Econometrica}, 86\penalty0 (2):\penalty0 591--616, 2018.

\bibitem[Komorowski et~al.(2018)Komorowski, Celi, Badawi, Gordon, and
  Faisal]{komorowski2018artificial}
Matthieu Komorowski, Leo~A Celi, Omar Badawi, Anthony~C Gordon, and A~Aldo
  Faisal.
\newblock The artificial intelligence clinician learns optimal treatment
  strategies for sepsis in intensive care.
\newblock \emph{Nature Medicine}, 24\penalty0 (11):\penalty0 1716--1720, 2018.

\bibitem[Kosorok and Laber(2019)]{kosorok2019precision}
Michael~R Kosorok and Eric~B Laber.
\newblock Precision medicine.
\newblock \emph{Annual Review of Statistics and Its Application}, 6\penalty0
  (1):\penalty0 263--286, 2019.

\bibitem[Kreif et~al.(2021)Kreif, Sofrygin, Schmittdiel, Adams, Grant, Zhu,
  van~der Laan, and Neugebauer]{kreif2021exploiting}
No{\'e}mi Kreif, Oleg Sofrygin, Julie~A Schmittdiel, Alyce~S Adams, Richard~W
  Grant, Zheng Zhu, Mark~J van~der Laan, and Romain Neugebauer.
\newblock Exploiting nonsystematic covariate monitoring to broaden the scope of
  evidence about the causal effects of adaptive treatment strategies.
\newblock \emph{Biometrics}, 77\penalty0 (1):\penalty0 329--342, 2021.

\bibitem[K{\"u}nzel et~al.(2019)K{\"u}nzel, Sekhon, Bickel, and
  Yu]{kunzel2019metalearners}
S{\"o}ren~R K{\"u}nzel, Jasjeet~S Sekhon, Peter~J Bickel, and Bin Yu.
\newblock Metalearners for estimating heterogeneous treatment effects using
  machine learning.
\newblock \emph{Proceedings of the National Academy of Sciences}, 116\penalty0
  (10):\penalty0 4156--4165, 2019.

\bibitem[Laber et~al.(2014)Laber, Lizotte, and Ferguson]{laber2014set}
Eric~B Laber, Daniel~J Lizotte, and Bradley Ferguson.
\newblock Set-valued dynamic treatment regimes for competing outcomes.
\newblock \emph{Biometrics}, 70\penalty0 (1):\penalty0 53--61, 2014.

\bibitem[Liu et~al.(2024)Liu, Wang, Fu, and Zeng]{liu2024controlling}
Mochuan Liu, Yuanjia Wang, Haoda Fu, and Donglin Zeng.
\newblock Controlling cumulative adverse risk in learning optimal dynamic
  treatment regimens.
\newblock \emph{Journal of the American Statistical Association}, 119\penalty0
  (548):\penalty0 2622--2633, 2024.

\bibitem[Liu et~al.(2018)Liu, Wang, Kosorok, Zhao, and Zeng]{liu2018augmented}
Ying Liu, Yuanjia Wang, Michael~R Kosorok, Yingqi Zhao, and Donglin Zeng.
\newblock Augmented outcome-weighted learning for estimating optimal dynamic
  treatment regimens.
\newblock \emph{Statistics in Medicine}, 37\penalty0 (26):\penalty0 3776--3788,
  2018.

\bibitem[Lizotte et~al.(2012)Lizotte, Bowling, and Murphy]{lizotte2012linear}
Daniel~J Lizotte, Michael Bowling, and Susan~A Murphy.
\newblock Linear fitted-{Q} iteration with multiple reward functions.
\newblock \emph{Journal of Machine Learning Research}, 13\penalty0
  (1):\penalty0 3253--3295, 2012.

\bibitem[Luckett et~al.(2021)Luckett, Laber, Kim, and
  Kosorok]{luckett2021estimation}
Daniel~J Luckett, Eric~B Laber, Siyeon Kim, and Michael~R Kosorok.
\newblock Estimation and optimization of composite outcomes.
\newblock \emph{Journal of Machine Learning Research}, 22\penalty0
  (167):\penalty0 1--40, 2021.

\bibitem[Mammen and Tsybakov(1999)]{mammen1999smooth}
Enno Mammen and Alexandre~B Tsybakov.
\newblock Smooth discrimination analysis.
\newblock \emph{The Annals of Statistics}, 27\penalty0 (6):\penalty0
  1808--1829, 1999.

\bibitem[Murphy(2003)]{murphy2003optimal}
Susan~A Murphy.
\newblock Optimal dynamic treatment regimes.
\newblock \emph{Journal of the Royal Statistical Society Series B: Statistical
  Methodology}, 65\penalty0 (2):\penalty0 331--355, 2003.

\bibitem[Neugebauer et~al.(2017)Neugebauer, Schmittdiel, Adams, Grant, and
  van~der Laan]{neugebauer2017identification}
Romain Neugebauer, Julie~A Schmittdiel, Alyce~S Adams, Richard~W Grant, and
  Mark~J van~der Laan.
\newblock Identification of the joint effect of a dynamic treatment
  intervention and a stochastic monitoring intervention under the no direct
  effect assumption.
\newblock \emph{Journal of Causal Inference}, 5\penalty0 (1):\penalty0
  20160015, 2017.

\bibitem[Nie and Wager(2021)]{nie2021quasi}
Xinkun Nie and Stefan Wager.
\newblock Quasi-oracle estimation of heterogeneous treatment effects.
\newblock \emph{Biometrika}, 108\penalty0 (2):\penalty0 299--319, 2021.

\bibitem[Robinson(1988)]{robinson1988root}
Peter~M Robinson.
\newblock Root-n-consistent semiparametric regression.
\newblock \emph{Econometrica: Journal of the Econometric Society}, pages
  931--954, 1988.

\bibitem[Schulte et~al.(2015)Schulte, Tsiatis, Laber, and
  Davidian]{schulte2015q}
Phillip~J Schulte, Anastasios~A Tsiatis, Eric~B Laber, and Marie Davidian.
\newblock Q-and {A}-learning methods for estimating optimal dynamic treatment
  regimes.
\newblock \emph{Statistical Science}, 29\penalty0 (4):\penalty0 640, 2015.

\bibitem[Shi et~al.(2018)Shi, Fan, Song, and Lu]{shi2018high}
Chengchun Shi, Alin Fan, Rui Song, and Wenbin Lu.
\newblock High-dimensional {A}-learning for optimal dynamic treatment regimes.
\newblock \emph{The Annals of Statistics}, 46\penalty0 (3):\penalty0 925, 2018.

\bibitem[Van~der Laan et~al.(2007)Van~der Laan, Polley, and
  Hubbard]{van2007super}
Mark~J Van~der Laan, Eric~C Polley, and Alan~E Hubbard.
\newblock Super learner.
\newblock \emph{Statistical Applications in Genetics and Molecular Biology},
  6\penalty0 (1), 2007.

\bibitem[Wang et~al.(2018)Wang, Fu, and Zeng]{wang2018learning}
Yuanjia Wang, Haoda Fu, and Donglin Zeng.
\newblock Learning optimal personalized treatment rules in consideration of
  benefit and risk: with an application to treating type 2 diabetes patients
  with insulin therapies.
\newblock \emph{Journal of the American Statistical Association}, 113\penalty0
  (521):\penalty0 1--13, 2018.

\bibitem[Wright et~al.(2023)Wright, Campbell, Eberhardt, Aitken, Perrett,
  Brent, Danecek, Gardner, Chundru, Lindsay, et~al.]{wright2023genomic}
Caroline~F Wright, Patrick Campbell, Ruth~Y Eberhardt, Stuart Aitken, Daniel
  Perrett, Simon Brent, Petr Danecek, Eugene~J Gardner, V~Kartik Chundru,
  Sarah~J Lindsay, et~al.
\newblock Genomic diagnosis of rare pediatric disease in the united kingdom and
  ireland.
\newblock \emph{New England Journal of Medicine}, 388\penalty0 (17):\penalty0
  1559--1571, 2023.

\bibitem[Xu et~al.(2024)Xu, Fu, and Qu]{xu2024optimal}
Qi~Xu, Haoda Fu, and Annie Qu.
\newblock Optimal individualized treatment rule for combination treatments
  under budget constraints.
\newblock \emph{Journal of the Royal Statistical Society Series B: Statistical
  Methodology}, 86\penalty0 (3):\penalty0 714--741, 2024.

\bibitem[Yu et~al.(2021)Yu, Liu, Nemati, and Yin]{yu2021reinforcement}
Chao Yu, Jiming Liu, Shamim Nemati, and Guosheng Yin.
\newblock Reinforcement learning in healthcare: A survey.
\newblock \emph{ACM Computing Surveys (CSUR)}, 55\penalty0 (1):\penalty0 1--36,
  2021.

\bibitem[Zhang et~al.(2012)Zhang, Tsiatis, Laber, and
  Davidian]{zhang2012robust}
Baqun Zhang, Anastasios~A Tsiatis, Eric~B Laber, and Marie Davidian.
\newblock A robust method for estimating optimal treatment regimes.
\newblock \emph{Biometrics}, 68\penalty0 (4):\penalty0 1010--1018, 2012.

\bibitem[Zhao et~al.(2015)Zhao, Zeng, Laber, and Kosorok]{zhao2015new}
Ying-Qi Zhao, Donglin Zeng, Eric~B Laber, and Michael~R Kosorok.
\newblock New statistical learning methods for estimating optimal dynamic
  treatment regimes.
\newblock \emph{Journal of the American Statistical Association}, 110\penalty0
  (510):\penalty0 583--598, 2015.

\bibitem[Zhu et~al.(2019)Zhu, Zeng, and Song]{zhu2019proper}
Wensheng Zhu, Donglin Zeng, and Rui Song.
\newblock Proper inference for value function in high-dimensional {Q}-learning
  for dynamic treatment regimes.
\newblock \emph{Journal of the American Statistical Association}, 114\penalty0
  (527):\penalty0 1404--1417, 2019.

\end{thebibliography}
	
	\appendix
	
	\clearpage\newpage
	
	\renewcommand\thesection{S\arabic{section}}
	\renewcommand\thesubsection{\thesection.\arabic{subsection}}
	\renewcommand{\theequation}{\thesection.\arabic{equation}}
	\renewcommand{\thetheorem}{\thesection.\arabic{theorem}}
	\renewcommand{\thelemma}{\thesection.\arabic{lemma}}
	\setcounter{figure}{0}
	\setcounter{table}{0}
	\renewcommand{\thefigure}{S\arabic{figure}}
	\renewcommand{\thetable}{S\arabic{table}}

	\par\bigskip
	
	\begin{center}
		\textbf{\uppercase{Supplementary Material to ``Balancing utility and cost in dynamic treatment regimes''}}
	\end{center}

	This supplementary document includes additional results and justifications, as well as the proofs of the main results. Specifically, Section \ref{sec:sm_sim} presents additional simulation results. Section \ref{sec:sm_rd} offers supplementary details on the real-data application, while Section \ref{sec:sm_target} introduces the specifics of the working models for the Q-functions and target parameters. Section \ref{sec:sm_proof} contains the proof of the main results. All of the results and notation are numbered as in the main text unless stated otherwise.
	
	\section{Additional simulation results}\label{sec:sm_sim}
	We illustrate the performance of considered estimators in Section \ref{sec:sim} under additional models.
	
	Model 4: Fix $\lambda = 0$ and vary $n$, while keeping all other settings same as those in Model 1.
	
	Model 5: Let $\balpha_{1}=(0.5,0.5^2,\cdots,0.5^{5})^\top$, $\balpha_{2}=(0.5,0.5^2,\cdots,0.5^{11})^\top$, $\bbeta_{1}=(1.2,0.4\cdot\mathbf{1}_{4},0.2\cdot\mathbf{1}_{4},1.5,1)^\top$, $\bbeta_{2}=(0.5,0.2\cdot\mathbf{1}_4,\mathbf{0}_6)^\top$, $\bbeta_{3}=(0.2\cdot\mathbf{1}_9,1.5,1)^\top$, $n=500$, and $p=5$. We set $l_1=[p]$, $\mathcal{J}_1=\emptyset$, $l_2=\{1\}$, and $\mathcal{J}_2=\{j_{21},j_{22},j_{23}\}$, where $j_{21}=\{2,3\}$, $j_{22}=\{2,3,4\}$, $j_{23}=\{2,3,4,5\}$. Consider $C^{1t}(a) = C^{2t}(a) = 0$ for $a \in \{0,1\}$, $C^{2c}(j_{21})=0$, $C^{2c}(j_{22})=0.1$, and $C^{2c}(j_{23})=0.2$, with a varying $\lambda$. In this model, all the elements of $\bS_2$ contribute to the heterogeneity in potential outcomes between treatment groups.
	
	Model 6. The setup is the same as in Model 1, except that now we consider $C^{1c}(j_{11})=C^{1c}(j_{12})=C^{2c}(j_{21})=C^{2c}(j_{22})=0$, $C^{1t}(a)=0$ for $a\in\{0,1\}$, $C^{2t}(0)=7.5$, and $\lambda=1$, with a varying $C^{2t}(1)$.
	
	Model 7. The setup is the same as in Model 1, except that now we consider $C^{1c}(j_{11})=C^{1c}(j_{12})=C^{2c}(j_{21})=C^{2c}(j_{22})=0$, $C^{2t}(a)=0$ for $a\in\{0,1\}$, $C^{1t}(0)=7.5$, and $\lambda=1$, with a varying $C^{1t}(1)$.
	
	\begin{figure}[ht]
		\centering
		\begin{subfigure}[b]{0.4\textwidth}
		\centering
		\includegraphics[width=\textwidth]{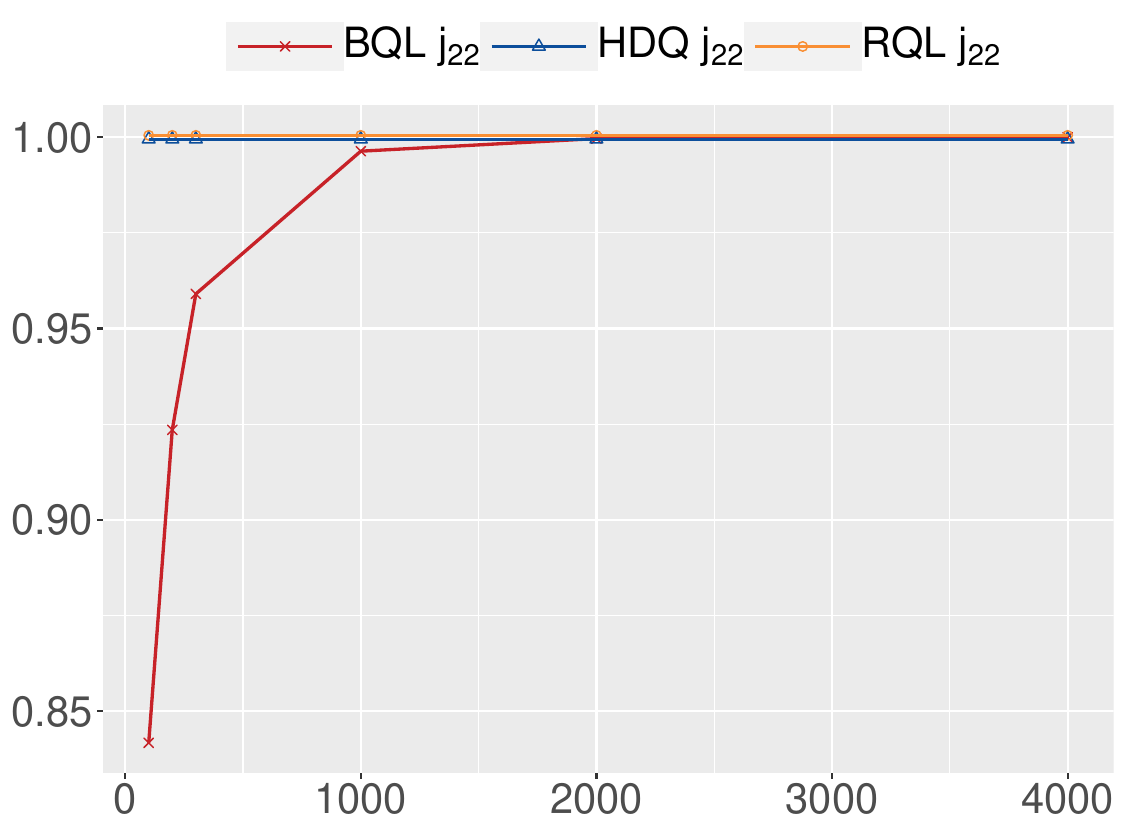}
		\captionsetup{skip=-5pt}
		\caption{Frequency as $n$ varies}
		\label{fig:M2_freq}
	\end{subfigure}
	\hspace{0.03\textwidth}
	\begin{subfigure}[b]{0.4\textwidth}
		\centering
		\includegraphics[width=\textwidth]{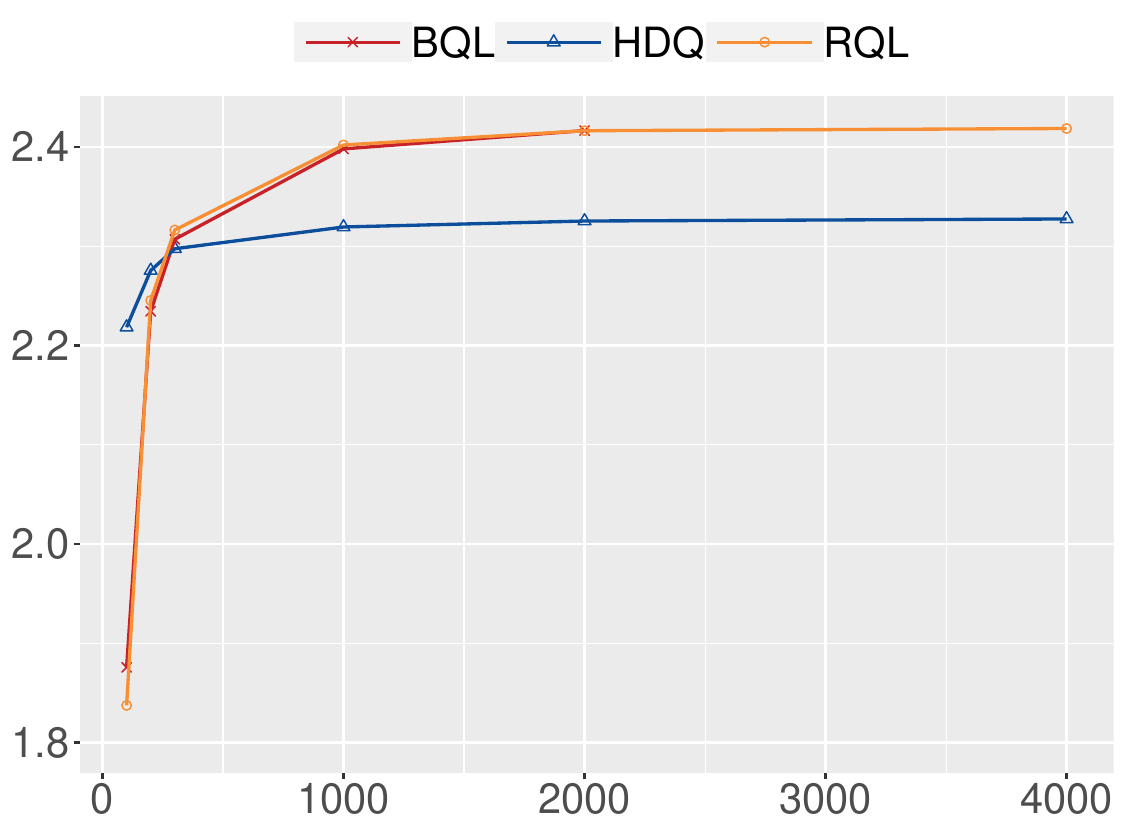}
		\captionsetup{skip=-5pt}
		\caption{Profit as $n$ varies}
		\label{fig:M2_profit}
	\end{subfigure}
	\captionsetup{skip=-5pt}
	\caption{Simulation results under Model 4}
	\label{fig:Model 4}
	\end{figure}
	
		\begin{figure}[ht]
		\centering
		\begin{subfigure}[b]{0.4\textwidth}
			\centering
			\includegraphics[width=\textwidth]{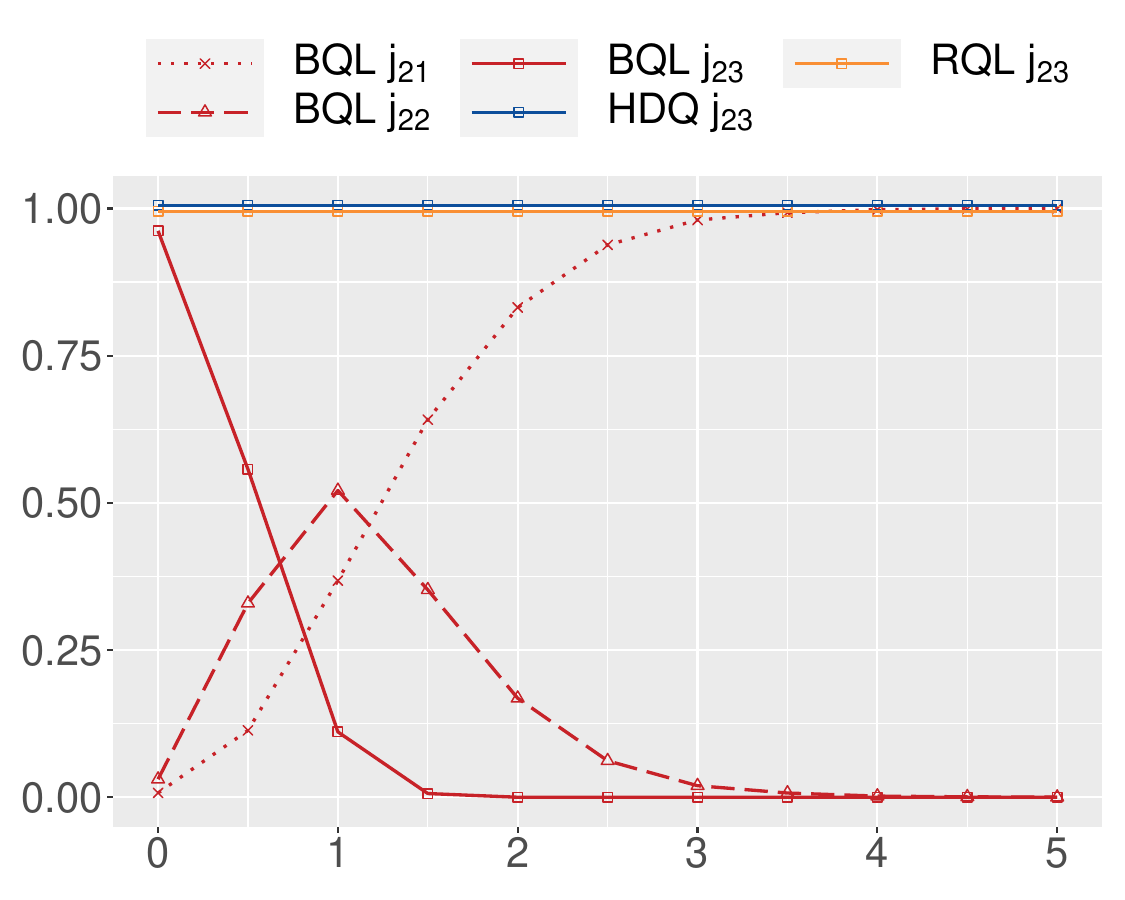}
			\captionsetup{skip=-5pt}
			\caption{Frequency as $\lambda$ varies}
			\label{fig:M5_freq}
		\end{subfigure}
		\hspace{0.03\textwidth}
		\begin{subfigure}[b]{0.4\textwidth}
			\centering
			\includegraphics[width=\textwidth]{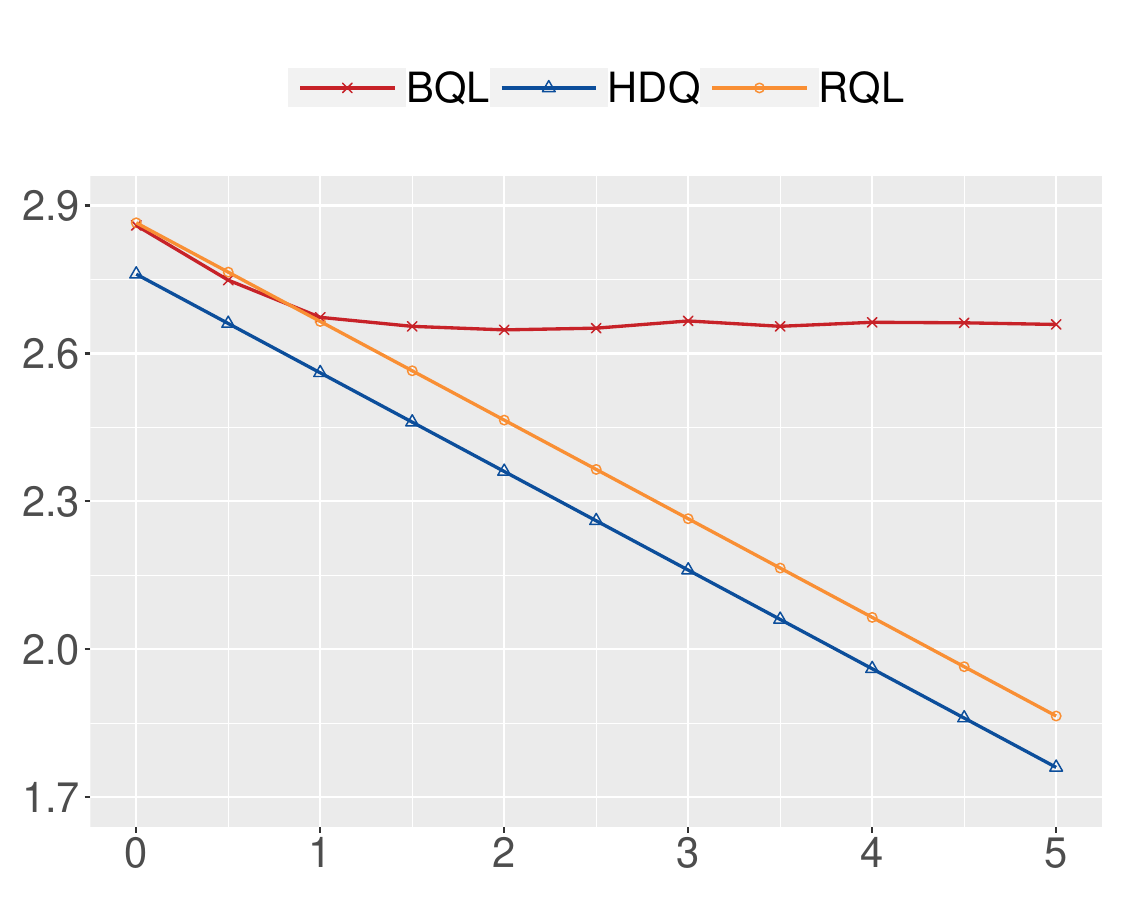}
			\captionsetup{skip=-5pt}
			\caption{Profit as $\lambda$ varies}
			\label{fig:M5_profit}
		\end{subfigure}
		\captionsetup{skip=-5pt}
		\caption{Simulation results under Model 5}
		\label{fig:Model 5}
		\end{figure}
		
		\begin{figure}[ht]
		\centering
		\begin{subfigure}[b]{0.4\textwidth}
			\centering
			\includegraphics[width=\textwidth]{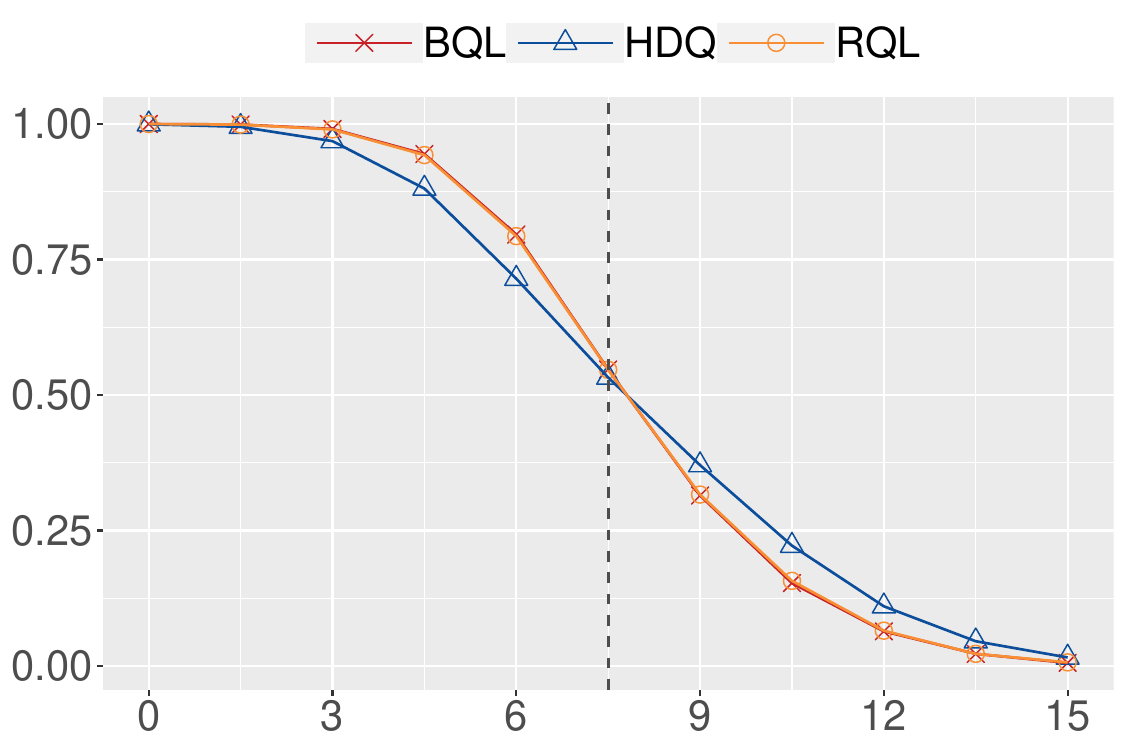}
			\captionsetup{skip=-5pt}
			\caption{Frequency as $C^{2t}(1)$ varies}
			\label{fig:M6_freq}
		\end{subfigure}
		\hspace{0.03\textwidth}
		\begin{subfigure}[b]{0.4\textwidth}
		\centering
		\includegraphics[width=\textwidth]{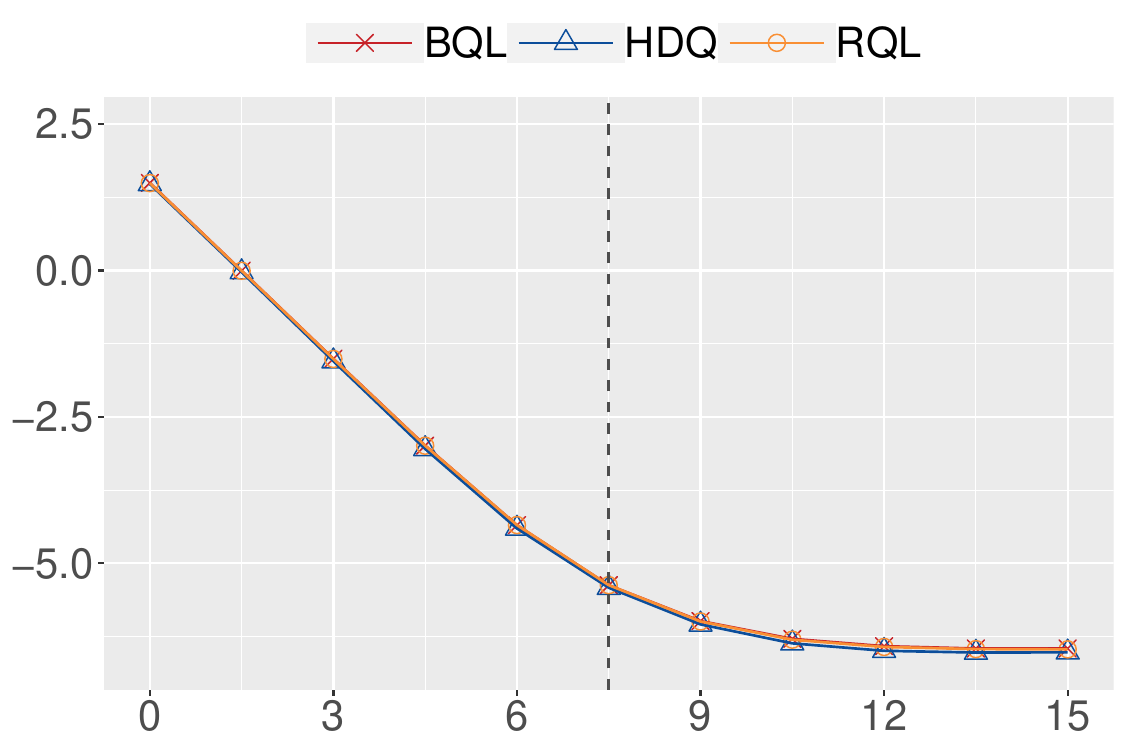}
		\captionsetup{skip=-5pt}
		\caption{Profit as $C^{2t}(1)$ varies}
		\label{fig:M6_profit}
		\end{subfigure}
		\captionsetup{skip=-5pt}
		\caption{Simulation results under Model 6}
		\label{fig:Model 6}
		\end{figure}
		
		\begin{figure}[ht]
		\centering
		\begin{subfigure}[b]{0.4\textwidth}
			\centering
			\includegraphics[width=\textwidth]{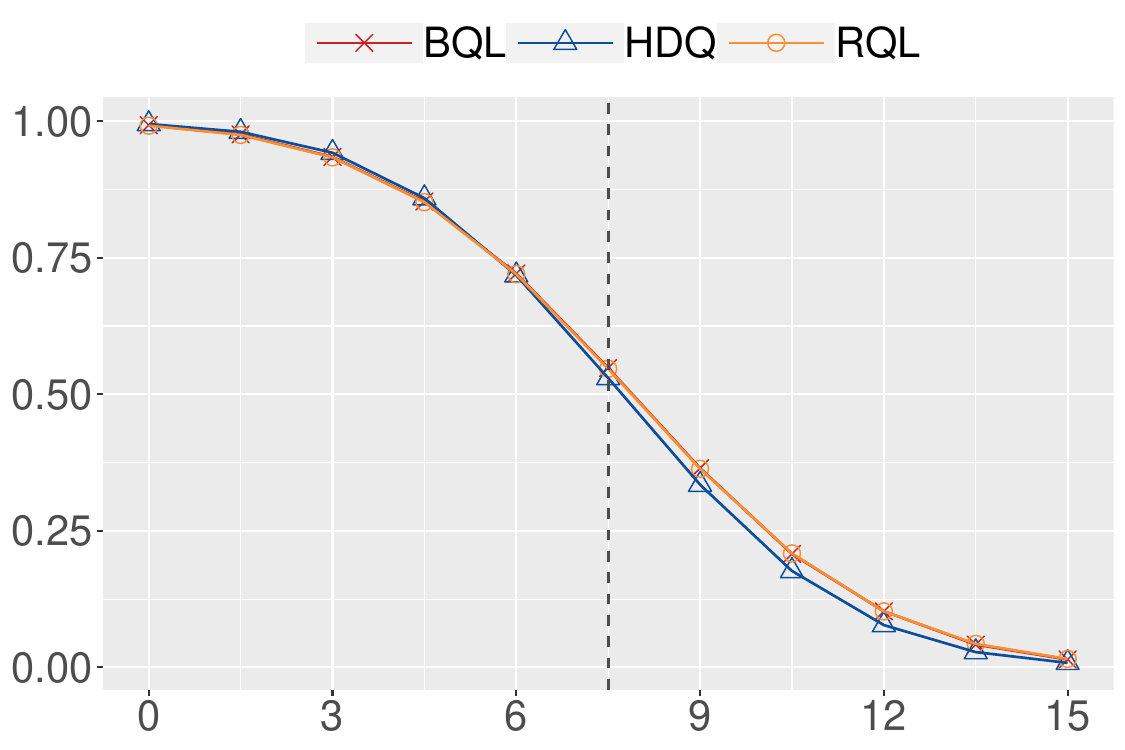}
			\captionsetup{skip=-5pt}
			\caption{Frequency as $C^{1t}(1)$ varies}
			\label{fig:M7_freq}
		\end{subfigure}
		\hspace{0.03\textwidth}
		\begin{subfigure}[b]{0.4\textwidth}
			\centering
			\includegraphics[width=\textwidth]{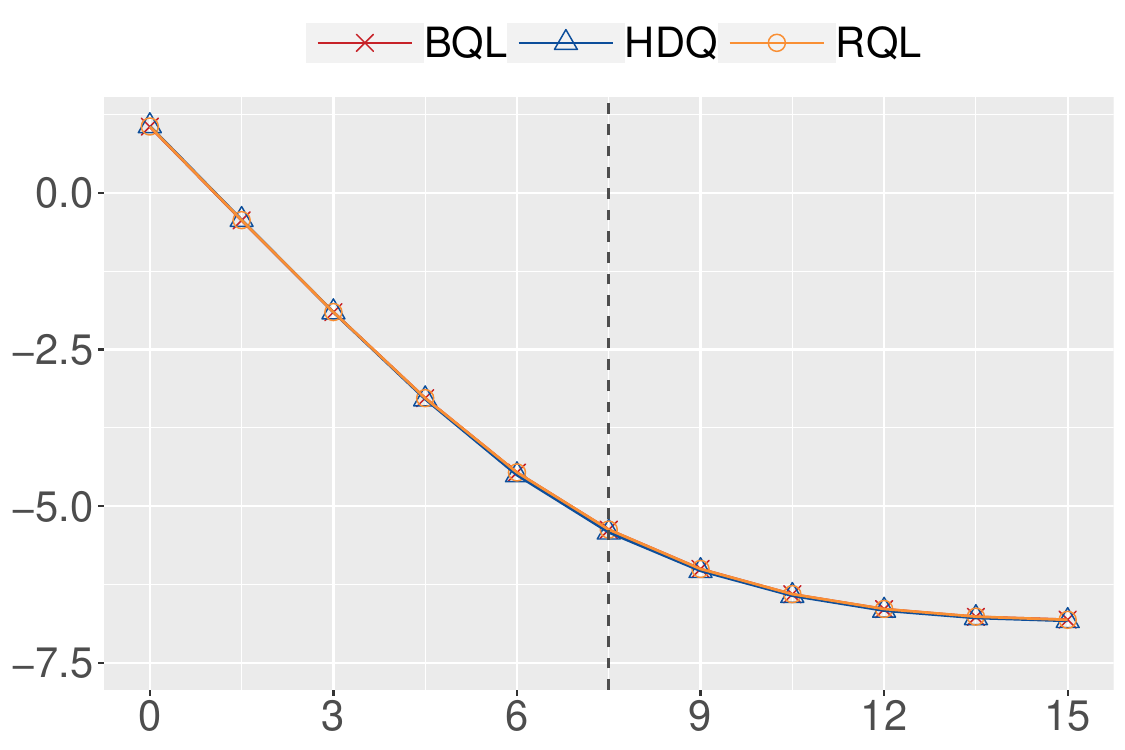}
			\captionsetup{skip=-5pt}
			\caption{Profit as $C^{1t}(1)$ varies}
			\label{fig:M7_profit}
		\end{subfigure}
		\captionsetup{skip=-5pt}
		\caption{Simulation results under Model 7}
		\label{fig:Model 7}
	\end{figure}
	
	Model 4 examines the special case where $\lambda = 0$, meaning no cost considerations are taken into account. This scenario represents the least favorable case for the BQL method, as our approach is specifically designed to address cost-related issues. Nevertheless, as shown in Figure \ref{fig:Model 4}, BQL performs similarly to the RQL method as the sample size $n$ increases. Both BQL and RQL outperform HDQ for large $n$, primarily due to the more accurate non-parametric nuisance estimates.
	
	Under Model 5, we focus on the costs associated with the second-stage covariate assessment only. Since all covariates contribute to the outcome, when $\lambda$ is small, BQL tends to select the most informative but expensive set, $j_{23} = \{2,3,4,5\}$. As $\lambda$ increases, BQL first favors the intermediate-sized set $j_{22} = \{2,3,4\}$, and later selects the smallest set, $j_{21} = \{2,3\}$. As shown in Figure \ref{fig:Model 5}, BQL and RQL deliver similar performance when $\lambda$ is small, but when $\lambda$ is relatively large, BQL clearly outperforms RQL in the sense of the overall profit. Moreover, both BQL and RQL outperform HDQ.

	Figure \ref{fig:Model 6} displays the results when the second-stage treatment cost, \(C^{2t}(0)\), is fixed at $7.5$, and \(C^{2t}(1)\) varies from $0$ to $15$. As the treatment assignment cost \(C^{2t}(1)\) increases, all considered methods show a decrease in the frequency of assigning \(A_2 = 1\) and a reduction in overall profit. Furthermore, we observe similar performance across all methods, suggesting that our approach is also adaptable to treatment assignment costs, despite our main focus being on covariate assessment costs. A similar pattern is observed in Figure \ref{fig:Model 7}, where the first-stage treatment cost is considered.

	\section{Additional details on the application to the MIMIC-III database}\label{sec:sm_rd}
	
	Table \ref{table:RD_covariate} lists the covariates and their descriptions in the considered dataset introduced in Section \ref{sec:real}. The SOFA score is calculated based on the following variables: PaO2\_FiO2, Platelets\_count, Total\_bili, MeanBP, max\_dose\_vaso, GCS, Creatinine, output\_4hourly. These variables are included in patient information, Biochemical Tests, Complete Blood Count, and Arterial Blood Gas analysis. The SIRS score is calculated based on Temp\_C, HR, RR, paCO2, and WBC\_count. These variables are included in patient information, Complete Blood Count, and Arterial Blood Gas analysis.

\begin{table}[htbp]
	\centering
	\renewcommand{\arraystretch}{0.43} 
	\small 
	\caption{Covariates and their descriptions}
	\label{table:RD_covariate}
	\begin{tabularx}{\textwidth}{lX} 
		\toprule
		\textbf{Covariate} & \textbf{Description}\\ 
		\midrule
		\multicolumn{2}{l}{\textbf{Patient Information} (dimension: 20; gender, age, re\_admission are only included in Stage 1)} \\
		gender & Gender (binary)\\
		age & Age in days\\
		re\_admission & ICU re-admission during current hospital stay (binary)\\
		Weight\_kg & Weight in kilograms \\
		mechvent & Mechanical ventilation (binary) \\
		GCS & Glasgow Coma Scale score \\
		HR & Heart rate (beats per minute) \\
		SysBP & Systolic blood pressure (mmHg) \\
		MeanBP & Mean arterial blood pressure (mmHg) \\
		DiaBP & Diastolic blood pressure (mmHg) \\
		RR & Respiratory rate (breaths per minute) \\
		Temp\_C & Temperature (Celsius) \\
		FiO2\_1 & Inspired oxygen fraction \\
		Shock\_Index & Shock index (HR/SysBP) \\
		SpO2 & Oxygen saturation (\%) \\
		max\_dose\_vaso & Maximum vasopressor dose (mcg/kg/min) \\
		input\_total & Total fluid input since hospital or ICU admission (mL) \\
		output\_total & Total urine output since hospital or ICU admission (mL) \\
		output\_4hourly & Urine output over 4-hour interval (mL) \\
		cumulated\_balance & Difference between input total and output total (mL) \\
		
		\midrule
		\multicolumn{2}{l}{\textbf{Biochemical Tests} (dimension: 11)} \\
		Potassium & Potassium (meq/L) \\
		Sodium & Sodium (meq/L) \\
		Chloride & Chloride (meq/L) \\
		Glucose & Blood glucose (mg/dL) \\
		Magnesium & Magnesium (mg/dL) \\
		Calcium & Calcium (mg/dL) \\
		BUN & Blood urea nitrogen (mg/dL) \\
		Creatinine & Creatinine (mg/dL) \\
		SGOT & Aspartate aminotransferase (U/L) \\
		SGPT & Alanine aminotransferase (U/L) \\
		Total\_bili & Total bilirubin (mg/dL) \\
		
		\midrule
		\multicolumn{2}{l}{\textbf{Complete Blood Count} (dimension: 3)} \\
		Hb & Hemoglobin (g/dL) \\
		WBC\_count & White blood cell count (E9/L) \\
		Platelets\_count & Platelet count (E9/L) \\

		\midrule
		\multicolumn{2}{l}{\textbf{Coagulation Function} (dimension: 3)} \\
		PTT & Partial thromboplastin time (seconds) \\
		PT & Prothrombin time (seconds) \\
		INR & International normalized ratio \\
		
		\midrule
		\multicolumn{2}{l}{\textbf{Arterial Blood Gas analysis} (dimension: 7)} \\
		Arterial\_pH & Arterial blood pH \\
		paO2 & Partial pressure of oxygen in arterial blood (mmHg) \\
		paCO2 & Partial pressure of carbon dioxide in arterial blood (mmHg) \\
		Arterial\_BE & Base excess in arterial blood (meq/L) \\
		Arterial\_lactate & Lactate concentration in arterial blood (mmol/L) \\
		HCO3 & Bicarbonate concentration (meq/L) \\
		PaO2\_FiO2 & Ratio of paO2 to FiO2 (mmHg) \\
		
		\midrule
		\multicolumn{2}{l}{\textbf{Other Scores} (dimension: 2)} \\
		SOFA & Sequential Organ Failure Assessment score \\
		SIRS & Systemic Inflammatory Response Syndrome score \\
		\bottomrule
	\end{tabularx}
\end{table}
	
	\newpage
	\section{Definition of the target parameters}\label{sec:sm_target}
	We begin by introducing the target parameters defined via optimal linear regimes. Let $f_{2}(\bar{\bs}_{2},a_1)=E(Y\mid \bar{\bS}_{2}=\bar{\bs}_{2},A_1=a_1)$ and $g_{2}(\bar{\bs}_{2},a_1)=E(A_2\mid \bar{\bS}_{2}=\bar{\bs}_{2},A_1=a_1)$, we consider $\bar{\Delta}^{2t*}(\bar{\bs}_{2},a_1)=(\bar{\bs}_{2},a_1)^\top\balpha^{*}$ as a population-level linear approximation of the true contrast function $\bar{\Delta}^{2t}(\bar{\bs}_{2},a_1)$, where
	\begin{align*}
	\bar\balpha^{*}=\mathop{\arg\min}_{\balpha}E\left([Y-f_{2}(\bar{\bS}_{2},A_1)-\{A_{2}-g_{2}(\bar{\bS}_{2},A_1)\}\{(\bar{\bS}_{2},A_1)^\top\balpha+C^{2t}(1)-C^{2t}(0)\}]^2\right).
	\end{align*}
	For each $a_1\in \{0,1\}$, $j_1\in\mathcal{J}_1$, and $j_2\in\mathcal{J}_2$, we introduce the following linear working model for $\Delta^{2t*}(\bs_{\bar{j}_2},j_1,a_1,j_1)$: 
	\begin{align}
	\Delta^{2t*}(\bs_{\bar{j}_2},j_1,a_1,j_2)=\bs_{\bar{j}_2}^\top\balpha_{j_1a_1j_2}^*,\quad\balpha_{j_1a_1j_2}^*=\mathop{\arg\min}_{\balpha} E\left[\{(\bar{\bS}_{2},a_1)^\top \bar\balpha^*-\bS_{\bar{j}_2}^\top\balpha\}^2\right].\label{def:alpha-star}
	\end{align}
	The corresponding linear decision rule is $\pi^{2t*}(\bs_{\bar{j}_2},j_1,a_1,j_2)=I(\bs_{\bar{j}_2}^\top\balpha_{j_1a_1j_2}^*>0)$. Define the unrestricted Q-function based on the linear working models:
	$$\bar{Q}^{2c*}(\bs_{\bar{l}_2^{\,\rm f}},j_1,a_1,j_2)=E[Q^{2t*}\{\bar{\bS}_2,a_1,\pi^{2t*}(\bS_{\bar{j}_2},j_1,a_1,j_2)\}-C^{2c}(j_2)\mid \bS_{\bar{l}_2^{\,\rm f}}=\bs_{\bar{l}_2^{\,\rm f}},A_1=a_1].$$
	Let $\bDelta^{C,2c}_{j_2}=C^{2c}(j_2)-C^{2c}(j_2^{\rm f})$, the associated contrast outcome is given by
	\begin{align*}
		Y_{j_1j_2}^{2c*}&=\bar{Q}^{2t*}\{\bar{\bS}_2,A_1,\pi^{2t*}(\bS_{\bar{J}_2},j_1,A_1,j_2)\}-\bar{Q}^{2t*}\{\bar{\bS}_2,A_1,\pi^{2t*}(\bS_{\bar{j}_2^{\rm f}},j_1,A_1,j_2^{\rm f})\}-\bDelta^{C,2c}_{j_2}\\
		&=\bar{\Delta}^{2t*}(\bar{\bS}_{2},A_1)[I\{\Delta^{2t*}(\bS_{\bar{j}_2},j_1,A_1,j_2)>0\}-I(\Delta^{2t*}(\bS_{\bar{j}_2^{\rm f}},j_1,A_1,j_2^{\rm f})>0)\}-\bDelta^{C,2c}_{j_2}.
	\end{align*} 
	To approximate the restricted contrast function $\bar{\Delta}^{2c}(\bs_{\bar{l}_2},j_1, a_1, j_2)$, we adopt a linear working model:
	\begin{align*}
		\bar{\Delta}^{2c*}(\bs_{\bar{l}_2^{\,\rm f}},j_1,a_1,j_2)&=(\bs_{\bar{l}_2^{\, \rm f}},a_1)^\top\bar{\bbeta}_{j_1j_2}^{*},\quad \bar{\bbeta}_{j_1j_2}^{*}=\mathop{\arg\min}_{\bbeta}E\{(	Y_{j_1j_2}^{2c}-(\bS_{\bar{l}_2^{\,\rm f}},A_1)^\top\bbeta)^2\}.
	\end{align*}
	Consider a linear working model for the unrestricted contrast $\Delta^{2c}(\bs_{\bar{l}_2^{\,\rm f}},j_1,a_1,j_2)$:
	\begin{align}
		\Delta^{2c*}(\bs_{\bar{l}_2},j_1,a_1,j_2)&=\bs_{\bar{l}_2}^\top\bbeta_{j_1a_1j_2}^{*}\quad	\bbeta_{j_1a_1j_2}^{*}=\mathop{\arg\min}_{\bbeta}E[\{(\bS_{\bar{l}_2^{\,\rm f}},a_1)^\top\bar{\bbeta}_{j_1j_2}^{*}-\bS_{\bar{l}_2}^\top\bbeta\}^2].
	\end{align}
	
	For the first-stage treatment assignment, we define the working Q-function as 
	\begin{align}
	\bar{Q}^{1t*}(\bs_{1},j_1,a_1)=E[\bar{Q}^{2c*}\{\bS_{\bar{l}_2^{\,\rm f}},j_1,a_1,\pi^{2c*}(\bS_{\bar{l}_2},j_1,a_1)\}-C^{1t}(a_1)\mid \bS_1=\bs_1,A_1=a_1],\label{Q1t-star}
	\end{align}
	where the linear covariate assessment rule at the second stage is given by $\pi^{2c*}(\bs_{\bar{l}_2},j_1,a_1)=\mathop{\arg\max}_{j_2\in\mathcal{J}_2}\bs_{\bar l_2}^\top\bbeta_{j_1a_1j_2}^*$. 
	
	Consider the contrast function $\bar Q^{1t*}(\bs_1,j_1,1)-\bar Q^{1t*}(\bs_1,j_1,0)$. We denote 
	$$\widetilde{Y}_{j_1}^{1t}=\bar Q^{2c*}\{\bS_{\bar{l}_2^{\,\rm f}},j_1,A_1,\pi^{2c*}(\bS_{\bar{l}_2},j_1,A_1)\}-C^{1t}(A_1).$$
	Estimating the contrast function is equivalent to estimate the following parital linear model
	$$
	\widetilde{Y}_{j_1}^{1t*}=\bar Q^{1t*}(\bS_1,j_1,0)+A_1\{\bar Q^{1t*}(\bS_1,j_1,1)-\bar Q^{1t*}(\bS_1,j_1,0)\}+\bar\epsilon,\quad E(\bar\epsilon\mid \bS_1,A_1)=0.
	$$
	We denote $\Delta^{C,2t}=C^{2t}(1)-C^{2t}(0)$ and $\Delta_{j_2}^{C,2c}=C^{2c}(j_2)-C^{2c}(j_2^{\rm f})$, note the outcome can be expressed as 
	\begin{align*}
		\widetilde{Y}_{j_1}^{1t*}
		&=E[\bar{Q}^{2t*}\{\bar{\bS}_2,A_1,\pi^{2t*}(\bS_{\bar{j}_2^{\rm f}},j_1,A_1,j_2^{\rm f})\}-C^{2c}(j_{2}^{\rm f})\mid \bS_{\bar{l}_2^{\,\rm f}},A_1]\\
		&\qquad+ \sum\nolimits_{j_2\in\mathcal{J}_2}I\{\pi^{2c*}(\bS_{\bar{l}_2},j_1,A_1)=j_2\}\bar{\Delta}^{2c*}(\bS_{\bar{l}_2^{\,\rm f}},j_1,A_1,j_2)-C^{1t}(A_1).
	\end{align*}
	Moreover, note that
	\begin{align*}
	&\bar{Q}^{2t*}\{\bar{\bS}_2,A_1,\pi^{2t*}(\bS_{\bar{j}_2^{\rm f}},j_1,A_1,j_2^{\rm f})\}\\
	&\qquad=\bar{Q}^{2t*}(\bar{\bS}_2,A_1,0)+\bar{\Delta}^{2t*}(\bar{\bS}_2,A_1)I\{\Delta^{2t*}(\bS_{\bar{j}_2^{\rm f}},j_1,A_1,j_2^{\rm f})>0\}\\
	&\qquad=f_2(\bar{\bS}_2,A_1)-C^{2t}(0)-g_2(\bar{\bS}_2,A_1)\Delta^{C,2t}\\
	&\qquad\qquad+\bar{\Delta}^{2t*}(\bar{\bS}_2,A_1)[I\{\Delta^{2t*}(\bS_{\bar{j}_2^{\rm f}},j_1,A_1,j_2^{\rm f})>0\}-g_2(\bar{\bS}_2,A_1)].
	\end{align*}
	This motivates the definition of
	\begin{align*}
		Y_{j_1}^{1t*}&=Y-C^{2t}(A_{2})+(\bar{\bS}_{2},A_1)^\top\bar\balpha^*\{I(\bS_{\bar{j_2^{\rm f}}}^\top\balpha^*_{j_1A_1j_2^{\rm f}}> 0)-A_{2}\}\nonumber\\
		&\quad-C^{2c}(j_2^{\rm f})+\sum\nolimits_{j_2\in\mathcal{J}_2}I(j_2=\mathop{\arg\max}\nolimits_{j_2^\prime }\bS_{\bar{l}_2}^\top\bbeta^*_{j_1 A_1j_2^\prime})(\bS_{\bar{l}_2^{\,\rm f}},A_{1i})^\top\bar{\bbeta}^{*}_{j_1j_2}-C^{1t}(A_{1}).
	\end{align*}
	It follows that $E(Y_{j_1}^{1t*}-\widetilde{Y}_{j_1}^{1t*}\mid \bS_1,A_1)=0$.
	The contrast can be estimated via the following model equivalently:
	$$
	Y_{j_1}^{1t*}=\bar{Q}^{1t*}(\bS_1,j_1,0)+A_1\{Q^{1t*}(\bS_1,j_1,1)-\bar{Q}^{1t*}(\bS_1,j_1,0)\}+\varepsilon^{1t*},\quad E(\varepsilon^{1t*}\mid \bS_1,A_1)=0.
	$$
	Let $g_1(\bs_1)=E(A_{1}\mid \bS_1=\bs_1)$ denote the propensity score model and
	\begin{align}\label{fj1*}
	f_{j_1}^*(\bs_1)=E(Y_{j_1}^{1t*}\mid \bS_1=\bs_1)
	\end{align} denote the outcome model. We introduce a working model for $\bar{\Delta}^{1t}(\bs_{1},j_1)$ as
	\begin{align*}
		\bar{\Delta}^{1t*}(\bs_{1},j_1)=\bs_{1}^\top\bar{\bgamma}_{j_1}^{*},\quad \bar{\bgamma}_{j_1}^{*}&=\mathop{\arg\min}_{\balpha}E\left([Y_{j_1}^{1t*}-f_{j_1}^*(\bS_{1})-\{A_{1}-g_1(\bS_{1})\}\bS_{1}^\top\bgamma]^2\right).
	\end{align*}
	Additionally, we define the nested best linear approximation for $\Delta^{1t}(\bs_{\bar{j}_1},j_1)$ as
	\begin{align}
		\Delta^{1t*}(\bs_{\bar{j}_1},j_1)=\bs_{\bar{j}_1}^\top\bgamma_{j_1}^{*},\quad \bgamma_{j_1}^{*}&=\mathop{\arg\min}_{\balpha}E\{(\bS_1^\top\bar{\bgamma}_{j_1}^{*}-\bS_{\bar{j}_1}^\top\bgamma)^2\}.
	\end{align}
	Finally, we define $Q^{1c*}(\bs_{l_1},j_1)=E[\bar{Q}^{1t*}\{\bS_1,j_1,\pi^{1t*}(\bS_{\bar{j}_1},j_1)\}-C^{1c}(j_1)\mid\bS_{l_1}=\bs_{l_1}]$, where the optimal linear treatment rule is given by $\pi^{1t*}(\bs_{\bar j_1},j_1)=I(\bs_{\bar j_1}^\top\bgamma_{j_1}^*>0)$. Now, consider the contrast $Q^{1c*}(\bs_{l_1},j_1)-Q^{1c*}(\bs_{l_1},j_1^{\rm f})$. Define 
	\begin{align*}
		\widetilde{Y}_{j_1}^{1c*}&=\bar{Q}^{1t*}\{\bS_1,j_1,\pi^{1t*}(\bS_{\bar{j}_1},j_1)\}-C^{1c}(j_1)\\
		&=\bar{Q}^{1t*}(\bS_1,j_1,0)+\bar{\Delta}^{1t*}(\bS_1,j_1)I\{\Delta^{1t*}(\bS_{\bar{J}_1},j_1)>0\}-C^{1c}(j_1)\\
		&=f_{j_1}(\bS_1)+\bar{\Delta}^{1t*}(\bS_1,j_1)[I\{\Delta^{1t*}(\bS_{\bar{j}_1},j_1)>0\}-g_1(\bS_1)]-C^{1c}(j_1),
	\end{align*}
	where the last equality is by $f_{j_1}^*(\bS_1)=\bar{Q}^{1t*}(\bS_1,j_1,0)+g_1(\bS_1)\bar \Delta^{1t*}(\bS_1,j_1)$. We introduce an equivalent unbiased representation:
		$$
	Y_{j_1}^{1c*}=Y_{j_1}^{1t*}+\bar{\Delta}^{1t*}(\bS_1,j_1)[I\{\Delta^{1t*}(\bS_{\bar{j}_1},j_1)>0\}-A_1]-C^{1c}(j_1).
	$$
	 Moreover, it holds that $E(\widetilde{Y}_{j_1}^{1c*}-Y_{j_1}^{1c*}\mid \bS_{l_1})=E\{E(\widetilde{Y}_{j_1}^{1c*}-Y_{j_1}^{1c*}\mid \bS_{1})\mid \bS_{l_1}\}=0,$
	hence $Q^{1c*}(\bs_{l_1},j_1)-Q^{1c*}(\bs_{l_1},j_1^{\rm f})=E(Y_{j_1}^{1c}-Y_{j_1^{\rm f}}^{1c}\mid \bS_{l_1})$. We specify the working model for $\Delta^{1c}(\bs_{l_1},j_1)$ as:
	\begin{align}
		\Delta^{1c*}(\bs_{l_1},j_1)=\bs_{l_1}^\top\bdelta_{j_1}^*, \quad \bdelta_{j_1}^*&=\mathop{\arg\min}_{\bdelta} E\{(Y_{j_1}^{1c*}-Y_{j_1^{\rm f}}^{1c*}-\bS_{ l_1}^\top\bdelta)^2\}.\label{def:delta-star}
	\end{align}

	\section{Proof of the main results}\label{sec:sm_proof}
	
	\begin{proof}[Proof of Theorem \ref{thm:optimal_DTR}]
	Recall that $J_1=\pi^{1c}(\bS_{l_1})$, $A_1=\pi^{1t}(\bS_{\bar{J}_1},J_1)$, $J_2=\pi^{2c}(\bS_{\bar{L}_2},J_1,\pi^{1c}(\bS_{l_1}))$, and $A_2=\pi^{2t}(\bS_{\bar{J}_2},J_1,\pi^{1t}(\bS_{\bar{J}_1},J_1),J_2)$, for simplicity, we denote $C^{1c}(\pi^{1c})=C^{1c}\{\pi^{1c}(\bS_{l_1})\}$, $C^{1t}(\pi^{1t})=C^{1t}\{\pi^{1t}(\bS_{\bar{J}_1},J_1)\}$, $C^{2c}(\pi^{2c})=C^{2c}\{\pi^{2c}(\bS_{\bar{L}_2},J_1,\pi^{1t}(\bS_{\bar{J}_1},J_1))\}$, as well as $C^{2t}(\pi^{2t})=C^{2t}\{\pi^{2t}(\bS_{\bar{J}_2},J_1,\pi^{1t}(\bS_{\bar{J}_1},J_1),J_2)\}$. Note the profit
	\begin{align*}
	&{\rm Profit}(\pi^{1c},\pi^{1t},\pi^{2c},\pi^{2t})\\
	&=E[Y\{\pi^{1t}(\bS_{\bar{J}_1},J_1),\pi^{2t}(\bS_{\bar{J}_2},J_1,\pi^{1t}(\bS_{\bar{J}_1},J_1),J_2)\}-C^{1c}(\pi^{1c})-C^{1t}(\pi^{1t})-C^{2c}(\pi^{2c})-C^{2t}(\pi^{2t})]\\
	&=E[E[Y\{\pi^{1t}(\bS_{\bar{J}_1},J_1),\pi^{2t}(\bS_{\bar{J}_2},J_1,\pi^{1t}(\bS_{\bar{J}_1},J_1),J_2)\}\\
	&\qquad\qquad-C^{1c}(\pi^{1c})-C^{1t}(\pi^{1t})-C^{2c}(\pi^{2c})-C^{2t}(\pi^{2t})\mid\bS_1]]\\
	&=E[E[Y\{\pi^{1t}(\bS_{\bar{J}_1},J_1),\pi^{2t}(\bS_{\bar{J}_2},J_1,\pi^{1t}(\bS_{\bar{J}_1},J_1),J_2)\}\\
	&\qquad\qquad-C^{1c}(\pi^{1c})-C^{1t}(\pi^{1t})-C^{2c}(\pi^{2c})-C^{2t}(\pi^{2t})\mid\bS_1,A_1=\pi^{1t}(\bS_{\bar{J}_1},J_1)]]\\
	&=E[E[E[E[Y\{\pi^{1t}(\bS_{\bar{J}_1},J_1),\pi^{2t}(\bS_{\bar{J}_2},J_1,\pi^{1t}(\bS_{\bar{J}_1},J_1),J_2)\}-C^{2t}(\pi^{2t})\mid \bar{\bS}_2,A_1=\pi^{1t}(\bS_{\bar{J}_1},J_1)]\\
	&\qquad\qquad-C^{2c}(\pi^{2c})\mid \bS_{\bar{l}_2^{\,\rm f}},A_1=\pi^{1t}(\bS_{\bar{J}_1},J_1)]-C^{1t}(\pi^{1t})-C^{1c}(\pi^{1c})\mid \bS_1,A_1=\pi^{1t}(\bS_{\bar{J}_1},J_1)]]\\
	&=E[E[E[E[Y\{\pi^{1t}(\bS_{\bar{J}_1},J_1),\pi^{2t}(\bS_{\bar{J}_2},J_1,\pi^{1t}(\bS_{\bar{J}_1},J_1),J_2)\}-C^{2t}(\pi^{2t})\\
	&\qquad\qquad\mid \bar{\bS}_2,A_1=\pi^{1t}(\bS_{\bar{J}_1},J_1),A_2=\pi^{2t}(\bS_{\bar{J}_2},J_1,\pi^{1t}(\bS_{\bar{J}_1},J_1),J_2)]-C^{2c}(\pi^{2c})\\
	&\qquad\qquad\mid \bS_{\bar{l}_2^{\,\rm f}},A_1=\pi^{1t}(\bS_{\bar{J}_1},J_1)]-C^{1t}(\pi^{1t})-C^{1c}(\pi^{1c})\mid \bS_1,A_1=\pi^{1t}(\bS_{\bar{J}_1},J_1)]],
	\end{align*}
	where the second and fourth equalities are by the tower rule, the third equality holds, since we observe that
	\begin{align*}
	\pi^{2t}(\bS_{\bar{J}_2},J_1,\pi^{1t}(\bS_{\bar{J}_1},J_1),J_2)&=\pi^{2t}(\bS_{\bar{J}_2}(\pi^{1t}(\bS_{\bar{J}_1},J_1)),J_1,\pi^{1t}(\bS_{\bar{J}_1},J_1),J_2),\\
	\pi^{2c}(\bS_{\bar{L}_2},J_1,\pi^{1c}(\bS_{l_1}))&=\pi^{2c}(\bS_{\bar{L}_2}(\pi^{1t}(\bS_{\bar{J}_1},J_1)),J_1,\pi^{1c}(\bS_{l_1})),
	\end{align*}
	where 
	\begin{align*}
	\bS_{\bar{L}_2}(\pi^{1t}(\bS_{\bar{J}_1},J_1))&=(\bS_{\bar{J}_1},\bS_{L_2}(\pi^{1t}(\bS_{\bar{J}_1},J_1)),\\ \bS_{\bar{J}_2}(\pi^{1t}(\bS_{\bar{J}_1},J_1))&=(\bS_{\bar{L}_2}(\pi^{1t}(\bS_{\bar{J}_1},J_1)),\bS_{J_2}(\pi^{1t}(\bS_{\bar{J}_1},J_1))),
	\end{align*}
 	and $\bS_{L_2}(\pi^{1t}(\bS_{\bar{J}_1}, J_1))$ and $\bS_{J_2}(\pi^{1t}(\bS_{\bar{J}_1}, J_1))$ denote the potential values of the second-stage covariates, restricted to sets $L_2$ and $J_2$, respectively, under the first-stage treatment $\pi^{1t}(\bS_{\bar{J}_1}, J_1)$. Hence the random variable inside the conditional expectation is equal to
	\begin{align*}
	&Y\{\pi^{1t}(\bS_{\bar{J}_1},J_1),\pi^{2t}(\bS_{\bar{J}_2}(\pi^{1t}(\bS_{\bar{J}_1},J_1)),J_1,\pi^{1t}(\bS_{\bar{J}_1},J_1),J_2)\}-C^{1c}\{\pi^{1c}(\bS_{l_1})\}-C^{1t}\{\pi^{1t}(\bS_{\bar{J}_1},J_1)\}\\
	&-C^{2c}\{\pi^{2c}(\bS_{\bar{L}_2}(\pi^{1t}(\bS_{\bar{J}_1},J_1)),J_1,\pi^{1t}(\bS_{\bar{J}_1},J_1))\}-C^{2t}\{\pi^{2t}(\bS_{\bar{J}_2}(\pi^{1t}(\bS_{\bar{J}_1},J_1)),J_1,\pi^{1t}(\bS_{\bar{J}_1},J_1),J_2)\},
	\end{align*}
	which is independent of $A_1$ conditional on $\bS_1$ by sequential
	ignorability of Assumption \ref{ass:identification},  the fifth equality is also by sequential
	ignorability of Assumption \ref{ass:identification}. By consistency of Assumption \ref{ass:identification}, the above term is equal to 
	\begin{align*}
		&E[E[E[E[Y-C^{2t}(\pi^{2t})\mid \bar{\bS}_2,A_1=\pi^{1t}(\bS_{\bar{J}_1},J_1),A_2=\pi^{2t}(\bS_{\bar{J}_2},J_1,\pi^{1t}(\bS_{\bar{J}_1},J_1),J_2)]-C^{2c}(\pi^{2c})\\
	&\qquad\qquad\mid \bS_{\bar{l}_2^{\,\rm f}},A_1=\pi^{1t}(\bS_{\bar{J}_1},J_1)]-C^{1t}(\pi^{1t})-C^{1c}(\pi^{1c})\mid \bS_1,A_1=\pi^{1t}(\bS_{\bar{J}_1},J_1)]]\\
	&=E[E[E[E[E[Y-C^{2t}(\pi^{2t})\mid \bar{\bS}_2,A_1=\pi^{1t}(\bS_{\bar{J}_1},J_1),A_2=\pi^{2t}(\bS_{\bar{J}_2},J_1,\pi^{1t}(\bS_{\bar{J}_1},J_1),J_2)]-C^{2c}(\pi^{2c})\\
	&\qquad\qquad\mid \bS_{\bar{l}_2^{\,\rm f}},A_1=\pi^{1t}(\bS_{\bar{J}_1},J_1)]-C^{1t}(\pi^{1t})\mid \bS_1,A_1=\pi^{1t}(\bS_{\bar{J}_1},J_1)]-C^{1c}(\pi^{1c})\mid \bS_{l_1}]].
	\end{align*}
	
	By the definition of $\bar{Q}^{2t}$, for any dynamic treatment regime $\{\pi^{1c},\pi^{1t},\pi^{2c},\pi^{2t}\}$, the expectation of the corresponding profit can be expressed as 
	\begin{align*}
		&{\rm Profit}(\pi^{1c},\pi^{1t},\pi^{2c},\pi^{2t})\\
		&=E[E[E[E[\bar{Q}^{2t}\{\bar{\bS}_2,\pi^{1t}(\bS_{\bar{J}_1},J_1),\pi^{2t}(\bS_{\bar{J}_2},J_1,\pi^{1t}(\bS_{\bar{J}_1},J_1),J_2)\}-C^{2c}(\pi^{2c})\\
		&\qquad\qquad\mid \bS_{\bar{l}_2^{\,\rm f}},A_1=\pi^{1t}(\bS_{\bar{J}_1},J_1)]-C^{1t}(\pi^{1t})\mid \bS_1,A_1=\pi^{1t}(\bS_{\bar{J}_1},J_1)]-C^{1c}(\pi^{1c})\mid \bS_{l_1}]]\\
		&=E[E[E[E[\bar{Q}^{2t}\{\bar{\bS}_2,\pi^{1t}(\bS_{\bar{J}_1},J_1),\check{\pi}^{2t}(\bS_{\bar{J}_2},J_1,\pi^{1t}(\bS_{\bar{J}_1},J_1),J_2)\}-C^{2c}(\pi^{2c})\\
		&\qquad\qquad\mid \bS_{\bar{l}_2^{\,\rm f}},A_1=\pi^{1t}(\bS_{\bar{J}_1},J_1)]-C^{1t}(\pi^{1t})\mid \bS_1,A_1=\pi^{1t}(\bS_{\bar{J}_1},J_1)]-C^{1c}(\pi^{1c})\mid \bS_{l_1}]]-G^{2t}\\
		&=E[E[E[\bar{Q}^{2c}\{\bS_{\bar{l}_2^{\,\rm f}},J_1,\pi^{1t}(\bS_{\bar{J}_1},J_1),\pi^{2c}(\bS_{\bar{L}_2},J_1,\pi^{1t}(\bS_{\bar{J}_1},J_1))\}-C^{1t}(\pi^{1t})\\
		&\qquad\qquad\mid \bS_1,A_1=\pi^{1t}(\bS_{\bar{J}_1},J_1)]-C^{1c}(\pi^{1c})\mid \bS_{l_1}]]-G^{2t}\\
		&=E[E[E[\bar{Q}^{2c}\{\bS_{\bar{l}_2^{\,\rm f}},J_1,\pi^{1t}(\bS_{\bar{J}_1},J_1),\check{\pi}^{2c}(\bS_{\bar{L}_2},J_1,\pi^{1t}(\bS_{\bar{J}_1},J_1))\}-C^{1t}(\pi^{1t})\\
		&\qquad\qquad\mid \bS_1,A_1=\pi^{1t}(\bS_{\bar{J}_1},J_1)]-C^{1c}(\pi^{1c})\mid \bS_{l_1}]]-G^{2t}-C^{2c}\\
		&=E[E[\bar{Q}^{1t}\{\bS_1,J_1,\pi^{1t}(\bS_{\bar{J}_1},J_1)\}-C^{1c}(\pi^{1c})\mid \bS_{l_1}]]-G^{2t}-G^{2c}\\
		&=E[E[\bar{Q}^{1t}\{\bS_1,J_1,\check{\pi}^{1t}(\bS_{\bar{J}_1},J_1)\}-C^{1c}(\pi^{1c})\mid \bS_{l_1}]]-G^{2t}-G^{2c}-G^{1t}\\
		&=E[Q^{1c}\{\bS_{l_1},\pi^{1c}(\bS_{l_1})\}],
	\end{align*}
	where
	\begin{align*}
		G^{2t}&=E[E[E[E[\bar{Q}^{2t}\{\bar{\bS}_2,\pi^{1t}(\bS_{\bar{J}_1},J_1),\check{\pi}^{2t}(\bS_{\bar{J}_2},J_1,\pi^{1t}(\bS_{\bar{J}_1},J_1),J_2)\}\\
		&\qquad\qquad\qquad-\bar{Q}^{2t}\{\bar{\bS}_2,\pi^{1t}(\bS_{\bar{J}_1},J_1),\pi^{2t}(\bS_{\bar{J}_2},J_1,\pi^{1t}(\bS_{\bar{J}_1},J_1),J_2)\}\\
		&\qquad\qquad\qquad\mid \bS_{\bar{l}_2^{\,\rm f}},A_1=\pi^{1t}(\bS_{\bar{J}_1},J_1)]\mid \bS_1,A_1=\pi^{1t}(\bS_{\bar{J}_1},J_1)]\mid \bS_{l_1}]],\\
		G^{2c}&=E[E[E[\bar{Q}^{2c}\{\bS_{\bar{l}_2^{\,\rm f}},J_1,\pi^{1t}(\bS_{\bar{J}_1},J_1),\check{\pi}^{2c}(\bS_{\bar{L}_2},J_1,\pi^{1t}(\bS_{\bar{J}_1},J_1))\}\\
		&\qquad\qquad\qquad-\bar{Q}^{2c}\{\bS_{\bar{l}_2^{\,\rm f}},J_1,\pi^{1t}(\bS_{\bar{J}_1},J_1),\pi^{2c}(\bS_{\bar{L}_2},J_1,\pi^{1t}(\bS_{\bar{J}_1},J_1))\}\\
		&\qquad\qquad\qquad\mid \bS_1,A_1=\pi^{1t}(\bS_{\bar{J}_1},J_1)]\mid \bS_{l_1}]],\\
		G^{1t}&=E[E[\bar{Q}^{1t}\{\bS_1,J_1,\check{\pi}^{1t}(\bS_{\bar{J}_1},J_1)\}-\bar{Q}^{1t}\{\bS_1,J_1,\pi^{1t}(\bS_{\bar{J}_1},J_1)\}\mid \bS_{l_1}]].
	\end{align*}
		We obtain 
	\begin{align*}
		G^{2t}
		&=E[E[\bar{Q}^{2t}\{\bar{\bS}_2,\pi^{1t}(\bS_{\bar{J}_1},J_1),\check{\pi}^{2t}(\bS_{\bar{J}_2},J_1,\pi^{1t}(\bS_{\bar{J}_1},J_1),J_2)\}\\
		&\qquad\qquad\qquad-\bar{Q}^{2t}\{\bar{\bS}_2,\pi^{1t}(\bS_{\bar{J}_1},J_1),\pi^{2t}(\bS_{\bar{J}_2},J_1,\pi^{1t}(\bS_{\bar{J}_1},J_1),J_2)\}\mid \bS_1,A_1=\pi^{1t}(\bS_{\bar{\pi}^{1c}})]]\\
		&		=E[E[\bar{Q}^{2t}\{\bar{\bS}_2,\pi^{1t}(\bS_{\bar{J}_1},J_1),\check{\pi}^{2t}(\bS_{\bar{J}_2},J_1,\pi^{1t}(\bS_{\bar{J}_1},J_1),J_2)\}\\
		&\qquad\qquad\qquad-\bar{Q}^{2t}\{\bar{\bS}_2,\pi^{1t}(\bS_{\bar{J}_1},J_1),\pi^{2t}(\bS_{\bar{J}_2},J_1,\pi^{1t}(\bS_{\bar{J}_1},J_1),J_2)\}\mid \bS_1]]\\
		%&=E[\bar{Q}^{2t}\{\bar{\bS}_2,\pi^{1t}(\bS_{\bar{J}_1},J_1),\check{\pi}^{2t}(\bS_{\bar{J}_2},J_1,\pi^{1t}(\bS_{\bar{J}_1},J_1),J_2)\}\\
		%&\qquad\qquad\qquad-\bar{Q}^{2t}\{\bar{\bS}_2,\pi^{1t}(\bS_{\bar{J}_1},J_1),\pi^{2t}(\bS_{\bar{J}_2},J_1,\pi^{1t}(\bS_{\bar{J}_1},J_1),J_2)\}]\\
		&		=E[E[\bar{Q}^{2t}\{\bar{\bS}_2,\pi^{1t}(\bS_{\bar{J}_1},J_1),\check{\pi}^{2t}(\bS_{\bar{J}_2},J_1,\pi^{1t}(\bS_{\bar{J}_1},J_1),J_2)\}\\
		&\qquad\qquad\qquad-\bar{Q}^{2t}\{\bar{\bS}_2,\pi^{1t}(\bS_{\bar{J}_1},J_1),\pi^{2t}(\bS_{\bar{J}_2},J_1,\pi^{1t}(\bS_{\bar{J}_1},J_1),J_2)\}\mid \bS_{\bar{J}_2}]]\\
		&		=E[Q^{2t}\{\bS_{\bar{J}_2},J_1,\pi^{1t}(\bS_{\bar{J}_1},J_1),\check{\pi}^{2t}(\bS_{\bar{J}_2},J_1,\pi^{1t}(\bS_{\bar{J}_1},J_1),J_2)\}\\
		&\qquad\qquad\qquad-Q^{2t}\{\bS_{\bar{J}_2},J_1,\pi^{1t}(\bS_{\bar{J}_1},J_1),\pi^{2t}(\bS_{\bar{J}_2},J_1,\pi^{1t}(\bS_{\bar{J}_1},J_1),J_2)\}]\ge 0,
	\end{align*}
	where the first and third are by tower rule, the second equality holds since conditional on $\bS_1$, the random variable inside the conditional expectation is a function
	measurable with respect to $\bS_2(\pi^{1t}(\bS_{\bar{J}_1},J_1))$ and is independent of $A_1$ by sequential ignorability of Assumption \ref{ass:identification}, the last inequality is by the definition of $\check{\pi}^{2t}$. For the second term
	\begin{align*}
	G^{2c}&=E[E[\bar{Q}^{2c}\{\bS_{\bar{l}_2^{\,\rm f}},J_1,\pi^{1t}(\bS_{\bar{J}_1},J_1),\check{\pi}^{2c}(\bS_{\bar{L}_2},J_1,\pi^{1t}(\bS_{\bar{J}_1},J_1))\}\\
	&\qquad\qquad\qquad-\bar{Q}^{2c}\{\bS_{\bar{l}_2^{\,\rm f}},J_1,\pi^{1t}(\bS_{\bar{J}_1},J_1),\pi^{2c}(\bS_{\bar{L}_2},J_1,\pi^{1t}(\bS_{\bar{J}_1},J_1))\}\mid \bS_1,A_1=\pi^{1t}(\bS_{\bar{J}_1},J_1)]]\\
	&=E[E[\bar{Q}^{2c}\{\bS_{\bar{l}_2^{\,\rm f}},J_1,\pi^{1t}(\bS_{\bar{J}_1},J_1),\check{\pi}^{2c}(\bS_{\bar{L}_2},J_1,\pi^{1t}(\bS_{\bar{J}_1},J_1))\}\\
	&\qquad\qquad\qquad-\bar{Q}^{2c}\{\bS_{\bar{l}_2^{\,\rm f}},J_1,\pi^{1t}(\bS_{\bar{J}_1},J_1),\pi^{2c}(\bS_{\bar{L}_2},J_1,\pi^{1t}(\bS_{\bar{J}_1},J_1))\}\mid \bS_1]]\\
	&=E[E[\bar{Q}^{2c}\{\bS_{\bar{l}_2^{\,\rm f}},J_1,\pi^{1t}(\bS_{\bar{J}_1},J_1),\check{\pi}^{2c}(\bS_{\bar{L}_2},J_1,\pi^{1t}(\bS_{\bar{J}_1},J_1))\}\\
	&\qquad\qquad\qquad-\bar{Q}^{2c}\{\bS_{\bar{l}_2^{\,\rm f}},J_1,\pi^{1t}(\bS_{\bar{J}_1},J_1),\pi^{2c}(\bS_{\bar{L}_2},J_1,\pi^{1t}(\bS_{\bar{J}_1},J_1))\}\mid \bS_{L_2}]]\\
	&=E[Q^{2c}\{\bS_{\bar{L}_2},J_1,\pi^{1t}(\bS_{\bar{J}_1},J_1),\check{\pi}^{2c}(\bS_{\bar{L}_2},J_1,\pi^{1t}(\bS_{\bar{J}_1},J_1))\}\\
	&\qquad\qquad\qquad-Q^{2c}\{\bS_{\bar{L}_2},J_1,\pi^{1t}(\bS_{\bar{J}_1},J_1),\pi^{2c}(\bS_{\bar{L}_2},J_1,\pi^{1t}(\bS_{\bar{J}_1},J_1))\}]\ge 0,
	\end{align*}
	where the first and third equalities are by the tower rule, the second equality holds, since condtional on $\bS_1$, the random variable inside the conditional expectation is a function
	measurable with respect to $\bS_{\bar{l}_2^{\rm f}}(\pi^{1t}(\bS_{\bar{\pi}^{1c}}))$ and is independent of $A_1$ by sequential ignorability of Assumption \ref{ass:identification}. By tower rule and the definition of $\check{\pi}^{1t}$,
	\begin{align*}
		G^{1t}
		&=E[\bar{Q}^{1t}\{\bS_1,J_1,\check{\pi}^{1t}(\bS_{\bar{J}_1},J_1)\}-\bar{Q}^{1t}\{\bS_1,J_1,\pi^{1t}(\bS_{\bar{J}_1},J_1)\}]\\
		&=E[E[\bar{Q}^{1t}\{\bS_1,J_1,\check{\pi}^{1t}(\bS_{\bar{J}_1},J_1)\}-\bar{Q}^{1t}\{\bS_1,J_1,\pi^{1t}(\bS_{\bar{J}_1},J_1)\}\mid \bS_{\bar{J}_1}]]\\
		&=E[Q^{1t}\{\bS_{\bar{J}_1},J_1,\check{\pi}^{1t}(\bS_{\bar{J}_1},J_1)\}-Q^{1t}\{\bS_{\bar{J}_1},J_1,\pi^{1t}(\bS_{\bar{J}_1},J_1)\}]\ge 0.
	\end{align*}

	On the other hand, for the regime $\{\check\pi^{1c},\check\pi^{1t},\check\pi^{2c},\check\pi^{2t}\}$, we use $\check{J}_1=\check{\pi}^{1c}(\bS_{l_1})$ and $\check{J}_2=\check{\pi}^{2c}(\bar{\bS}_{L_2},\check{J}_1,\check{\pi}^{1c}(\bS_{l_1}))$ denote the index set selected by the regime at the first and second stage. Then the corresponding expected profit is 
	\begin{align*}
		&{\rm Profit}(\check\pi^{1c},\check\pi^{1t},\check\pi^{2c},\check\pi^{2t})\\
		&=E[E[E[E[\bar{Q}^{2t}\{\bar{\bS}_2,\check{\pi}^{1t}(\bS_{\bar{\check{J}}_1},\check{J}_1),\check{\pi}^{2t}(\bS_{\bar{\check{J}}_2},\check{J_1},\check{\pi}^{1t}(\bS_{\bar{\check{J}}_1},\check{J}_1),\check{J}_2)\}-C^{2c}(\check{\pi}^{2c})\\
		&\qquad\qquad\mid \bS_{\bar{l}_2^{\,\rm f}},A_1=\check{\pi}^{1t}(\bS_{\bar{\check{J}}_1},\check{J}_1)]-C^{1t}(\check{\pi}^{1t})\mid \bS_1,A_1=\check{\pi}^{1t}(\bS_{\bar{\check{J}}_1},\check{J}_1)]-C^{1c}(\check{\pi}^{1c})\mid \bS_{l_1}]]\\
		&=E[E[E[\bar{Q}^{2c}\{\bS_{\bar{l}_2^{\,\rm f}},\check{J}_1,\check{\pi}^{1t}(\bS_{\bar{\check{J}}_1},\check{J}_1),\check{\pi}^{2c}(\bar{\bS}_{L_2},\check{J}_1,\check{\pi}^{1t}(\bS_{\bar{\check{J}}_1},\check{J}_1))\}-C^{1t}(\check{\pi}^{1t})\\
		&\qquad\qquad\mid \bS_1,A_1=\check{\pi}^{1t}(\bS_{\bar{\check{J}}_1},\check{J}_1)]-C^{1c}(\check{\pi}^{1c})\mid \bS_{l_1}]]\\
		&=E[E[\bar{Q}\{\bS_1,\check{J}_1,\check{\pi}^{1t}(\bS_{\bar{\check{J}}_1},\check{J}_1)\}-C^{1c}(\check{\pi}^{1c})\mid \bS_{l_1}]]\\
		&=E[Q^{1c}\{\bS_{l_1},\check{\pi}^{1c}(\bS_{l_1})\}].
	\end{align*}
	Consequently, by the definition of $\check{\pi}^{1c}$,
	\begin{align*}
		&{\rm Profit}(\check\pi^{1c},\check\pi^{1t},\check\pi^{2c},\check\pi^{2t})-{\rm Profit}(\pi^{1c},\pi^{1t},\pi^{2c},\pi^{2t})\\
		=\ &E[Q^{1c}\{\bS_{l_1},\check{\pi}^{1c}(\bS_{l_1})\}-Q^{1c}\{\bS_{l_1},\pi^{1c}(\bS_{l_1})\}]+G^{1t}+G^{2c}+G^{2t}\ge 0,
	\end{align*}
	which shows the optimality of $\{\check\pi^{1c},\check\pi^{1t},\check\pi^{2c},\check\pi^{2t}\}$.
	\end{proof}
	
	The following theorem provides the full version of Theorem \ref{thm:asymptotic_normality}, including the explicit form of the asymptotic variance.
	\begin{theorem}\label{thm:asymptotic_normality2}
		Let Assumptions \ref{ass:uni_bound_}-\ref{ass:margin2} hold with some $r>1$. Let $\bar{\bX}_2=(\bar{\bS}_{2},A_1)$, $\bX_{\bar{l}_2^{\,\rm f}}=(\bS_{\bar{l}_2^{\,\rm f}},A_1)$, and define
		\begin{align*}
		E_{j_1j_2}&=I(\bS_{\bar{j}_2}^\top\balpha^*_{j_1A_1j_2}> 0)-I(\bS_{\bar{j}_{2}^{\rm f}}^\top\balpha^*_{j_1A_1j_{2}^{\rm f}}> 0),\;\; \widetilde{E}_{j_1j_{2}}=I(\bS_{\bar{j}_2}^\top\balpha^*_{j_1A_{1}j_2}> 0)-A_{2},\\	\mathcal{B}_{j_1j_2}^*&=\{\bS_{\bar{l}_2}^\top\bbeta^*_{j_1A_1j_2}> \bS_{\bar{l}_2}^\top\bbeta^*_{j_1A_1j_2^\prime}, j_2^{\prime}\neq j_2\}.
		\end{align*}
		Then, for each $j_1\in\mathcal J_1$, $a_1\in\{0,1\}$, and $j_2\in\mathcal J_2$, as $n\to\infty$, the following results hold:
		
		(a) $\sqrt{n}(\widehat{\balpha}-\bar\balpha^*)\xrightarrow{\rm d}\mathcal{N}(0,\bM^{\bar\balpha})$ and $\sqrt{n}(\widehat{\balpha}_{j_1a_1j_2}-\balpha_{j_1a_1j_2}^*)\xrightarrow{\rm d}\mathcal{N}(0,\bM_{j_1a_1j_2}^{\balpha})$, where
		\begin{align*}
		\bM^{\bar\balpha}&=(\bV^{\balpha})^{-1}
		E(\bQ^{\bar\balpha}\bQ^{\bar\balpha\top})
		(\bV^{\balpha})^{-1},\;\; \bV^{\bar\balpha}=E\{{\rm var}(A_{2}\mid \bar{\bX}_{2})\bar{\bX}_{2}\bar{\bX}_{2}^\top\},\\
		\bQ^{\bar\balpha}&=\{A_{2}-g_{2}(\bar{\bX}_{2})\}\bar{\bX}_{2}[Y-f_{2}(\bar{\bX}_{2})-\{A_{2}-g_{2}(\bar{\bX}_{2})\}\{\bar{\bX}_{2}^\top\bar\balpha^*+C^{2t}(1)-C^{2t}(0)\}],\\
			\bM_{j_1a_1j_2}^{\balpha}&=E(\bS_{\bar{j}_2}\bS_{\bar{j}_2}^\top)^{-1}E(\bQ_{j_1a_1j_2}^{\balpha}\bQ_{j_1a_1j_2}^{\balpha\top})E(\bS_{\bar{j}_2}\bS_{\bar{j}_2}^\top)^{-1},\\
			\bQ_{j_1a_1j_2}^{\balpha}&=\bS_{\bar{j}_2}\{(\bar{\bS}_{2},a_1)^\top\bar\balpha^*-\bS_{\bar{J}_2}^\top\balpha^*_{j_1a_1j_2}\}+E\{\bS_{\bar{j}_2}(\bar{\bS}_{2i},a_1)\}(\bV^{\bar\balpha})^{-1}\bQ^{\bar\balpha};
		\end{align*}
	
	(b) $\sqrt{n}(\widetilde{\bbeta}_{j_1j_2}-\bar\bbeta^{*}_{j_1j_2})\xrightarrow{\rm d}\mathcal{N}(0,\bM_{j_1j_2}^{\bar\bbeta})$ and $\sqrt{n}(\widehat{\bbeta}_{j_1a_1j_2}-\bbeta_{j_1a_1j_2}^*)\xrightarrow{\rm d}\mathcal{N}(0,\bM_{j_1a_1j_2}^{\bbeta})$, where
	\begin{align*}
		\bM_{j_1j_2}^{\bar\bbeta}&=E(\bX_{\bar{l}_2^{\,\rm f}}\bX_{\bar{l}_2^{\,\rm f}}^\top)^{-1}E(\bQ^{\bar\bbeta}_{j_1j_2}\bQ^{\bar\bbeta\top}_{j_1j_2})E(\bX_{\bar{l}_2^{\,\rm f}}\bX_{\bar{l}_2^{\,\rm f}}^\top)^{-1},\\
		\bQ^{\bar\bbeta}_{j_1j_2}&=\bX_{\bar{l}_2^{\,\rm f}}(Y_{j_1j_2}^{2c*}-\bX_{\bar{l}_2^{\,\rm f}}^\top\bbeta_{j_1j_2}^*)+E[E_{j_1j_2}\bX_{\bar{l}_2^{\,\rm f}}\bar{\bX}_2](\bV^{\bar\balpha})^{-1}\bQ^{\bar\balpha},\\
		\bM_{j_1a_1j_2}^{\bbeta}&=E(\bS_{\bar{l}_2}\bS_{\bar{l}_2}^\top)^{-1}E(\bQ_{j_1a_1j_2}^{\bbeta}\bQ_{j_1a_1j_2}^{\bbeta\top})E(\bS_{\bar{l}_2}\bS_{\bar{l}_2}^\top)^{-1},\\
		\bQ_{j_1a_1j_2}^{\bbeta}&=\bS_{\bar{l}_2}\{(\bS_{\bar{l}_2^{\,\rm f}},a_1)^\top\bar\bbeta_{j_1j_2}^{*}-\bS_{\bar{l}_2}^\top \bbeta_{j_1a_1j_2}^*\}+E\{\bS_{\bar{l}_2}(\bS_{\bar{l}_2^{\,\rm f}i},a_1)\}E(\bX_{\bar{l}_2^{\,\rm f}}\bX_{\bar{l}_2^{\,\rm f}}^\top)^{-1}\bQ^{\bar\bbeta}_{j_1j_2};
	\end{align*}
	
	(c) $\sqrt{n}(\widetilde{\bgamma}_{j_1}-\bar\bgamma_{j_1}^*)\xrightarrow{\rm d}\mathcal{N}(0,\bM_{j_1}^{\bar\bgamma})$ and $\sqrt{n}(\widehat{\bgamma}_{j_1}-\bgamma_{j_1}^{*})\xrightarrow{\rm d}\mathcal{N}(0,\bM_{j_1}^{\bgamma})$, where
	\begin{align*}
		\bM_{j_1}^{\bar\bgamma}&=(\bV^{\bar\bgamma})^{-1}E(\bQ_{j_1}^{\bar\bgamma}\bQ_{j_1}^{\bar\bgamma\top})(\bV^{\bar\bgamma})^{-1},\;\; \bV^{\bar\bgamma}=E\{{\rm var}(A_{1}\mid \bS_{1})\bS_{1}\bS_{1}^\top\},\\
		\bQ_{j_1}^{\bar\bgamma}&=\{A_{1}-g_{1}(\bS_{1})\}\bS_{1}[Y_{j_1}^{1t*}-f_{j_1}(\bS_1)-\{A_{1}-g_{1}(\bS_{1})\}\bS_{1}^\top\bar\bgamma_{j_1}^{*}]\\
		&\quad+E[\{A_{1}-g_{1}(\bS_{1})\} \widetilde{E}_{j_1j_{2}^{\rm f}}\bS_{1}\bar{\bX}_{2}^\top](\bV^{\bar\balpha})^{-1}\bQ^{\bar\balpha}\\
		&\quad+\sum\nolimits_{j_2\in\mathcal{J}_2}E\left[\{A_{1}-g_{1}(\bS_{1})\}I(\mathcal{B}_{j_1j_2}^*)\bS_{1}\bX_{\bar{l}_2^{\,\rm f}}^\top\right]E(\bX_{\bar{l}_2^{\,\rm f}}\bX_{\bar{l}_2^{\,\rm f}}^\top)^{-1}\bQ_{j_1j_2}^{\bar\bbeta},\\
		\bM_{j_1}^{\bgamma}&=E(\bS_{\bar{j}_1}\bS_{\bar{j}_1}^\top)^{-1}E(\bQ_{j_1}^{\bgamma}\bQ_{j_1}^{\bgamma\top})E(\bS_{\bar{j}_1}\bS_{\bar{j}_1}^\top)^{-1},\\ \bQ_{j_1}^{\bgamma}&=\bS_{\bar{j}_1}(\bS_1^\top\bar\bgamma^{*}_{j_1}-\bS_{\bar{j}_1i}^\top\bgamma^*_{j_1})+E\{\bS_{\bar{j}_1}\bS_{1}^\top\}(\bV^{\bar\bgamma})^{-1}\bQ^{\bar\gamma};
	\end{align*}
	
	(d) $\sqrt{n}(\widehat{\bdelta}_{j_1}-\bdelta^*_{j_1})\xrightarrow{\rm d}\mathcal{N}(0,\bM_{j_1}^{\bdelta})$, where
	\begin{align*}
		\bM_{j_1}^{\bdelta}&=E(\bS_{l_1}\bS_{l_1}^\top)^{-1}E(\bQ^{\bdelta}_{j_1}\bQ^{\bdelta\top}_{j_1})E(\bS_{l_1}\bS_{l_1}^\top)^{-1},\\
		\bQ^{\bdelta}_{j_1}&=\bS_{l_1i}(Y_{j_1}^{1c*}-Y_{j_1^{\rm f}}^{1c*}-\bS_{l_1}^\top\bdelta_{j_1}^*)+E\{(\widetilde{E}_{j_1j_2^{\rm f}}-\widetilde{E}_{j_1^{\rm f}j_2^{\rm f}})\bS_{l_1}\bar{\bX}_2^\top\}(\bV^{\bar\balpha})^{-1}\bQ^{\bar\balpha}\\
		&\qquad +\sum\nolimits_{j_2\in\mathcal{J}_2}E\left[I(\mathcal{B}_{j_1j_2}^*)\bS_{l_1}\bX_{\bar{l}_2^{\,\rm f}}^\top\right]E(\bX_{\bar{l}_2^{\,\rm f}}\bX_{\bar{l}_2^{\,\rm f}}^\top)^{-1}\bQ_{j_1j_2}^{\bar\bbeta}\\
		&\qquad-\sum\nolimits_{j_2\in\mathcal{J}_2}E\left[I(\mathcal{B}_{j_1^{\rm f}j_2}^*)\bS_{l_1}\bX_{\bar{l}_2^{\,\rm f}}^\top\right]E(\bX_{\bar{l}_2^{\,\rm f}}\bX_{\bar{l}_2^{\,\rm f}}^\top)^{-1}\bQ_{j_1^{\rm f}j_2}^{\bar\bbeta}\\
		&\qquad+E[\{I(\bS_{\bar{j}_1}^\top\bgamma_{j_1}^*>0)-A_1\}\bS_{l_1}\bS_1^\top](\bV^{\bar\bgamma})^{-1}\bQ_{j_1}^{\bar\bgamma}\\
		&\qquad-E[\{I(\bS_{1}^\top\bgamma_{j_1^{\rm f}}^*>0)-A_1\}\bS_{l_1}\bS_1^\top](\bV^{\bar\bgamma})^{-1}\bQ_{j_1^{\rm f}}^{\bar\bgamma}.
	\end{align*}
	
	Moreover, under Assumptions \ref{ass:uni_bound_}-\ref{ass:margin2}, the following results hold as long as $r\geq1$:
	\begin{align*}
	&\|\widehat{\balpha}-\bar\balpha^*\|_2=O_p(1/\sqrt{n}),\;\; \|\widehat{\balpha}_{j_1a_1j_2}-\balpha_{j_1a_1j_2}^*\|_2=O_p(1/\sqrt{n}),\\
	&\|\widetilde{\bbeta}_{j_1j_2}-\bar\bbeta_{j_1j_2}^{*}\|_2=O_p(1/\sqrt{n}),\;\; \|\widehat{\bbeta}_{j_1a_1j_2}-\bbeta_{j_1a_1j_2}^*\|_2=O_p(1/\sqrt{n}),\\
	& \|\widetilde{\bgamma}_{j_1}-\bar\bgamma_{j_1}^{*}\|_2=O_p(1/\sqrt{n}),\;\;\|\widehat{\bgamma}_{j_1}-\bgamma_{j_1}^*\|_2=O_p(1/\sqrt{n}),\;\;\|\widehat{\bdelta}_{j_1}-\bdelta_{j_1}^*\|_2=O_p(1/\sqrt{n}).
	\end{align*} 
	\end{theorem}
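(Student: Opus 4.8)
The plan is to establish joint asymptotic linearity of all seven estimators by processing them in the backward order in which they are defined, treating each as a $Z$-estimator whose estimating function depends on the objects produced at earlier steps. For each I would prove a representation $\sqrt{n}(\widehat{\theta}-\theta^*)=A^{-1}n^{-1/2}\sum_i\bQ_i+o_p(1)$, where $A$ is the associated Gram matrix (nonsingular by Assumption \ref{ass:PD_matrix_} and the implied full rank of the reduced designs) and $\bQ$ is the score displayed in the statement; the advertised covariance is then $A^{-1}E(\bQ\bQ^\top)A^{-1}$ and the central limit theorem, with moments furnished by Assumption \ref{ass:uni_bound_}, closes each part. The two-piece form of every $\bQ$ reflects a direct (oracle) contribution plus propagated contributions that encode, through the chain rule, the first-order effect of substituting the previously estimated quantities.

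For the two base R-learners $\widehat\balpha$ and $\widetilde\bgamma_{j_1}$ I would invoke the standard double/debiased machine-learning argument. The R-learner moment is Neyman-orthogonal in its nuisances $(f_2,g_2)$ and $(f_{j_1}^*,g_1)$, so the cross-fitting built into Algorithm \ref{alg:BQL2}, together with the individual and product rate bounds of Assumption \ref{ass:nuisance1_}, renders the nuisance-estimation remainder $o_p(n^{-1/2})$ and leaves the oracle scores $\bQ^{\bar\balpha}$ and $\bQ^{\bar\bgamma}_{j_1}$ as the leading terms. Each nested regression---$\widehat\balpha_{j_1a_1j_2}$, $\widehat\bbeta_{j_1a_1j_2}$, and $\widehat\bgamma_{j_1}$---is an ordinary least squares whose response is \emph{linear} in an already-linearized estimate and contains no plug-in classifier, so its expansion is exact; substituting the upstream influence function directly yields the nested scores in parts (a)--(c) and, in particular, the $\sqrt{n}$-consistency of $\widehat\balpha_{j_1a_1j_2}$, $\widehat\bbeta_{j_1a_1j_2}$, and $\widehat\bgamma_{j_1}$ that later steps rely on.

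The step needing the most care, and the crux of the argument, concerns the estimating equations whose pseudo-outcomes contain non-smooth plug-in rules: the indicators $I(\bS_{\bar{j}_2}^\top\widehat\balpha_{j_1a_1j_2}>0)$ and $I(\bS_{\bar{j}_1}^\top\widehat\bgamma_{j_1}>0)$ and the selector $\arg\max_{j_2}\bS_{\bar{l}_2}^\top\widehat\bbeta_{j_1a_1j_2}$, which drive $\widetilde\bbeta_{j_1j_2}$, $\widetilde\bgamma_{j_1}$ (through $\widehat{Y}^{1t}_{j_1}$), and $\widehat\bdelta_{j_1}$. For each I would split the pseudo-outcome error into a \emph{smooth} part, in which an estimated multiplier such as $(\bar{\bS}_2,A_1)^\top\widehat\balpha$ enters linearly and hence propagates as a first-order term (the summands $E[E_{j_1j_2}\bX_{\bar{l}_2^{\,\rm f}}\bar{\bX}_2](\bV^{\bar\balpha})^{-1}\bQ^{\bar\balpha}$ and their analogues), and a \emph{flip} part of the form $\bar{\Delta}^{2t*}(\bar{\bS}_2,A_1)\{I(\bS_{\bar{j}_2}^\top\widehat\balpha_{j_1a_1j_2}>0)-I(\bS_{\bar{j}_2}^\top\balpha^*_{j_1a_1j_2}>0)\}$. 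Since $\widehat\balpha_{j_1a_1j_2}$ is $\sqrt{n}$-consistent, the classifier flips only within $O_p(n^{-1/2})$ of the boundary $\bS_{\bar{j}_2}^\top\balpha^*_{j_1a_1j_2}=0$, so the margin bound $P(|\bS_{\bar{j}_2}^\top\balpha^*_{j_1a_1j_2}|\le t)\le c_3 t^r$ of Assumption \ref{ass:margin2} controls the flip probability by $O_p(n^{-r/2})$; as the multiplier is bounded, the bias this flip part contributes to the estimating equation is $O_p(n^{-r/2})$, while its centered empirical-process part is $o_p(n^{-1/2})$ by an equicontinuity argument over the VC class of half-space indicators (whose flip has vanishing $L_2$-norm). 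The $\arg\max$ selector is handled by the same localization using the margin bound on the $\bbeta^*$-gaps.

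It remains to collect orders. The flip contributions are $O_p(n^{-r/2})$: when $r>1$ they are $o_p(n^{-1/2})$ and hence absent from the influence functions, which is exactly why the limiting laws in parts (a)--(d) require $r>1$ and why those $\bQ$'s carry only smooth propagation terms; when only $r\ge1$ is assumed the same bound is merely $O_p(n^{-1/2})$, so each estimator stays $\sqrt{n}$-consistent---giving the $\|\cdot\|_2=O_p(1/\sqrt{n})$ conclusions of the final display---without a central limit theorem. Because the targets are defined as the population projections themselves, no misspecification control (Assumption \ref{ass:model_misspecification}) enters, consistent with its absence from the hypotheses. The principal obstacle throughout is the bookkeeping: propagating and summing the upstream influence functions across the four nested stages while checking that every non-smooth plug-in contributes only at the negligible $O_p(n^{-r/2})$ order.
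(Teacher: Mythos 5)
Your proposal matches the paper's proof in all essentials: backward Z-estimator linearization with propagated influence terms, Neyman-orthogonal R-learner plus cross-fitting arguments for $\widehat{\balpha}$ and $\widetilde{\bgamma}_{j_1}$ under the rate conditions of Assumption \ref{ass:nuisance1_}, exact OLS expansions for the nested regressions, and control of the indicator/argmax flip terms via upstream $\sqrt{n}$-consistency plus the margin condition, yielding $O_p(n^{-r/2})$ flip contributions that drop out of the influence functions when $r>1$ and give only $\sqrt{n}$-consistency when $r\ge 1$ --- exactly the paper's Steps 1--4. The only (immaterial) difference is that you split each flip sum into a bias term plus a centered empirical process controlled by VC equicontinuity, whereas the paper bounds the uncentered flip sum directly by the empirical frequency of the margin event $I\{|\bS_{\bar{j}_2i}^\top\balpha^*_{j_1A_{1i}j_2}|\le C n^{-1/2}\}$.
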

	
	Before proving Theorem \ref{thm:asymptotic_normality2}, we first introduce the following lemma.
	
	\begin{lemma}[Lemma 2 of \cite{ertefaie2021robust}]\label{lemma:matrix_norm}
		Let $\bV_n$ and $\widehat{\bV}_n$ be two sequences of square matrices and let $\|\cdot\|$ be any proper matrix norm. Suppose there exists $n_0<\infty$ such that (i) $\bV_n^{-1}$ and $\bV_n$ exist with $0<C_{V1}\le \|\bV_n^{-1}\|\le C_{V2}<\infty$ and (ii) $\|\widehat{\bV}_n-\bV_n\|\le (2\|\bV_n^{-1}\|)^{-1}$. Then $$\|\widehat{\bV}_n-\bV_n^{-1}\|\le 2C_{V2}^2\|\widehat{\bV}_n-\bV_n\|.$$
	\end{lemma}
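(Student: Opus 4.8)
The plan is to prove this by the standard Neumann-series (resolvent) perturbation argument, reading the conclusion as the bound on $\|\widehat{\bV}_n^{-1}-\bV_n^{-1}\|$, i.e.\ the perturbation of the inverses. The whole argument is elementary matrix analysis; the two hypotheses enter only to guarantee that $\widehat{\bV}_n$ is invertible and to supply a constant-factor slack.

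First I would set $\mathbf{E}:=\widehat{\bV}_n-\bV_n$ and factor $\widehat{\bV}_n=\bV_n(\bI+\bV_n^{-1}\mathbf{E})$. Submultiplicativity of the (proper) matrix norm together with hypothesis (ii) gives $\|\bV_n^{-1}\mathbf{E}\|\le\|\bV_n^{-1}\|\,\|\mathbf{E}\|\le\|\bV_n^{-1}\|\,(2\|\bV_n^{-1}\|)^{-1}=1/2<1$. Since $\|\bV_n^{-1}\mathbf{E}\|<1$, the Neumann series shows that $\bI+\bV_n^{-1}\mathbf{E}$ is invertible with $\|(\bI+\bV_n^{-1}\mathbf{E})^{-1}\|\le(1-\|\bV_n^{-1}\mathbf{E}\|)^{-1}\le 2$. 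Consequently $\widehat{\bV}_n$ is invertible (so the statement is non-vacuous), and $\widehat{\bV}_n^{-1}=(\bI+\bV_n^{-1}\mathbf{E})^{-1}\bV_n^{-1}$, whence $\|\widehat{\bV}_n^{-1}\|\le 2\|\bV_n^{-1}\|\le 2C_{V2}$ using the upper bound in hypothesis (i).

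Next I would invoke the resolvent identity $\widehat{\bV}_n^{-1}-\bV_n^{-1}=-\widehat{\bV}_n^{-1}(\widehat{\bV}_n-\bV_n)\bV_n^{-1}=-\widehat{\bV}_n^{-1}\mathbf{E}\bV_n^{-1}$ and take norms, giving $\|\widehat{\bV}_n^{-1}-\bV_n^{-1}\|\le\|\widehat{\bV}_n^{-1}\|\,\|\mathbf{E}\|\,\|\bV_n^{-1}\|\le 2C_{V2}\cdot\|\mathbf{E}\|\cdot C_{V2}=2C_{V2}^2\,\|\widehat{\bV}_n-\bV_n\|$, which is the claim. There is no genuine obstacle here: the only nontrivial point is that hypothesis (ii) is exactly what forces $\|\bV_n^{-1}\mathbf{E}\|\le 1/2$, controlling both the invertibility of $\widehat{\bV}_n$ and the factor $2$ in the Neumann bound, while the bound $\|\bV_n^{-1}\|\le C_{V2}$ converts the $\|\bV_n^{-1}\|^2$ prefactor into the stated $2C_{V2}^2$; note that the lower bound $C_{V1}$ plays no role in this direction.
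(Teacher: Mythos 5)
The paper never proves this lemma: it is imported verbatim as Lemma~2 of \cite{ertefaie2021robust} and used as a black box, so there is no in-paper proof to compare yours against. Judged on its own merits, your argument is the standard Banach--Neumann perturbation proof and is essentially correct; in particular you correctly read through the typo in the displayed conclusion (it should be $\|\widehat{\bV}_n^{-1}-\bV_n^{-1}\|$, the difference of inverses), which is exactly how the paper invokes the lemma, e.g.\ to pass from $\|\widehat{\bV}-\bV\|_\infty=o_p(n^{-1/2})$ to $\|\widehat{\bV}^{-1}-\bV^{-1}\|_\infty=o_p(n^{-1/2})$ in the proof of Theorem~\ref{thm:asymptotic_normality2}. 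Your observations that (ii) is what delivers both invertibility and the factor $2$, and that $C_{V1}$ is never used, are also right.

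One caveat about generality. Your Neumann-series step $\|(\bI+\bV_n^{-1}\mathbf{E})^{-1}\|\le(1-\|\bV_n^{-1}\mathbf{E}\|)^{-1}\le 2$ implicitly uses $\|\bI\|=1$, which holds for induced operator norms (including the $\|\cdot\|_\infty$ norm the paper actually applies the lemma with) but not for every submultiplicative norm: for the Frobenius norm $\|\bI\|=\sqrt{d}$, and your bound degrades from $2$ to roughly $\|\bI\|+1$, whereas the lemma claims the constant $2$ for ``any proper matrix norm.'' A one-line rearrangement preserves both your structure and the stated constant for arbitrary submultiplicative norms: first get invertibility of $\widehat{\bV}_n$ from the spectral-radius bound $\rho(\bV_n^{-1}\mathbf{E})\le\|\bV_n^{-1}\mathbf{E}\|\le 1/2<1$ (so $\bI+\bV_n^{-1}\mathbf{E}$ has no zero eigenvalue); then, writing $x=\|\widehat{\bV}_n^{-1}-\bV_n^{-1}\|$, $a=\|\bV_n^{-1}\|$, $e=\|\mathbf{E}\|$, combine your resolvent identity with the triangle inequality $\|\widehat{\bV}_n^{-1}\|\le a+x$ to get $x\le(a+x)ea$; since $ea\le 1/2$ this rearranges to $x\le ea^2/(1-ea)\le 2a^2e\le 2C_{V2}^2\|\widehat{\bV}_n-\bV_n\|$, with no appeal to $\|\bI\|$ at all.
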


	\begin{proof}[Proof of Theorem \ref{thm:asymptotic_normality2}]
	In the following, we establish the asymptotic normality of the estimators $\widetilde{\bbeta}_{j_1j_2}$, $\widehat{\bbeta}_{j_1a_1j_2}$, $\widetilde{\bgamma}_{j_1}$, $\widehat{\bgamma}_{j_1}$, and $\widehat{\bdelta}_{j_1}$ under Assumption \ref{ass:margin2} with $r>1$, and also prove their consistency under Assumption \ref{ass:margin2} with $r\ge 1$.
	
	\noindent\textbf{Step 1: Asymptotic Normality of $\widehat{\balpha}$ and $\widehat{\balpha}_{j_1a_1j_2}$} 
	
	We first prove the asymptotic normality of $\widehat{\balpha}$, denote $\bar{\bX}_2=(\bar{\bS}_2,A_1)$ and $\Delta^{C,2t}=C^{2t}(1)-C^{2t}(0)$, since $Y_i-f_2(\bar{\bX}_{2i})=\{A_{2i}-g_2(\bar{\bX}_{2i})\}\{\Delta^{2t}(\bar{\bX}_{2i})+\Delta^{C,2t}\}+\varepsilon_{2ti}$, the parameter of interest $\balpha^{*}$ can be expressed as 
	\begin{align*}
	\bar\balpha^*&=\mathop{\arg\min}_{\balpha}E\left([Y_i-f_{2}(\bar{\bX}_{2i})-\{A_{2i}-g_{2}(\bar{\bX}_{2i})\}\{\bar{\bX}_{2i}^\top\balpha+\Delta^{C,2t}\}]^2\right)\\
	&=\mathop{\arg\min}_{\balpha} E\left[\{A_{2i}-g_{2}(\bar{\bX}_{2i})\}^2\{\bar\Delta^{2t}(\bar{\bX}_{2i})-\bar{\bX}_{2i}^\top\balpha\}^2\right].
	\end{align*}
	We define
	\begin{align*}
		\breve{\balpha}&=\mathop{\arg\min}_{\balpha} n^{-1}\sum_{k\in [K]}\sum_{i\in\mathcal{I}_k}\{A_{2i}-\widehat{g}_{2}^{-k}(\bar{\bX}_{2i})\}^2\{\bar\Delta^{2t}(\bar{\bX}_{2i})-\bar{\bX}_{2i}^\top\balpha\}^2\\
		\check{\balpha}&=\mathop{\arg\min}_{\balpha} n^{-1}\sum_{i\in [n]}\{A_{2i}-g_2(\bar{\bX}_{2i})\}^2\{\bar\Delta^{2t}(\bar{\bX}_{2i})-\bar{\bX}_{2i}^\top\balpha\}^2.
	\end{align*}
	We next control the terms $\sqrt{n}(\widetilde{\balpha}-\breve{\balpha})$, $\sqrt{n}(\breve{\balpha}-\check{\balpha})$, as well as $\sqrt{n}(\check{\balpha}-\bar\balpha^{*})$ in order to bound $\sqrt{n}(\widetilde{\balpha}-\bar\balpha^*)$. We denote 
	\begin{align*}
		\widehat{\bV}=n^{-1}\sum_{k\in[K]}\sum_{i\in\mathcal{I}_k}\{A_{2i}-\widehat{g}_{2}^{-k}(\bar{\bX}_{2i})\}^2\bar{\bX}_{2i}\bar{\bX}_{2i}^\top,\;\; 	\bV=n^{-1}\sum_{i\in[n]}\{A_{2i}-g_{2}(\bar{\bX}_{2i})\}^2\bar{\bX}_{2i}\bar{\bX}_{2i}^\top.
	\end{align*}
	\textbf{Control $\sqrt{n}(\widetilde{\balpha}-\breve{\balpha})$.} Note that
	\begin{align*}
		\sqrt{n}(\widetilde{\balpha}-\breve{\balpha})&=\widehat{\bV}^{-1}\dfrac{1}{\sqrt{n}}\sum_{k}\sum_{i\in\mathcal{I}_k}\bar{\bX}_{2i}\{A_{2i}-\widehat{g}_{2}^{-k}(\bar{\bX}_{2i})\}\\
		&\qquad\times\left[Y_i-\widehat{f}_{2}^{-k}(\bar{\bX}_{2i})-\{A_{2i}-\widehat{g}_{2}^{-k}(\bar{\bX}_{2i})\}\{\bar{\Delta}^{2t}(\bar{\bX}_{2i})+\Delta^{C,2t}\}\right]\\
		&=\widehat{\bV}^{-1}(B_1+B_2+\cdots+B_6),
	\end{align*}
	where
	\begin{align*}
		B_{1}&=(1/\sqrt{n})\sum\nolimits_{i\in[n]}\bar{\bX}_{2i}\{A_{2i}-g_{2}(\bar{\bX}_{2i})\}\varepsilon_{2ti},\\
		B_{2}&=(1/\sqrt{n})\sum\nolimits_k\sum\nolimits_{i\in\mathcal{I}_k}\bar{\bX}_{2i}\{A_{2i}-g_{2}(\bar{\bX}_{2i})\}\{f_{2}(\bar{\bX}_{2i})-\widehat{f}_{2}^{-k}(\bar{\bX}_{2i})\},\\
		B_{3}&=(1/\sqrt{n})\sum\nolimits_k\sum\nolimits_{i\in\mathcal{I}_k}\bar{\bX}_{2i}\{A_{2i}-g_{2}(\bar{\bX}_{2i})\}\{\widehat{g}_{2}^{-k}(\bar{\bX}_{2i})-g_{2}(\bar{\bX}_{2i})\}\{\bar\Delta^{2t}(\bar{\bX}_{2i})+\Delta^{C,2t}\},\\
		B_{4}&=(1/\sqrt{n})\sum\nolimits_k\sum\nolimits_{i\in\mathcal{I}_k}\bar{\bX}_{2i}\{g_{2}(\bar{\bX}_{2i})-\widehat{g}_{2}^{-k}(\bar{\bX}_{2i})\}\varepsilon_{2ti},\\
		B_{5}&=(1/\sqrt{n})\sum\nolimits_k\sum\nolimits_{i\in\mathcal{I}_k}\bar{\bX}_{2i}\{g_{2}(\bar{\bX}_{2i})-\widehat{g}_{2}^{-k}(\bar{\bX}_{2i})\}\{f_{2}(\bar{\bX}_{2i})-\widehat{f}_{2}^{-k}(\bar{\bX}_{2i})\},\\
		B_{6}&=-(1/\sqrt{n})\sum\nolimits_k\sum\nolimits_{i\in\mathcal{I}_k}\bar{\bX}_{2i}\{\widehat{g}_{2}^{-k}(\bar{\bX}_{2i})-g_{2}(\bar{\bX}_{2i})\}^2\{\bar\Delta^{2t}(\bar{\bX}_{2i})+\Delta^{C,2t}\}.
	\end{align*}
	We will show that $B_2,B_3,\cdots,B_6$ are all $o_p(1)$ terms. For eack $k\in [K]$. let
	$$
	B_{2k}=|\mathcal{I}_k|^{-1}\sum_{i\in\mathcal{I}_k}\bar{\bX}_{2i}\{A_{2i}-g_{2}(\bar{\bX}_{2i})\}\{f_{2}(\bar{\bX}_{2i})-\widehat{f}_{2}^{-k}(\bar{\bX}_{2i})\}.
	$$
	We denote the dimension of $\bar{\bX}_{2i}$ as $\bar{d}$, for each $o\in [\bar{d}]$, consider the $o$-th element of $B_{2k}$, by Assumption \ref{ass:uni_bound_}, we obtain
	\begin{align*}
		E(B_{2ko}\mid \mathcal{D}_{-k})&=E\{B_{2ko}\mid \mathcal{D}_{-k},(\bar{\bX}_{2i})_{i\in\mathcal{I}_k}\}=0,\\
		{\rm Var}(B_{2ko}\mid \mathcal{D}_{-k})&=O\left\{n^{-1}\cdot \|g_{2}(\cdot)-\widehat{g}_{2}^{-k}(\cdot)\|_{P,2}^2\right\}.
	\end{align*}
	Together with Assumption \ref{ass:nuisance1_} and Chebyshev’s inequality, we obtain $B_{2ko}=O_p(n^{-1/2}\cdot \|g_{2}(\cdot)-\widehat{g}_{2}^{-k}(\cdot)\|_{P,2})=o_p(n^{-1/2})$. Hence,
	$$
	B_{2}=\sqrt{n}\sum\nolimits_{k}\dfrac{|\mathcal{I}_k|}{n}B_{2k}\le \sqrt{n}\sum\nolimits_k B_{2k}=O\left\{ \|g_{2}(\cdot)-\widehat{g}_{2}^{-k}(\cdot)\|_{P,2}\right\}=o_p(1).
	$$
	By Assumption \ref{ass:uni_bound_}, $\bar\Delta^{2t}(\bar{\bX}_{2i})+\Delta^{C,2t}$ is uniformly bounded. Using the same argument as for $B_2$, we can similarly show that
	\begin{align*}	
		B_{3}&=O\left\{\|\widehat{f}_{2}^{-k}(\cdot)-f_{2}(\cdot)\|_{P,2}\right\}=o_p(1),\\ B_{4}&=O\left\{\|\widehat{g}_{2}^{-k}(\cdot)-g_{2}(\cdot)\|_{P,2}\right\}=o_p(1).
	\end{align*}
	For each $k\in [K]$, let 
	\begin{align*}
		B_{5k}=|\mathcal{I}_k|^{-1}\sum_{i\in\mathcal{I}_k}\bar{\bX}_{2i}\{g_{2}(\bar{\bX}_{2i})-\widehat{g}_{2}^{-k}(\bar{\bX}_{2i})\}\{f_{2}(\bar{\bX}_{2i})-\widehat{f}_{2}^{-k}(\bar{\bX}_{2i})\}.
	\end{align*}
	Consider the $o$-th element of $B_{5k}$, it follows from Assumption \ref{ass:uni_bound_} and Cauchy-Schwarz inequality that
	\begin{align*}
		B_{5ko}&= O\Bigg(\bigg[|\mathcal{I}_k|^{-1}\sum_{i\in\mathcal{I}_k}\{g_{2}(\bar{\bX}_{2i})-\widehat{g}_{2}^{-k}(\bar{\bX}_{2i})\}^2\bigg]^{1/2} \bigg[|\mathcal{I}_k|^{-1}\sum_{i\in\mathcal{I}_k}\{f_{2}(\bar{\bX}_{2i})-\widehat{f}_{2}^{-k}(\bar{\bX}_{2i})\}^2\bigg]^{1/2}\Bigg)\\
		&=O_p\left\{\|\widehat{g}_{2}^{-k}(\cdot)-g_{2}(\cdot)\|_{P,2}\cdot\|\widehat{f}_{2}^{-k}(\cdot)-f_{2}(\cdot)\|_{P,2}\right\}=o_p(n^{-1/2}).
	\end{align*}
	Hence, 
	$$
	B_{5}=\sqrt{n}\sum\nolimits_{k}\dfrac{|\mathcal{I}_k|}{n}B_{5k}\le \sqrt{n}\sum\nolimits_k B_{5k}=o_p(1).
	$$
	Using a similar argument as above, we can also control
	$$
	B_6=O_p\left\{\sqrt{n}\cdot\|\widehat{g}_{2}^{-k}(\cdot)-g_{2}(\cdot)\|_{P,2}^2\right\}=o_p(1).
	$$
	We next bound $\widehat{\bV}^{-1}-\bV^{-1}$. For any $o_1,o_2\in [\bar{d}]$, the difference of the $(o1,o2)$-th element of $\widehat{\bV}-\bV$ 
	\begin{align*}
		\widehat{\bV}_{o_1o_2}-\bV_{o_1o_2}&=n^{-1}\sum_{k}\sum_{i\in\mathcal{I}_k}\{\widehat{g}_{2}^{-k}(\bar{\bX}_{2i})-g_{2}(\bar{\bX}_{2i})\}^2\bar{X}_{2io_1}\bar{X}_{2io_2}\\
		&\qquad + 2n^{-1}\sum_{k}\sum_{i\in\mathcal{I}_k}\{A_{2i}-g_{2}(\bar{\bX}_{2i})\}\{\widehat{g}_{2}^{-k}(\bar{\bX}_{2i})-g_{2}(\bar{\bX}_{2i})\}\bar{X}_{2io_1}\bar{X}_{2io_2}\\
		&=B_7+B_8.
	\end{align*}
	By repeating the argument used to control $B_5$, we establish $B_7=O_p\{\|\widehat{g}_{2}^{-k}(\cdot)-g_{2}(\cdot)\|_{P,2}^2\}=o_p(n^{-1/2})$. Using the same argument as for $B_2$, we obtain $B_8=O_p\{\|\widehat{g}_{2}^{-k}(\cdot)-g_{2}(\cdot)\|_2\cdot n^{-1/2}\}=o_p(n^{-1/2})$. Hence 
	$$\|\widehat{\bV}-\bV\|_\infty\le \bar{d}^2\cdot \max_{o1,o2}|	\widehat{\bV}_{o_1o_2}-\bV_{o_1o_2}|=o_p(n^{-1/2}).$$
	By the law of large numbers, we obtain $\|\bV-E\{{\rm var}(A_{2i}\mid \bar{\bX}_{2i})\bar{\bX}_{2i}\bar{\bX}_{2i}^\top\}\|_\infty=o_p(1)$. It then follows from Lemma~\ref{lemma:matrix_norm} and Assumption~\ref{ass:PD_matrix_} that $\|\bV^{-1}-E\{{\rm var}(A_{2i}\mid \bar{\bX}_{2i})\bar{\bX}_{2i}\bar{\bX}_{2i}^\top\}^{-1}\|_\infty=o_p(1)$ and $\|\bV^{-1}\|_\infty=O_p(1)$. Moreover, by combining Lemma~\ref{lemma:matrix_norm} with the bound $\| \widehat{\bV} - \bV \|_\infty = o_p(n^{-1/2})$, we further obtain $\|\widehat{\bV}^{-1}-\bV^{-1}\|_\infty=o_p(n^{-1/2})$. In conclusion, we have
	\begin{equation}\label{hat_alpha}
		\sqrt{n}(\widetilde{\balpha}-\breve{\balpha})=\bV^{-1}B_1+o_p(1).
	\end{equation}
	\textbf{Show $\sqrt{n}(\breve{\balpha}-\check{\balpha})=o_p(1)$.} We have 
	\begin{align*}
		\sqrt{n}(\breve{\balpha}-\check{\balpha})
		&=(\widehat{\bV}^{-1}-\bV^{-1})\dfrac{1}{\sqrt{n}}\sum_{i\in[n]}\bar{\bX}_{2i}\{A_{2i}-g_{2}(\bar{\bX}_{2i})\}^2\Delta^{2t}(\bar{\bX}_{2i})\\
		&\quad + \widehat{\bV}^{-1}\dfrac{1}{\sqrt{n}}\sum_k\sum_{i\in\mathcal{I}_k}\bar{\bX}_{2i}\{g_{2}(\bar{\bX}_{2i})-\widehat{g}_{2}^{-k}(\bar{\bX}_{2i})\}^2\Delta^{2t}(\bar{\bX}_{2i})\\
		&\quad + 2\widehat{\bV}^{-1}\dfrac{1}{\sqrt{n}}\sum_{k}\sum_{i\in\mathcal{I}_k}\bar{\bX}_{2i}\{A_{2i}-g_{2}(\bar{\bX}_{2i})\}\{g_{2}(\bar{\bX}_{2i})-\widehat{g}_{2}^{-k}(\bar{\bX}_{2i})\}\Delta^{2t}(\bar{\bX}_{2i})\\
		&=B_9+B_{10}+B_{11}.
	\end{align*}
	By the law of large numbers,
	\begin{align*}
		\|B_9\|_\infty&\le \sqrt{n}\|\widehat{\bV}^{-1}-\bV^{-1}\|_\infty\\
		&\qquad\times\|E[\bar{\bX}_{2i}\{A_{2i}-g_{2}(\bar{\bX}_{2i})\}^2\Delta^{2t}(\bar{\bX}_{2i})]\{1+o_p(1)\}\|_\infty=o_p(1).
	\end{align*}
	Hence $B_9=o_p(1)$. Using the fact that \( \|\widehat{\bV}^{-1}\|_\infty = O_p(1) \), and applying a similar argument as for term \( B_5 \), we obtain $
	B_{10}=O_p\left\{\sqrt{n}\cdot\|\widehat{g}_{2}^{-k}(\cdot)-g_{2}(\cdot)\|_{P,2}^2\right\}=o_p(1)$.
	Similarly, by repeating the argument used to control $B_2$, we find $B_{11}=O_p\left\{\|\widehat{g}_{2}^{-k}(\cdot)-g_{2}(\cdot)\|_{P,2}\right\}=o_p(1)$. Combining the upper bounds for $B_9$, $B_{10}$, and $B_{11}$, we conclude that
	\begin{align}\label{tilde_alpha_1}
	\sqrt{n}(\breve{\balpha}-\check{\balpha})=o_p(1).
	\end{align}
	\textbf{Control $\sqrt{n}(\check{\balpha}-\bar\balpha^*)$}. We have
	\begin{equation}\label{tilde_alpha_2}
		\sqrt{n}(\check{\balpha}-\bar\balpha^{*})=\bV^{-1}\dfrac{1}{\sqrt{n}}\sum_{i\in[n]}\{A_{2i}-g_{2}(\bar{\bX}_{2i})\}^2\bar{\bX}_{2i}\{\bar{\Delta}^{2t}(\bar{\bX}_{2i})-\bar{\bX}_{2i}^\top\bar\balpha^{*}\}.
	\end{equation}
	Combining \eqref{hat_alpha}, \eqref{tilde_alpha_1}, and \eqref{tilde_alpha_2}, we have
	\begin{align*}
		&\sqrt{n}(\widehat{\balpha}-\bar\balpha^{*})\\
		&\quad=\bV^{-1}\dfrac{1}{\sqrt{n}}\sum_{i\in[n]}\{A_{2i}-g_{2}(\bar{\bX}_{2i})\}\bar{\bX}_{2i}[\varepsilon_{2ti}+\{A_{2i}-g_{2}(\bar{\bX}_{2i})\}\{\bar\Delta^{2t}(\bar{\bX}_{2i})-\bar{\bX}_{2i}^\top\bar\balpha^{*}\}]+o_p(1)\\
		&\quad=\bV^{-1}\dfrac{1}{\sqrt{n}}\sum_{i\in[n]}\{A_{2i}-g_{2}(\bar{\bX}_{2i})\}\bar{\bX}_{2i}[Y_i-f_{2}(\bar{\bX}_{2i})-\{A_{2i}-g_{2}(\bar{\bX}_{2i})\}\{\bar{\bX}_{2i}^\top\bar\balpha^{*}+\Delta^{C,2t}\}]+o_p(1)\\
		&\quad\xrightarrow{\rm d}\mathcal{N}(0,\bM^{\bar\balpha}),
	\end{align*}
	where 
	\begin{align*}
		\bM^{\bar\balpha}&=E\{{\rm var}(A_{2}\mid \bar{\bX}_{2})\bar{\bX}_{2}\bar{\bX}_{2}^\top\}^{-1}
		E(\bQ^{\bar\balpha}\bQ^{\bar\balpha\top})
		E\{{\rm var}(A_{2}\mid \bar{\bX}_{2})\bar{\bX}_{2}\bar{\bX}_{2}^\top\}^{-1},\\
		\bQ^{\bar\balpha}&=\{A_{2}-g_{2}(\bar{\bX}_{2})\}\bar{\bX}_{2}[Y-f_{2}(\bar{\bX}_{2})-\{A_{2}-g_{2}(\bar{\bX}_{2})\}\{\bar{\bX}_{2}^\top\bar\balpha^*+\Delta^{C,2t}\}].
	\end{align*}
	We next prove the asymptotic normality of $\widehat{\balpha}_{j_1a_1j_2}$ for any $j_1\in\mathcal{J}_1$, $j_2\in\mathcal{J}_2$, and $a_1\in\{0,1\}$. Note that
	\begin{align*}
	\sqrt{n}(\widehat{\balpha}_{j_1a_1j_2}-\balpha_{j_1a_1j_2}^*)&=\sqrt{n}\bigg(\dfrac{1}{n}\sum_{i\in [n]}\bS_{\bar{J}_2i}\bS_{\bar{j}_2i}^\top\bigg)^{-1}\\
	&\qquad\times \dfrac{1}{n}\sum_{i\in [n]}\bS_{\bar{j}_2i}\{(\bar{\bS}_{2i},a_1)^\top\bar\balpha^*-\bS_{\bar{j}_2i}^\top\balpha^*_{j_1a_1j_2}+(\bar{\bS}_{2i},a_1)^\top(\widehat{\balpha}-\bar\balpha^*)\}\\
	&\xrightarrow{\rm d}\mathcal{N}(0,\bM_{j_1a_1j_2}^{\balpha}),
	\end{align*}
	where
	\begin{align*}
	\bM_{j_1a_1j_2}^{\balpha}&=E(\bS_{\bar{j}_2}\bS_{\bar{j}_2}^\top)^{-1}E(\bQ_{j_1a_1j_2}^{\balpha}\bQ_{j_1a_1j_2}^{\balpha\top})E(\bS_{\bar{j}_2}\bS_{\bar{j}_2}^\top)^{-1},\\
	\bQ_{j_1a_1j_2}^{\balpha}&=\bS_{\bar{j}_2}\{(\bar{\bS}_{2i},a_1)^\top\bar\balpha^*-\bS_{\bar{j}_2i}^\top\balpha^*_{j_1a_1j_2}\}+E\{\bS_{\bar{j}_2}(\bar{\bS}_{2i},a_1)\}E\{{\rm var}(A_{2}\mid \bar{\bX}_{2})\bar{\bX}_{2}\bar{\bX}_{2}^\top\}^{-1}	\bQ^{\bar\balpha}.
	\end{align*}
 	
	\noindent\textbf{Step 2a: Asymptotic Normality of $\widetilde{\bbeta}_{j_1j_2}$ and $\widehat{\bbeta}_{j_1a_1j_2}$} 
	
	For any $j_1\in\mathcal{J}_1$ and $j_2\in\mathcal{J}_2$, denote $\Delta^{C,2c}_{j_2}=C^{2c}(j_2)-C^{2c}(j_2^\mathrm{f})$, $\bar{\bX}_{2i}=(\bar{\bS}_{2i},A_{1i})$, $\bX_{\bar{l}_{2}^{\,\rm f}i}=(\bS_{\bar{l}_{2}^{\,\rm f}i},A_{1i})$,
	\begin{align*}
		Y_{j_1j_2i}^{2c*}&=\bar{\bX}_{2i}^\top\bar\balpha^*\{I(\bS_{\bar{j}_2i}^\top\balpha^*_{j_1A_{1i}j_2}> 0)-I(\bS_{\bar{j}_{2}^{\rm f}i}^\top\balpha^*_{j_1A_{1i}j_{2}^{\rm f}}> 0)\}-\Delta^{C,2c}_{j_2},\\
		\widehat{Y}_{j_1j_2i}^{2c}&=\bar{\bX}_{2i}^\top\widehat{\balpha}\{I(\bS_{\bar{j}_2i}^\top\widehat{\balpha}_{j_1A_{1i}j_2}> 0)-I(\bS_{\bar{j}_{2}^{\rm f}i}^\top\widehat{\balpha}_{j_1A_{1i}j_{2}^{\rm f}}> 0)\}-\Delta^{C,2c}_{j_2}.
	\end{align*}
	We next prove the asymptotic normality of $\widetilde{\bbeta}_{j_1j_2}$. We obtain
	\begin{align*}
		\widetilde{\bbeta}_{j_1j_2}-\bar{\bbeta}^*_{j_1j_2}
		&=\bU^{-1}n^{-1}\sum_{i\in[n]}\bX_{\bar{l}_2^{\,\rm f}i}\left(Y_{j_1j_2i}^{2c*}-\bX_{\bar{l}_2^{\,\rm f}i}^\top\bar{\bbeta}_{j_1j_2}^{*}+\widehat{Y}_{j_1j_2i}^{2c}-Y_{j_1j_2i}^{2c*}\right)\\
		&=\bU^{-1}n^{-1}\sum_{i\in[n]}\bX_{\bar{l}_2^{\,\rm f}i}\left\{Y_{j_1j_2i}^{2c*}-\bX_{\bar{l}_2^{\,\rm f}i}^\top\bar{\bbeta}_{j_1j_2}^{*}+\bar{\bX}_{2i}^\top(\widehat{\balpha}-\bar\balpha^*)E_{j_1j_2i}\right\}\\
		&\qquad + \bU^{-1}n^{-1}\sum_{i\in[n]}\bX_{\bar{l}_2^{\,\rm f}i}\bar{\bX}_{2i}^\top\bar\balpha^*(\Delta_{j_1j_2i}^E-\Delta_{j_1j_{2}^{\rm f}i}^E)\\
		&\qquad + \bU^{-1}n^{-1}\sum_{i\in[n]}\bX_{\bar{l}_2^{\,\rm f}i}\bar{\bX}_{2i}^\top(\widehat{\balpha}-\bar\balpha^*)(\Delta_{j_1j_2i}^E-\Delta_{j_1j_{2}^{\rm f}i}^E)\\
		&=\bU^{-1}(D_1+D_2+D_3),
	\end{align*}
	where $\bU=n^{-1}\sum_{i\in [n]}\bX_{\bar{l}_2^{\,\rm f}i}\bX_{\bar{l}_2^{\,\rm f}i}^\top$, 
	\begin{align*}
		E_{j_1j_2i}&=I(\bS_{\bar{j}_2i}^\top\balpha^*_{j_1A_{1i}j_2}> 0)-I(\bS_{\bar{j}_{2}^{\rm f}i}^\top\balpha^*_{j_1A_{1i}j_{2}^{\rm f}}> 0),\\
		\Delta_{j_1j_2i}^E&=I(\bS_{\bar{j}_2i}^\top\widehat{\balpha}_{j_1A_{1i}j_2}> 0)-I(\bS_{\bar{j}_2i}^\top\balpha^*_{j_1A_{1i}j_2}> 0),\\
		{\rm and}\;\; \Delta_{j_1j_{2}^{\rm f}i}^E&=I(\bS_{\bar{j}_{2}^{\rm f}i}^\top\widehat{\balpha}_{j_1A_{1i}j_{2}^{\rm f}}> 0)-I(\bS_{\bar{j}_{2}^{\rm f}i}^\top\balpha^*_{j_1A_{1i}j_{2}^{\rm f}}> 0).
	\end{align*}
	We now show that both $D_2$ and $D_3$ are $o_p(n^{-1/2})$ terms under Assumption \ref{ass:margin2} with $r>1$. By Assumptions \ref{ass:uni_bound_} and \ref{ass:PD_matrix_}, there exists a constant $C>0$ such that
	\begin{align*}
		c_2\|\bar\balpha^*\|_2^2&\le \lambda_{\rm min}[E\{{\rm var}(A_{2i}\mid\bar{\bX}_{2i})\bar{\bX}_{2i}\bar{\bX}_{2i}^\top\}]\cdot \|\bar\balpha^*\|_2^2\\
		&\le E[\{A_{2i}-g_{2}(\bar{\bX}_{2i})\}^2(\bar{\bX}_{2i}^\top\bar\balpha^*)^2]\le E[\{A_{2i}-g_{2}(\bar{\bX}_{2i})\}^2\bar{\Delta}^{2t}(\bar{\bX}_{2i})^2]\le C,
	\end{align*}
	which implies $\|\bar\balpha^*\|_2=O(1)$. We denote 
	$$
	\widetilde{\Delta}_{j_1j_2i}^E=I\{|\bS_{\bar{j}_2i}^\top\balpha^*_{j_1A_{1i}j_2}|\le |\bS_{\bar{j}_2i}^\top(\widehat{\balpha}_{j_1A_{1i}j_2}-\balpha^*_{j_1A_{1i}j_2})|\},
	$$ 
	and note that $|\Delta_{j_1j_2i}^E|\le \widetilde{\Delta}_{j_1j_2i}^E$, since for any $a,b>0$, we have $|a-b|\ge |a|$ whenever $ab<0$. For any $\delta\in (0,1)$, the consistency of $\widehat{\balpha}_{j_1a_1j_2}$ ensures there exists a constant $C_{\delta}$ depends on $\delta$ such that $P(\mathcal{A}_{j_1j_2}^{\delta})>1-\delta$, where 
	\begin{align}\label{A_J1J2a1}
	\mathcal{A}_{j_1j_2}^{\delta}=\left\{(\|\widehat{\balpha}_{j_10j_2}-\balpha^*_{j_10j_2}\|_2)+(\|\widehat{\balpha}_{j_11j_2}-\balpha^*_{j_11j_2}\|_2)\le C_{\delta}n^{-1/2}\right\}.
	\end{align} We denote 
	$$
	D_{21}=		n^{-1}\sum_{i\in[n]}\bX_{\bar{l}_2^{\,\rm f}i}\bar{\bX}_{2i}^\top\balpha^*\Delta_{j_1j_2i}^E.
	$$ 
	We now bound the $o$-th element of $D_{21}$, conditional on the event $\mathcal{A}_{j_1j_2}^{\delta}$, it follows from Assumption \ref{ass:uni_bound_} and $\|\bar\balpha^*\|_2=O(1)$ that there exist a constant $C^\prime>0$ such that
	\begin{equation}\label{D21}
		\begin{split}
			D_{21o}&=O\left\{n^{-1}\sum\nolimits_{i\in[n]}|\Delta_{j_1j_2i}^E|\right\}=O\left\{n^{-1}\sum\nolimits_{i\in[n]}\widetilde{\Delta}_{j_1j_2i}^E\right\}\\
			&=O\left\{n^{-1}\sum\nolimits_{i\in[n]}I(|\bS_{\bar{j}_2i}^\top\balpha^*_{j_1A_{1i}j_2}|\le C^\prime C_{\delta}n^{-1/2})\right\}.
		\end{split}
	\end{equation}
	Under Assumption \ref{ass:margin2} with $r>1$ and applying the law of large numbers, we can obtain that as $n\to \infty$,
	\begin{align*}
	&n^{-1}\sum\nolimits_{i\in[n]} I(|\bS_{\bar{j}_2i}^\top\balpha^*_{j_1A_{1i}j_2}|\le C^\prime C_{\delta}n^{-1/2})\xrightarrow{\rm p}P(|\bS_{\bar{j}_2i}^\top\balpha^*_{j_1A_{1i}j_2}|\le C^\prime C_{\delta}n^{-1/2})\\
	&\qquad = P(A_{1i}=0,|\bS_{\bar{j}_2i}^\top\balpha^*_{j_10j_2}|\le C^\prime C_{\delta}n^{-1/2})+P(A_{1i}=1,|\bS_{\bar{j}_2i}^\top\balpha^*_{j_11j_2}|\le C^\prime C_{\delta}n^{-1/2}),
	\end{align*}
	which is bounded by $\sum_{a_1=0}^1P(|\bS_{\bar{j}_2i}^\top\balpha^*_{j_1a_1j_2}|\le C^\prime C_{\delta}n^{-1/2})=o_p(n^{1/2})$ under Assumption \ref{ass:margin2} with $r>1$. Hence $D_{21}=o_p(n^{-1/2})$. Analogously, we obtain 
	$$
	n^{-1}\sum_{i\in[n]}\bX_{\bar{l}_2^{\,\rm f}i}\bar{\bX}_{2i}^\top\bar\balpha^*\Delta_{j_1j_{2}^{\rm f}i}^E=o_p(n^{-1/2}).
	$$
	Together with Assumption \ref{ass:PD_matrix_} and the fact that $\bX_{\bar{l}_2^{\,\rm f}}$ is a subvector of $\bar{\bX}_2$, it follows that $\lambda_{\min}\{E(\bX_{\bar{l}_2^{\,\rm f}}\bX_{\bar{l}_2^{\,\rm f}}^\top)\}\ge \lambda_{\min}\{E(\bar{\bX}_{2}\bar{\bX}_{2}^\top)\}\ge \lambda_{\min}[E\{{\rm var}(A_{2}\mid \bar{\bX}_{2})\bar{\bX}_{2}\bar{\bX}_{2}^\top\}]>c_2$
	Using the same argument as for showing $\|\bV^{-1}\|_\infty=O_p(1)$, we also have $\|\bU^{-1}\|_\infty=O_p(1)$, which implies $\bU^{-1}D_{2}=o_p(n^{-1/2})$. For the term $D_3$, applying the same reasoning as for $D_{21}$, 
	\begin{align*}
		n^{-1}\sum_{i\in[n]}\bX_{\bar{l}_2^{\,\rm f}i}\bar{\bX}_{2i}^\top(\widehat{\balpha}-\bar\balpha^*)\Delta_{j_1j_2i}^E&=O\bigg\{\|\widehat{\balpha}-\bar\balpha^*\|_2\cdot n^{-1}\sum_{i\in[n]}|\Delta_{j_1j_2i}^E|\bigg\}=o_p(n^{-1/2}),\\
		n^{-1}\sum_{i\in[n]}\bX_{\bar{l}_2^{\rm f}i}\bar{\bX}_{2i}^\top(\widehat{\balpha}-\bar\balpha^*)\Delta_{j_1j_{2}^{\rm f}i}^E&=O\bigg\{\|\widehat{\balpha}-\bar\balpha^*\|_2\cdot n^{-1}\sum_{i\in[n]}|\Delta_{j_1j_{2}^{\rm f}i}^E|\bigg\}=o_p(n^{-1/2}).
	\end{align*}
	Therefore, $\bU^{-1}D_3=o_p(n^{-1/2})$. Combining the upper bounds for $\bU^{-1}D_2$ and $\bU^{-1}D_3$, 
	\begin{align*}
		\sqrt{n}(\widetilde{\bbeta}_{j_1j_2}-\bar{\bbeta}^{*}_{j_1j_2})
		&=\bU^{-1}(1/\sqrt{n})\sum_{i\in[n]}\bX_{\bar{l}_2^{\,\rm f}i}\left\{Y_{j_1j_2i}^{2c*}-\bX_{\bar{l}_2^{\,\rm f}i}^\top\bar{\bbeta}_{j_1j_2}^{*}+\bar{\bX}_{2i}^\top(\widehat{\balpha}-\bar\balpha^*)E_{j_1j_2i}\right\}+o_p(1)\\
		&\xrightarrow{\rm d}\mathcal{N}(0,\bM_{j_1j_2}^{\bar{\bbeta}}),
	\end{align*}
	where
	\begin{align*}
	\bM_{j_1j_2}^{\bar{\bbeta}}&=E(\bX_{\bar{l}_2^{\,\rm f}}\bX_{\bar{l}_2^{\,\rm f}}^\top)^{-1}E(\bQ^{\bar{\bbeta}}_{j_1j_2}\bQ^{\bar{\bbeta}\top}_{j_1j_2})E(\bX_{\bar{l}_2^{\,\rm f}}\bX_{\bar{l}_2^{\,\rm f}}^\top)^{-1},\\
	\bQ^{\bar{\bbeta}}_{j_1j_2}&=\bX_{\bar{l}_2^{\,\rm f}}(Y_{j_1j_2}^{2c*}-\bX_{\bar{l}_2^{\,\rm f}}^\top\bar{\bbeta}_{j_1j_2}^*)+E[E_{j_1j_2}\bX_{\bar{l}_2^{\,\rm f}}\bar{\bX}_2]E\{{\rm var}(A_{2}\mid \bar{\bX}_2)\bar{\bX}_{2}\bar{\bX}_{2}^\top\}^{-1}\bQ^{\bar\balpha},
	\end{align*}
	and $E_{j_1j_2}=I(\bS_{\bar{j}_2}^\top\balpha^*_{j_1A_1j_2}> 0)-I(\bS_{\bar{j}_{2}^{\rm f}}^\top\balpha^*_{j_1A_1j_{2}^{\rm f}}> 0)$. We next prove the asymptotic normality of $\widehat{\bbeta}_{j_1a_1j_2}$. Note that
	\begin{align}
		\sqrt{n}(\widehat{\bbeta}_{j_1a_1j_2}-\bbeta_{j_1a_1j_2}^*)&=\sqrt{n}\bigg(\dfrac{1}{n}\bS_{\bar{l}_2i}\bS_{\bar{l}_2i}^\top\bigg)^{-1}\bigg[\dfrac{1}{n}\sum_{i\in [n]}\bS_{\bar{l}_2i}\{(\bS_{\bar{l}_2^{\,\rm f}},a_1)^\top\bar{\bbeta}_{j_1j_2}^{*}-\bS_{\bar{l}_2}^\top \bbeta_{j_1a_1j_2}^*\nonumber\\
		&\qquad\qquad\qquad\qquad\qquad\qquad\qquad+(\bS_{\bar{l}_2^{\,\rm f}},a_1)^\top(\bar\bbeta_{j_1j_2}^{*}-\widetilde{\bbeta}_{j_1j_2})\}\bigg]\label{beta_hat_deco}\\
		&\xrightarrow{\rm d}\mathcal{N}(0,\bM_{j_1a_1j_2}^{\bbeta}),\nonumber
		\end{align}
	where
	\begin{align*}
		\bM_{j_1a_1j_2}^{\bbeta}&=E(\bS_{\bar{l}_2}\bS_{\bar{l}_2}^\top)^{-1}E(\bQ_{j_1a_1j_2}^{\bbeta}\bQ_{j_1a_1j_2}^{\bbeta\top})E(\bS_{\bar{l}_2}\bS_{\bar{l}_2}^\top)^{-1},\\
		\bQ_{j_1a_1j_2}^{\bbeta}&=\bS_{\bar{l}_2}\{(\bS_{\bar{l}_2^{\,\rm f}},a_1)^\top\bar\bbeta_{j_1j_2}^{*}-\bS_{\bar{l}_2}^\top \bbeta_{j_1a_1j_2}^*\}+E\{\bS_{\bar{l}_2}(\bS_{\bar{l}_2^{\,\rm f}i},a_1)\}E(\bX_{\bar{l}_2^{\,\rm f}}\bX_{\bar{l}_2^{\,\rm f}}^\top)^{-1}\bQ^{\bar{\bbeta}}_{j_1j_2}.
	\end{align*}
	\noindent\textbf{Step 2a: Consistency of $\widetilde{\bbeta}_{j_1j_2}$ and $\widehat{\bbeta}_{j_1a_1j_2}$} 
	
	We now establish the consistency of $\widetilde{\bbeta}_{j_1j_2}$ under Assumption \ref{ass:margin2} with $r\ge 1$. Consider the $o$-th element of $D_{21}$, for any $\delta\in(0,1)$, conditional on the event $\mathcal{A}_{j_1j_2}^{\delta}$ defined in \eqref{A_J1J2a1},
	\begin{equation}\label{D21_bigO}
	\begin{split}
		D_{21o}&=O\left\{n^{-1}\sum\nolimits_{i\in[n]}|\Delta_{j_1j_2i}^E|\right\}=O\left\{n^{-1}\sum\nolimits_{i\in[n]}\widetilde{\Delta}_{j_1j_2i}^E\right\}\\
		&=O\left\{n^{-1}\sum\nolimits_{i\in[n]}I(|\bS_{\bar{j}_2i}^\top\balpha^*_{j_1A_{1i}j_2}|\le C^\prime C_{\delta}n^{-1/2})\right\}.
		\end{split}
	\end{equation}
	By Assumption \ref{ass:margin2} with $r\ge 1$ and the law of large numbers, we have $n^{-1}\sum\nolimits_{i\in[n]}I(|\bS_{\bar{j}_2i}^\top\balpha^*_{j_1A_{1i}j_2}|\le C^\prime C_{\delta}n^{-1/2})\xrightarrow{\rm p} P(|\bS_{\bar{j}_2i}^\top\balpha^*_{j_1A_{1i}j_2}|\le C^\prime C_{\delta}n^{-1/2})$, which is bounded by $\sum_{a_1=0}^1P(|\bS_{\bar{j}_2i}^\top\balpha^*_{j_1a_1j_2}|\le C^\prime C_{\delta}n^{-1/2})=O_p(n^{1/2})$ under Assumption \ref{ass:margin2} with $r\ge 1$. Consequently, $D_{21}=O_p(n^{-1/2})$. Similarily, we obtain
	$$
	n^{-1}\sum_{i\in[n]}\bX_{\bar{l}_2^{\,\rm f}i}\bar{\bX}_{2i}^\top\bar\balpha^*\Delta_{j_1j_{2}^{\rm f}i}^E=O_p(n^{-1/2}),$$ 
	and thus $\bU^{-1}D_2=O_p(n^{-1/2})$. In addition, since $D_3=O(\|\widehat{\balpha}-\bar\balpha^*\|_2)=O_p(n^{-1/2})$, it follows that $\bU^{-1}D_3=O_p(n^{-1/2})$. Since $D_1$ is sum of i.i.d. mean-zero random variables, we have $D_1=O_p(n^{-1/2})$ and therefore $\bU^{-1}D_1=O_p(n^{-1/2})$. Combining the results above, we conclude that $\widetilde{\bbeta}_{j_1j_2}-\bar{\bbeta}^{*}_{j_1j_2}=O_p(n^{-1/2})$. In this case, noting that the term $n^{-1}\sum_{i\in[n]}\bS_{\bar{l}_2}(\bS_{\bar{l}_2^{\,\rm f}},a_1)^\top(\bar\bbeta_{j_1j_2}^{*}-\widetilde{\bbeta}_{j_1j_2})$ in \eqref{beta_hat_deco} is also of order $O_p(n^{-1/2})$, we can infer that $\widehat{\bbeta}_{j_1a_1j_2}-\bbeta^{*}_{j_1a_1j_2}=O_p(n^{-1/2})$.
	
	\noindent\textbf{Step 3a: Asymptotic Normality of $\widetilde{\bgamma}_{j_1}$ and $\widehat{\bgamma}_{j_1}$} 
	
	 For any $j_1\in\mathcal{J}_1$, we first show the asymptotic normality of $\widetilde{\bgamma}_{j_1}$. Since $Y_{j_1i}^{1c*}-f^*_{j_1}(\bS_{1i})=\{A_{1i}-g_1(\bS_{1i})\}\{\bar Q^{1t*}(\bS_{1i},j_1,1)-\bar Q^{1t*}(\bS_{1i},j_1,0)\}+\varepsilon^{1t*}$, $\bar\bgamma_{j_1}^{*}$ can be expressed as 
	 $$
	 \bar\bgamma^{*}_{j_1}=\mathop{\arg\min}_{\bgamma} E[\{A_{1i}-g_1(\bS_{1i})\}^2\{\bar Q^{1t*}(\bS_{1i},j_1,1)-\bar Q^{1t*}(\bS_{1i},j_1,0)-\bS_{1i}^\top\bgamma\}^2],\\
	 $$
	 and we define
	\begin{align*}
		\breve{\bgamma}_{j_1}&=\mathop{\arg\min}_{\bgamma} n^{-1}\sum_{k}\sum_{i\in\mathcal{I}_k}\{A_{1i}-\widehat{g}_1^{-k}(\bS_{1i})\}^2\{\bar Q^{1t*}(\bS_{1},j_1,1)- \bar Q^{1t*}(\bS_{1},j_1,0)-\bS_{1i}^\top\bgamma\}^2,\\
		\check{\bgamma}_{j_1}&=\mathop{\arg\min}_{\bgamma} n^{-1}\sum_{i\in[n]}\{A_{1i}-g_1(\bS_{1i})\}^2\{ \bar Q^{1t*}(\bS_{1},j_1,1)- \bar Q^{1t*}(\bS_{1},j_1,0)-\bS_{1i}^\top\bgamma\}^2.
	\end{align*}
	We next control the terms $\sqrt{n}(\widetilde{\bgamma}_{j_1}-\breve{\bgamma}_{j_1})$, $\sqrt{n}(\breve{\bgamma}_{j_1}-\check{\bgamma}_{j_1})$, as well as $\sqrt{n}(\check{\bgamma}_{j_1}-\bar\bgamma_{j_1}^{*})$ in order to bound $\sqrt{n}(\widetilde{\bgamma}_{j_1}-\bar\bgamma_{j_1}^{*})$. We denote 
	\begin{align*}
		\widehat{\bW}=n^{-1}\sum_{k}\sum_{i\in\mathcal{I}_k}\{A_{1i}-\widehat{g}_{1}^{-k}(\bS_{1i})\}^2\bS_{1i}\bS_{1i}^\top,\quad 	\bW=n^{-1}\sum_{i\in[n]}\{A_{1i}-g_{1}(\bS_{1i})\}^2\bS_{1i}\bS_{1i}^\top.
	\end{align*}
	
	\noindent\textbf{Control $\sqrt{n}(\widetilde{\bgamma}_{j_1}-\breve{\bgamma}_{j_1})$}. Note that
	\begin{align*}
		\sqrt{n}(\widetilde{\bgamma}_{j_1}-\breve{\bgamma}_{j_1})&=\widehat{\bW}^{-1}\dfrac{1}{\sqrt{n}}\sum_k\sum_{i\in\mathcal{I}_k}\bS_{1i}\{A_{1i}-\widehat{g}_{1}^{-k}(\bS_{1i})\}\\
		&\qquad\qquad \times[\widehat{Y}_{j_1i}^{1t}-\widehat{f}_{j_1}^{-k}(\bS_{1i})-\{A_{1i}-\widehat{g}_{1}^{-k}(\bS_{1i})\}\{\bar Q^{1t*}(\bS_{1},j_1,1)-\bar Q^{1t*}(\bS_{1},j_1,0)\}]\\
		&=\widehat{\bW}^{-1}(F_1+F_2+\cdots+F_8),
	\end{align*}
	where
	\begin{align*}
		F_1&=(1/\sqrt{n})\sum\nolimits_{i\in[n]}\bS_{1i}\{A_{1i}-g_{1}(\bS_{1i})\}(\varepsilon_{1ti}^*+\widehat{Y}^{1t}_{j_1i}-Y^{1t*}_{j_1i}),\\
		F_2&=(1/\sqrt{n})\sum\nolimits_k\sum\nolimits_{i\in\mathcal{I}_k}\bS_{1i}\{g_{1}(\bS_{1i})-\widehat{g}_{1}^{-k}(\bS_{1i})\}(\widehat{Y}_{j_1i}^{1t}-Y_{j_1i}^{1t*}),\\
		F_3&=(1/\sqrt{n})\sum\nolimits_k\sum\nolimits_{i\in\mathcal{I}_k}\bS_{1i}\{A_{1i}-g_{1}(\bS_{1i})\}\{f_{j_1}(\bS_{1i})-\widehat{f}_{j_1}^{-k}(\bS_{1i})\},\\
		F_{4}&=(1/\sqrt{n})\sum\nolimits_k\sum\nolimits_{i\in\mathcal{I}_k}\bS_{1i}\{A_{1i}-g_{1}(\bS_{1i})\}\{\widehat{g}_{1}^{-k}(\bS_{1i})-g_{1}(\bS_{1i})\}\\
		&\qquad\qquad\qquad\qquad\qquad\times\{\bar Q^{1t*}(\bS_{1},j_1,1)-\bar Q^{1t*}(\bS_{1},j_1,0)\},\\
		F_5&=(1/\sqrt{n})\sum\nolimits_k\sum\nolimits_{i\in\mathcal{I}_k}\bS_{1i}\{g_{1}(\bS_{1i})-\widehat{g}_{1}^{-k}(\bS_{1i})\}\varepsilon_{1ti}^*,\\
		F_6&=(1/\sqrt{n})\sum\nolimits_k\sum\nolimits_{i\in\mathcal{I}_k}\bS_{1i}\{g_{1}(\bS_{1i})-\widehat{g}_{1}^{-k}(\bS_{1i})\}\{f_{j_1}(\bS_{1i})-\widehat{f}_{j_1}^{-k}(\bS_{1i})\},\\
		F_7&=(1/\sqrt{n})\sum\nolimits_k\sum\nolimits_{i\in\mathcal{I}_k}\bS_{1i}\{g_{1}(\bS_{1i})-\widehat{g}_{1}^{-k}(\bS_{1i})\}\{\widehat{g}_{1}^{-k}(\bS_{1i})-g_{1}(\bS_{1i})\}\\
		&\qquad\qquad\qquad\qquad\qquad\times\{\bar Q^{1t*}(\bS_{1},j_1,1)-\bar Q^{1t*}(\bS_{1},j_1,0)\}.
	\end{align*}
	Analogous to the arguments for both $B_2$ and $B_5$ are $o_p(1)$ terms, we can show that 
	\begin{align*}
		F_3&=O\{\|\widehat{f}_{j_1}^{-k}(\cdot)-f^*_{j_1}(\cdot)\|_{P,2}\}=o_p(1),\\
		F_4&=O\{\|\widehat{g}_{1}^{-k}(\cdot)-g_{1}(\cdot)\|_{P,2}\}=o_p(1),\\
		F_5&=O\{\|\widehat{g}_{1}^{-k}(\cdot)-g_{1}(\cdot)\|_{P,2}\}=o_p(1).
		\end{align*}
	Using the same argument as for $B_5$, we obtain
	\begin{align*}
		F_6&=O\{\sqrt{n}\cdot\|\widehat{f}_{j_1}^{-k}(\cdot)-f^*_{j_1}(\cdot)\|_2\cdot\|\widehat{g}_{1}^{-k}(\cdot)-g_{1}(\cdot)\|_{P,2}\}=o_p(1),\\
		F_7&=O\{\sqrt{n}\cdot\|\widehat{g}_{1}^{-k}(\cdot)-g_{1}(\cdot)\|_{P,2}^2\}=o_p(1).
	\end{align*}
	We next bound the terms $F_1$ and $F_2$. Denote $\bX_{\bar{l}_2^{\,\rm f}i}=(\bS_{\bar{l}_2^{\,\rm f}i},A_{1i})$. For any $j_1\in\mathcal{J}_1$ and $a_1\in\{0,1\}$, when the maximum value is attained by more than one element in the sets $\{\bS_{\bar{l}_2}^\top\widehat{\bbeta}_{j_1a_{1}j_2}\}_{j_2\in\mathcal{J}_2}$ and $\{\bS_{\bar{l}_2}^\top\bbeta^*_{j_1a_1j_2}\}_{j_2\in\mathcal{J}_2}$, we define the following terms:
	\begin{align*}
	\widehat{G}_{j_1j_2i}&=\bS_{\bar{l}_2i}^\top\widehat{\bbeta}_{j_1A_{1i}j_2},\;\; 	\widetilde{G}_{j_1j_2i}=\bX_{\bar{l}_2^{\,\rm f}i}^\top\widetilde{\bbeta}_{j_1j_2},\;\; 
	G^*_{j_1j_2i}=\bS_{\bar{l}_2i}^\top\bbeta^*_{j_1A_{1i}j_2},\;\;
	\bar G^{*}_{j_1j_2i}=\bX_{\bar{l}_2^{\,\rm f}i}^\top\bar\bbeta^{*}_{j_1j_2},
	\end{align*}
	and the associated events
	\begin{align*}
	\widehat{\mathcal{B}}_{j_1j_2i}&=\{\widehat{G}_{j_1j_2i}\ge \widehat{G}_{j_1j_2^\prime i}, \widehat{G}_{j_1j_2i}> \widehat{G}_{j_1j_2^{\prime\prime} i}, j_2^\prime<j_2, j_2^{\prime\prime}>j_2\},\\
	\mathcal{B}_{j_1j_2i}&=\{G_{j_1j_2i}^*\ge G^*_{j_1j_2^\prime i}, G^*_{j_1j_2i}> G^*_{j_1j_2^{\prime\prime} i}, j_2^\prime<j_2, j_2^{\prime\prime}>j_2\},\;\;
	 \mathcal{B}_{j_1j_2i}^*=\{G_{j_1j_2i}^*> G_{j_1j_2^\prime i}^*, j_2^{\prime}\neq j_2\}.
	\end{align*}
	Here, $\widehat{B}_{j_1j_2i}$ denotes the event that the maximum value among $\{\widehat{G}_{j_1j_2i}\}$ occurs lastly at position $j_2$, while $\mathcal{B}_{j_1j_2i}$ is analogously defined based on $\{G_{j_1j_2i}^*\}$. We consider
	\begin{align*}
		\widehat{Y}_{j_1i}^{1t}&=Y_i-C^{2t}(A_{2i})+\bar{\bX}_{2i}^\top\widehat{\balpha}\{I(\bS_{\bar{j}_2^{\rm f}i}^\top\widehat{\balpha}_{j_1A_{1i}j_2^{\rm f}}> 0)-A_{2i}\}-C^{2c}(j_2^{\rm f})\\
		&\qquad\qquad+\sum\nolimits_{j_2\in\mathcal{J}_2}I(\widehat{\mathcal{B}}_{j_1j_2i})\widetilde{G}_{j_1j_2i}-C^{1t}(A_{1i}),\\
		Y_{j_1i}^{1t*}&=Y_i-C^{2t}(A_{2i})+\bar{\bX}_{2i}^\top\bar\balpha^*\{I(\bS_{\bar{j}_2^{\rm f}i}^\top\balpha^*_{j_1A_{1i}j_2^{\rm f}}> 0)-A_{2i}\}-C^{2c}(j_2^{\rm f})\\
		&\qquad\qquad+\sum\nolimits_{j_2\in\mathcal{J}_2}I(\mathcal{B}_{j_1j_2i})\bar{G}^{*}_{j_1j_2i}-C^{1t}(A_{1i}),
	\end{align*}
	where $\bar{\bX}_{2i}=(\bar{\bS}_{2i},A_{1i})$. We obtain
	\begin{align}
	\widehat{Y}_{j_1i}^{1t}-Y_{j_1i}^{1t*}&=\bar{\bX}_{2i}^\top(\widehat{\balpha}-\bar\balpha^*)\widetilde{E}_{j_1j_2^{\rm f}i}+\bar{\bX}_{2i}^\top\bar\balpha^*\Delta_{j_1j_2^{\rm f}i}^E+\bar{\bX}_{2i}^\top(\widehat{\balpha}-\bar\balpha^*)\Delta_{j_1j_2^{\rm f}i}^E\nonumber\\
	&\quad+\sum_{j_2\in\mathcal{J}_2}I(\mathcal{B}_{j_1j_2i}^*)(\widetilde{G}_{j_1j_2i}-\bar{G}^{*}_{j_1j_2i})+\sum_{j_2\in\mathcal{J}_2}\{I(\widehat{\mathcal{B}}_{j_1j_2i})-I(\mathcal{B}_{j_1j_2i})\}(\widetilde{G}_{j_1j_2i}-\bar{G}^{*}_{j_1j_2i})\nonumber\\
	&\quad+\sum_{j_2\in\mathcal{J}_2}\{I(\mathcal{B}_{j_1j_2i})-I(\mathcal{B}^*_{j_1j_2i})\}(\widetilde{G}_{j_1j_2i}-\bar{G}^{*}_{j_1j_2i})+\sum_{j_2\in\mathcal{J}_2}\{I(\widehat{\mathcal{B}}_{j_1j_2i})-I(\mathcal{B}_{j_1j_2i})\}\bar{G}^{*}_{j_1j_2i}\nonumber\\
	&=\Delta^Y_{1i}+\cdots+\Delta^Y_{7i}\label{Delta_Y},
	\end{align}
	where $\widetilde{E}_{j_1j_{2}^{\rm f}i}=I(\bS_{\bar{j}_2^{\rm f}i}^\top\balpha^*_{j_1A_{1i}j_{2}^{\rm f}}> 0)-A_{2i}$ and $\Delta_{j_1j_{2}^{\rm f}i}^E=I(\bS_{\bar{j}_2^{\rm f}i}^\top\widehat{\balpha}_{j_1A_{1i}j_{2}^{\rm f}}> 0)-I(\bS_{\bar{j}_2^{\rm f}i}^\top\balpha^*_{j_1A_{1i}j_{2}^{\rm f}}> 0)$. For the term $F_2$, we denote
	\begin{align*}
	F_2&=(1/\sqrt{n})\sum\nolimits_k\sum\nolimits_{i\in\mathcal{I}_k}\bS_{1i}\{g_{1}(\bS_{1i})-\widehat{g}_{1}^{-k}(\bS_{1i})\}(\Delta^Y_{1i}+\cdots+\Delta^Y_{7i})\\
	&=F_{21}+F_{22}+\cdots+F_{27}.
	\end{align*}
	For each $k\in[K]$ and each $j\in[7]$, let $
	F_{2jk}=|\mathcal{I}_k|^{-1}\sum_{i\in\mathcal{I}_k}\bS_{1i}\{g_{1}(\bS_{1i})-\widehat{g}_{1}^{-k}(\bS_{1i})\}\Delta^Y_{ji}$. 
	Consider the $o$-th element of $F_{21k}$, it follows by Assumption \ref{ass:uni_bound_} and Cauchy-Schwarz inequality that
	\begin{align*}
		F_{21ko}
		&=O\bigg\{|\mathcal{I}_k|^{-1}\sum_{i\in\mathcal{I}_k}|g_{1}(\bS_{1i})-\widehat{g}_{1}^{-k}(\bS_{1i})|\cdot |\bar{\bX}_{2i}^\top(\widehat{\balpha}-\bar\balpha^*)|\bigg\}\\
		&=O\bigg[\bigg\{|\mathcal{I}_k|^{-1}\sum_{i\in\mathcal{I}_k}|g_{1}(\bS_{1i})-\widehat{g}_{1}^{-k}(\bS_{1i})|^2\bigg\}^{1/2}\bigg\{|\mathcal{I}_k|^{-1}\sum_{i\in\mathcal{I}_k}|\bar{\bX}_{2i}^\top(\widehat{\balpha}-\bar\balpha^*)|^2\bigg\}^{1/2}\bigg]\\
		&=O\left\{\|\widehat{g}_{1}^{-k}(\cdot)-g_{1}(\cdot)\|_{P,2}\cdot\|\widehat{\balpha}-\bar\balpha^*\|_2\right\}=o_p(n^{-1/2}).
	\end{align*}
	Hence, $F_{21o}=\sqrt{n}\sum\nolimits_{k}(|\mathcal{I}_k|/n)F_{21k}\le \sqrt{n}\sum\nolimits_k F_{51k}=o_p(1)$ and $F_{21}=o_p(1)$. Similarily, we obtain $F_{23}$, $F_{24}$, $F_{25}$, and $F_{26}$ are all $o_p(1)$ terms. Consider the $o$-th element of $F_{22k}$. By Assumption \ref{ass:uni_bound_}, we have
	\begin{align*}
		F_{22ko}
		&=O\bigg\{|\mathcal{I}_k|^{-1}\sum_{i\in\mathcal{I}_k}|g_1(\bS_{1i})-\widehat{g}_1^{-k}(\bS_{1i})|\cdot |\Delta_{j_1j_{2}^{\rm f}i}^E|\bigg\}\\
		&=O\bigg[\bigg\{|\mathcal{I}_k|^{-1}\sum_{i\in\mathcal{I}_k}|g_{1}(\bS_{1i})-\widehat{g}_{1}^{-k}(\bS_{1i})|^2\bigg\}^{1/2}\bigg\{|\mathcal{I}_k|^{-1}\sum_{i\in\mathcal{I}_k}|\Delta_{j_1j_{2}^{\rm f}i}^E|\bigg\}^{1/2}\bigg]=o_p(n^{-1/2}),
	\end{align*}
	where the second equality follows from Cauchy-Schwarz inequality and the fact that $|\Delta_{j_1j_{2}^{\rm f}i}^E|^2=|\Delta_{j_1j_{2}^{\rm f}i}^E|$. Moreover, under the condition that $|\mathcal{I}_k|^{-1}\sum_{i\in\mathcal{I}_k}|\Delta_{j_1j_{2}^{\rm f}i}^E|= O_p(n^{-1/2})$ as well as $\|\widehat{g}_{1}^{-k}(\cdot)-g_{1}(\cdot)\|_{P,2}=o_p(n^{-1/4})$ by Assumption \ref{ass:nuisance1_}, the last equality holds. Therefore, we next proceed to prove that $|\mathcal{I}_k|^{-1}\sum_{i\in\mathcal{I}_k}|\Delta_{j_1j_{2}^{\rm f}i}^E|= O_p(n^{-1/2})$. For any $\delta\in(0,1)$, there exist a constant $C_{2\delta}$ depends on $\delta$ such that $P(\mathcal{A}_{j_1j_2^{\rm f}}^\delta)>1-\delta$, where $\mathcal{A}_{j_1j_2^{\rm f}}^\delta=\{\|\widehat{\balpha}_{j_10j_2^{\rm f}}-\balpha_{j_10j_2^{\rm f}}^*\|_2+\|\widehat{\balpha}_{j_11j_2^{\rm f}}-\balpha_{j_11j_2^{\rm f}}^*\|_2\}\le C_{2\delta}n^{-1/2}$. Conditional on $\mathcal{A}_{j_1j_2^{\rm f}}^\delta$, there exists a constant $C^{\prime\prime}$
	\begin{equation}\label{hat_Delta}
	\begin{split}
	|\mathcal{I}_k|^{-1}\sum_{i\in\mathcal{I}_k}|\Delta_{j_1j_{2}^{\rm f}i}^E|&\le |\mathcal{I}_k|^{-1}\sum_{i\in\mathcal{I}_k}I\{|\bS_{\bar{j}_2^{\rm f}i}^\top\balpha^*_{j_1A_{1i}j_2^{\rm f}}|\le |\bS_{\bar{j}_2^{\rm f}i}^\top(\widehat{\balpha}_{j_1A_{1i}j_2^{\rm f}}-\balpha^*_{j_1A_{1i}j_2^{\rm f}})|\}\\
	&\le |\mathcal{I}_k|^{-1}\sum_{i\in\mathcal{I}_k}I\{|\bS_{\bar{j}_2^{\rm f}i}^\top\balpha^*_{j_1A_{1i}j_2^{\rm f}}|\le C^{\prime\prime}C_{2\delta}n^{-1/2}\},
	\end{split}
	\end{equation}
	where the first equality holds, since for any $a,b>0$, we have $|a-b| \ge |a|$, the second equality follows from Assumption \ref{ass:uni_bound_}. By the law of large number, the above term converae in probability to
	\begin{align*}
	P(A_{1i}=0, |\bS_{\bar{j}_2^{\rm f}i}^\top\balpha^*_{j_10j_2^{\rm f}}|\le C^{\prime\prime}C_{2\delta}n^{-1/2})+P(A_{1i}=1, |\bS_{\bar{j}_2^{\rm f}i}^\top\balpha^*_{j_11j_2^{\rm f}}|\le C^{\prime\prime}C_{2\delta}n^{-1/2}).
	\end{align*}
	Under Assumption \ref{ass:margin2} with $r\ge 1$, the above term is bounded by
	\begin{align*}
 	P(|\bS_{\bar{j}_2^{\rm f}i}^\top\balpha^*_{j_10j_2^{\rm f}}|\le C^{\prime\prime}C_{2\delta}n^{-1/2})+P(|\bS_{\bar{j}_2^{\rm f}i}^\top\balpha^*_{j_11j_2^{\rm f}}|\le C^{\prime\prime}C_{2\delta}n^{-1/2})=O(n^{-1/2}).
	\end{align*}
 	Therefore, we have established that $|\mathcal{I}_k|^{-1}\sum_{i\in\mathcal{I}_k}|\Delta_{j_1j_{2}^{\rm f}i}^E|= O_p(n^{-1/2})$, which implies that $F_{22}=o_p(1)$. As for the term $F_{27}$, for $o$-th element of $F_{27k}$,
	\begin{equation}\label{F26}
	\begin{split}
	F_{27ko}
	&=\dfrac{1}{|\mathcal{I}_k|}\sum_{i\in\mathcal{I}_k}\bS_{1i}\{g_{1}(\bS_{1i})-\widehat{g}_{1}^{-k}(\bS_{1i})\}\\
	&\qquad\qquad\times\sum_{j_2\in\mathcal{J}_2}\bigg\{\sum_{j_2^{\prime\prime}\in\mathcal{J}_2}I(\widehat{\mathcal{B}}_{j_1j_2^{\prime\prime} i})\bar{G}^{*}_{j_1j_2^{\prime\prime} i}-\sum_{j_2^{\prime}\in\mathcal{J}_2}I(\mathcal{B}_{j_1j_2^{\prime} i})\bar{G}^{*}_{j_1j_2^{\prime} i}\bigg\}I(\widehat{\mathcal{B}}_{j_1j_2i})\\
	&=\dfrac{1}{|\mathcal{I}_k|}\sum_{i\in\mathcal{I}_k}\bS_{1i}\{g_{1}(\bS_{1i})-\widehat{g}_{1}^{-k}(\bS_{1i})\}\sum_{j_2\in\mathcal{J}_2}\bigg\{\bar{G}^{*}_{j_1j_2i}-\sum_{j_2^{\prime}\in\mathcal{J}_2}I(\mathcal{B}_{j_1j_2^{\prime} i})\bar{G}^{*}_{j_1j_2^{\prime} i}\bigg\}I(\widehat{\mathcal{B}}_{j_1j_2i})\\
	&=\dfrac{1}{|\mathcal{I}_k|}\sum_{i\in\mathcal{I}_k}\bS_{1i}\{g_{1}(\bS_{1i})-\widehat{g}_{1}^{-k}(\bS_{1i})\}\sum_{j_2\in\mathcal{J}_2}\sum_{j_2^\prime\neq j_2}(\bar{G}^{*}_{j_1j_2i}-\bar{G}^{*}_{j_1j_2^\prime i})I(\widehat{\mathcal{B}}_{j_1j_2i}\cap\mathcal{B}_{j_1j_2^\prime i})\\
	&=O\bigg[\sum_{j_2\in\mathcal{J}_2}\sum_{j_2^\prime\neq j_2}\dfrac{1}{|\mathcal{I}_k|}\sum_{i\in\mathcal{I}_k}|g_{1}(\bS_{1i})-\widehat{g}_{1}^{-k}(\bS_{1i})|\cdot I(\widehat{\mathcal{B}}_{j_1j_2i}\cap\mathcal{B}_{j_1j_2^\prime i})\bigg]\\
	&=O\bigg[\sum_{j_2\in\mathcal{J}_2}\sum_{j_2^\prime\neq j_2}\bigg\{\dfrac{1}{|\mathcal{I}_k|}\sum_{i\in\mathcal{I}_k}|g_{1}(\bS_{1i})-\widehat{g}_{1}^{-k}(\bS_{1i})|^2\bigg\}^{1/2}\bigg\{\dfrac{1}{|\mathcal{I}_k|}\sum_{i\in\mathcal{I}_k}I(\widehat{\mathcal{B}}_{j_1j_2i}\cap\mathcal{B}_{j_1j_2^\prime i})\bigg\}^{1/2}\bigg],
	\end{split}
	\end{equation}
	where the first equality holds, since $\cup_{j_2\in\mathcal{J}_2}\widehat{\mathcal{B}}_{j_1j_2i}$ is the universal set, the second and third equalities hold, since for any $j_2\neq j_2^\prime$, both $\widehat{\mathcal{B}}_{j_1j_2i}\cap \widehat{\mathcal{B}}_{j_1j_2^\prime i}$ and $\mathcal{B}_{j_1j_2i}\cap \mathcal{B}_{j_1j_2^\prime i}$ are empty sets, the fourth equality is by Assumption \ref{ass:uni_bound_} and $\|\bar{\bbeta}_{j_1j_2}^{*}\|_2=O(1)$ (by Assumption \ref{ass:uni_bound_} and the uniformly boundness of $Y_{j_1j_2i}^{2c*}$), the last equality is by Cauchy-Schwarz inequality. We next control $|\mathcal{I}_k|^{-1}\sum_{i\in\mathcal{I}_k}I(\widehat{\mathcal{B}}_{j_1j_2i}\cap\mathcal{B}_{j_1j_2^\prime i})$. For any pair $(j_2,j_2^{\prime})$ such that $j_2\neq j_2^{\prime}$ and any $\delta\in (0,1)$, by the convergence of $\widehat{\bbeta}_{j_1a_1j_2}$ and $\widehat{\bbeta}_{j_1a_1j_2^\prime}$, there exist a constant $C_{3\delta}$ depends on $\delta$ such that $P(\mathcal{C}_{j_2j_2^{\prime}}^{\delta})>1-\delta$, where 
	$$
	\mathcal{C}_{j_2j_2^{\prime}}^{\delta}=\left\{\sum_{a_1=0}^{1}(\|\widehat{\bbeta}_{j_1a_1j_{2}}-\bbeta^*_{j_1a_1j_{2}}\|_2+\|\widehat{\bbeta}_{j_1a_1j_{2}^\prime}-\bbeta^*_{j_1a_1j_{2}^\prime }\|_2)\le C_{3\delta}n^{-1/2}\right\}.
	$$
	Conditional on $\mathcal{C}_{j_2j_2^{\prime}}^{\delta}$, there exists a constant $C^{\prime\prime\prime}$ such that
	\begin{equation}\label{B_J2_J2p}
	\begin{split}
	&|\mathcal{I}_k|^{-1}\sum_{i\in\mathcal{I}_k}I(\widehat{\mathcal{B}}_{j_1j_2i}\cap\mathcal{B}_{j_1j_2^\prime i})\le |\mathcal{I}_k|^{-1}\sum_{i\in\mathcal{I}_k}I(\widehat{G}_{j_1j_2i}-\widehat{G}_{j_1j_2^{\prime}i}\ge 0, G^*_{j_1j_2i}-G^*_{j_1j_2^{\prime}i}\le 0)\\
	&\qquad\le 	|\mathcal{I}_k|^{-1}\sum_{i\in\mathcal{I}_k}I(|G^*_{j_1j_2i}-G^*_{j_1j_2^{\prime}i}|\le |G^*_{j_1j_2i}-G^*_{j_1j_2^\prime i}-\widehat{G}_{j_1j_2i}+\widehat{G}_{j_1j_2^{\prime}i}|)\\
	&\qquad\le 	|\mathcal{I}_k|^{-1}\sum_{i\in\mathcal{I}_k}I(|G_{j_1j_2i}^*-G^*_{j_1j_2^{\prime}i}|\le C^{\prime\prime\prime}C_{3\delta}n^{-1/2}),
	\end{split}
	\end{equation}
	where the second inequality is by $I(a\le 0, b\ge 0)\le I(ab\le 0)\le I(|a|\le |a-b|)$. By the law of large numbers, the above term coverage in probability to
	\begin{align*}
	&P(A_{1i}=0,|\bS_{\bar{l}_2}^\top(\bbeta^*_{j_10j_2}-\bbeta^*_{j_10j_2^\prime})|\le C^{\prime\prime\prime}C_{3\delta}n^{-1/2})\\
	&\qquad+	P(A_{1i}=1,|\bS_{\bar{l}_2}^\top(\bbeta^*_{j_11j_2}-\bbeta^*_{j_11j_2^\prime})|\le C^{\prime\prime\prime}C_{3\delta}n^{-1/2}).
	\end{align*}
	Under Assumption \ref{ass:margin2} with $r\ge 1$, the above term is bounded by
	\begin{align*}
	&	P(|\bS_{\bar{l}_2}^\top(\bbeta^*_{j_10j_2}-\bbeta^*_{j_10j_2^\prime})|\le C^{\prime\prime\prime}C_{3\delta}n^{-1/2})\\
	&\qquad+	P(|\bS_{\bar{l}_2}^\top(\bbeta^*_{j_11j_2}-\bbeta^*_{j_11j_2^\prime})|\le C^{\prime\prime\prime}C_{3\delta}n^{-1/2})=O(n^{-1/2}).
	\end{align*}
	Hence, 	$F_{27ko}=O_p(\|\widehat{g}_{1}^{-k}(\bS_{1i})-g_{1}(\bS_{1i})\|_{P,2}\cdot n^{-1/4})=o_p(n^{-1/2})$, and thus $F_{27}=o_p(1)$. For the term $F_{1}$, we have
	\begin{align}
	F_1&=(1/\sqrt{n})\sum\nolimits_{i\in[n]}\bS_{1i}\{A_{1i}-g_{1}(\bS_{1i})\}(\varepsilon_{1ti}^*+\Delta_{1i}^Y+\Delta_{4i}^Y)\nonumber\\
	&\qquad + (1/\sqrt{n})\sum\nolimits_{i\in[n]}\bS_{1i}\{A_{1i}-g_{1}(\bS_{1i})\}\Delta_{2i}^Y+ (1/\sqrt{n})\sum\nolimits_{i\in[n]}\bS_{1i}\{A_{1i}-g_{1}(\bS_{1i})\}\Delta_{3i}^Y\nonumber\\
	&\qquad+ (1/\sqrt{n})\sum\nolimits_{i\in[n]}\bS_{1i}\{A_{1i}-g_{1}(\bS_{1i})\}\Delta_{5i}^Y + (1/\sqrt{n})\sum\nolimits_{i\in[n]}\bS_{1i}\{A_{1i}-g_{1}(\bS_{1i})\}\Delta_{6i}^Y\nonumber\\
	&\qquad+ (1/\sqrt{n})\sum\nolimits_{i\in[n]}\bS_{1i}\{A_{1i}-g_{1}(\bS_{1i})\}\Delta_{7i}^Y\nonumber\\
	&=F_{11}+F_{12}+F_{13}+F_{14}+F_{15}+F_{16},\label{F1}
	\end{align}
	where the terms $\Delta_{1i}^Y,\cdots,\Delta_{7i}^Y$ are defined in \eqref{Delta_Y}. We next show that $F_{12}, F_{13},\cdots,F_{16}$ are all $o_p(1)$ terms. As described in \eqref{hat_Delta}, under Assumption \ref{ass:margin2} with $r>1$, it follows that $
	F_{12}=O\{(1/\sqrt{n})\sum\nolimits_{i\in[n]}|\Delta_{j_1j_2^{\rm f}i}^E|\}=o_p(1)$. Together with Assumptions \ref{ass:uni_bound_} and \ref{ass:margin2}, $F_{13}=O\{(1/\sqrt{n})\sum\nolimits_{i\in[n]}|\Delta_{j_1j_2^{\rm f}i}^E|\cdot \|\widehat{\balpha}-\bar\balpha^*\|_2\}=o_p(1)$. Consider the $o$-th element of $F_{15}$, we have
	\begin{align*}
	F_{15o}&=O\bigg\{(1/\sqrt{n})\sum_{i\in[n]}\sum_{j_2\in\mathcal{J}_2}|I(\mathcal{B}_{j_1j_2i})-I(\mathcal{B}_{j_1j_2i}^*)|\cdot|\widetilde{G}_{j_1j_2i}-\bar{G}_{j_1j_2i}^{*}|\bigg\}\\
	&=O\bigg\{ (1/\sqrt{n})\sum_{i\in[n]}\sum_{j_2\in\mathcal{J}_2}I(\exists j_2^\prime <j_2,G_{j_1j_2i}^*=G_{j_1j_2^\prime i}^* )|\widetilde{G}_{j_1j_2i}-G_{j_1j_2i}^{\prime*}|\bigg\}\\
	&= O\bigg\{	(1/\sqrt{n})\sum_{i\in[n]}\sum_{j_2\in\mathcal{J}_2}\sum_{j_2^\prime<j_2}I(G_{j_1j_2i}^*=G_{j_1j_2^\prime i}^* )\cdot\|\widetilde{\bbeta}_{j_1j_2}-\bar{\bbeta}_{j_1j_2}^*\|_2\bigg\}\\
	&=O_p\bigg\{\sum_{j_2\in\mathcal{J}_2}\sum_{j_2^\prime<j_2}P(G_{j_1j_2i}^*=G_{j_1j_2^\prime i}^*)\bigg\}=o_p(1),
	\end{align*}
	the last equality holds, since under Assumption \ref{ass:margin2}, $P(G_{j_1j_2i}^*=G_{j_1j_2^\prime i}^*)=P(\bs_{\bar{l}_2}^\top\bbeta^*_{j_1A_1j_2}=\bs_{\bar{l}_2}^\top\bbeta^*_{j_1A_1j_2^\prime})=o(1)$ for any $j_2^\prime<j_2$. Similar to the step controlling the term $F_{27}$ in \eqref{F26}, the $o$-th element of $F_{16}$
	\begin{align*}
	F_{16o}&=(1/\sqrt{n})\sum\nolimits_{i\in[n]}\bS_{1i}\{A_{1i}-g_{1}(\bS_{1i})\}\sum_{j_2\in\mathcal{J}_2}\{I(\widehat{\mathcal{B}}_{j_1j_2i})-I(\mathcal{B}_{j_1j_2i})\}\bar{G}^{*}_{j_1j_2i}\\
	&=(1/\sqrt{n})\sum\nolimits_{i\in[n]}\bS_{1i}\{A_{1i}-g_{1}(\bS_{1i})\}\sum_{j_2\in\mathcal{J}_2}\sum_{j_2^\prime\neq j_2}(\bar{G}^{*}_{j_1j_2i}-\bar{G}^{*}_{j_1j_2^\prime i})I(\widehat{\mathcal{B}}_{j_1j_2i}\cap\mathcal{B}_{j_1j_2^\prime i})\\
	&=O\bigg[(1/\sqrt{n})\sum\nolimits_{i\in[n]}\sum_{j_2\in\mathcal{J}_2}\sum_{j_2^\prime\neq j_2}I(\widehat{\mathcal{B}}_{j_1j_2i}\cap\mathcal{B}_{j_1j_2^\prime i})\bigg].
	\end{align*}
	As described in \eqref{B_J2_J2p}, for any $j_2\neq j_2^\prime$, $(1/\sqrt{n})\sum\nolimits_{i\in[n]}I(\widehat{\mathcal{B}}_{j_1j_2i}\cap\mathcal{B}_{j_1j_2^\prime i})=o_p(1)$ under Assumption \ref{ass:margin2} with $r>1$, hence $F_{16}=o_p(1)$. Similarily,
	\begin{align*}
		F_{14o}&=\dfrac{1}{\sqrt{n}}\sum_{i\in[n]}\bS_{1i}\{A_{1i}-g_{1}(\bS_{1i})\}\sum_{j_2\in\mathcal{J}_2}\{I(\widehat{\mathcal{B}}_{j_1j_2i})-I(\mathcal{B}_{j_1j_2i})\}(\widetilde{G}_{j_1j_2i}-\bar{G}^{*}_{j_1j_2i})\\
		&=\dfrac{1}{\sqrt{n}}\sum_{i\in[n]}\bS_{1i}\{A_{1i}-g_{1}(\bS_{1i})\}\sum_{j_2\in\mathcal{J}_2}\sum_{j_2\neq j_2^\prime}(\widetilde{G}_{j_1j_2i}-\bar{G}^{*}_{j_1j_2i}-\widetilde{G}_{j_1j_2^\prime i}+\bar{G}^{*}_{j_1j_2^\prime i})I(\widehat{\mathcal{B}}_{j_1j_2i}\cap \mathcal{B}_{j_1j_2^\prime i})\\
		&=O\bigg\{\dfrac{1}{\sqrt{n}}\sum_{i\in[n]}\sum_{j_2\in\mathcal{J}_2}\sum_{j_2^\prime\neq j_2}I(\widehat{\mathcal{B}}_{j_1j_2i}\cap\mathcal{B}_{j_1j_2^\prime i}) \|\widetilde{\bbeta}_{j_1j_2}-\bar{\bbeta}_{j_1j_2}^{*}\|_2\bigg\}=o_p(1),
	\end{align*}
	the last equality holds, since $(1/\sqrt{n})\sum\nolimits_{i\in[n]}I(\widehat{\mathcal{B}}_{j_1j_2i}\cap\mathcal{B}_{j_1j_2^\prime i})\le (1/\sqrt{n})\sum_{i\in [n]}I(\widehat{G}_{j_1j_2i}-\widehat{G}_{j_1j_2^{\prime}i}\ge 0, G^*_{j_1j_2i}-G^*_{j_1j_2^{\prime}i}\le 0)=O_p(1)$ under Assumption \ref{ass:margin2} with $r\ge 1$ following \eqref{B_J2_J2p} and $\|\widetilde{\bbeta}_{j_1j_2}-\bar{\bbeta}_{j_1j_2}^{*}\|_2=O_p(n^{-1/2})$. Consequently, we obtain $F_{14}=o_p(1)$. By repeating the argument used to show $\|\widehat{\bV}^{-1}-\bV^{-1}\|_\infty=o_p(n^{-1/2})$, it follows from assumption \ref{ass:PD_matrix_} that $\|\widehat{\bW}^{-1}-\bW^{-1}\|_\infty=o_p(n^{-1/2})$. Together with the upper bounds for $F_2,F_3,\cdots,F_7$ and $F_{12},F_{13},\cdots, F_{16}$,
	\begin{equation}\label{tilde_gamma}
		\sqrt{n}(\widetilde{\bgamma}_{j_1}-\breve{\bgamma}_{j_1})=\bW^{-1}F_{11}+o_p(1).
	\end{equation}
	\textbf{Control $\sqrt{n}(\breve{\bgamma}_{j_1}-\check{\bgamma}_{j_1})$.} By repeating the argument used to control $\sqrt{n}(\widetilde{\balpha}-\check{\balpha})$, we have \begin{equation}\label{tilde_gamma2}
		\sqrt{n}(\breve{\bgamma}_{j_1}-\check{\bgamma}_{j_1})=o_p(1).
	\end{equation}
	\textbf{Control $\sqrt{n}(\check{\bgamma}_{j_1}-\bar{\bgamma}_{j_1}^{*})$.} We have
	\begin{equation}\label{tilde_gamma3}
		\sqrt{n}(\check{\bgamma}_{j_1}-\bar{\bgamma}_{j_1}^{*})=\bW^{-1}\dfrac{1}{\sqrt{n}}\sum_{i\in [n]}\{A_{1i}-g_{1}(\bS_{1i})\}^2\bS_{1i}\{Q^{1t*}(\bS_{1i},j_1,1)-Q^{1t*}(\bS_{1i},j_1,0)-\bS_{1i}^\top\bar\bgamma_{j_1}^*\}.
	\end{equation}
	Combining the upper bounds for \eqref{tilde_gamma}, \eqref{tilde_gamma2}, and \eqref{tilde_gamma3},
	\begin{align*}
		\sqrt{n}(\widetilde{\bgamma}_{j_1}-\bar\bgamma_{j_1}^{*})&=\bW^{-1}\dfrac{1}{\sqrt{n}}\sum_{i\in[n]}\bS_{1i}\{A_{1i}-g_{1}(\bS_{1i})\}\bigg[Y_{j_1i}^{1t*}-f_{j_1}(\bS_{1i})-\{A_{1i}-g_{1}(\bS_{1i})\}\bS_{1i}^\top\bar\bgamma_{j_1}^*\\
		&\qquad+\bar{\bX}_{2i}^\top(\widehat{\balpha}-\bar\balpha^*)\widetilde{E}_{j_1j_2^{\rm f}}+\sum\nolimits_{j_2\in\mathcal{J}_2}I(\mathcal{B}_{j_1j_2i}^*)\bX_{\bar{l}_2^{\,\rm f}}^\top(\widetilde{\bbeta}_{j_1j_2}-\bar\bbeta^{*}_{j_1j_2})\bigg]+o_p(1)\\
		&\xrightarrow{\rm d}\mathcal{N}(0,\bM_{j_1}^{\bar\bgamma}),
	\end{align*}
	where
	\begin{align*}
		\bM_{j_1}^{\bar\bgamma}&=E\{{\rm var}(A_{1}\mid \bS_{1})\bS_{1}\bS_{1}^\top\}^{-1}E(\bQ_{j_1}^{\bar\bgamma}\bQ_{j_1}^{\bar\bgamma\top})E\{{\rm var}(A_{1}\mid \bS_1)\bS_{1}\bS_{1}^\top\}^{-1},\\
		\bQ_{j_1}^{\bar\bgamma}&=\{A_{1}-g_{1}(\bS_{1})\}\bS_{1}[Y_{j_1}^{1t*}-f_{j_1}(\bS_1)-\{A_{1}-g_{1}(\bS_{1})\}\bS_{1}^\top\bar\bgamma_{j_1}^{*}]\\
		&\quad+E[\{A_{1}-g_{1}(\bS_{1})\} \{I(\bS_{\bar{j}_{2}^{\rm f}}^\top\balpha^*_{j_1j_{2}^{\rm f}A_{1}}> 0)-A_{2}\}\bS_{1}\bar{\bX}_{2}^\top]E\{{\rm var}(A_{2}\mid \bar{\bX}_{2})\bar{\bX}_{2}\bar{\bX}_{2}^\top\}^{-1}\bQ^{\bar\balpha}\\
		&\quad+\sum_{j_2\in\mathcal{J}_2}E\left[\{A_{1}-g_{1}(\bS_{1})\}I(\bS_{\bar{l}_2}^\top\bbeta^*_{j_1A_1j_2}> \max_{j_2^\prime\neq j_2}\bS_{\bar{j}_2^\prime}^\top\bbeta_{j_1A_1j_2^\prime }^*)\bS_{1}\bX_{\bar{l}_2^{\,\rm f}}^\top\right]E(\bX_{\bar{l}_2^{\,\rm f}}\bX_{\bar{l}_2^{\,\rm f}}^\top)^{-1}\bQ_{j_1j_2}^{\bar\bbeta}.
	\end{align*}
	We next prove the asymptotic normality of $\widehat{\bgamma}_{j_1}$. Note that
	\begin{align}
	\sqrt{n}(\widehat{\bgamma}_{j_1}-\bgamma_{j_1}^*)&=\sqrt{n}\bigg(\dfrac{1}{n}\bS_{\bar{j}_1i}\bS_{\bar{j}_1i}^\top\bigg)^{-1}\bigg[\dfrac{1}{n}\sum_{i\in [n]}\bS_{\bar{j}_1i}\{\bS_1^\top\bar\bgamma^{*}_{j_1}-\bS_{\bar{j}_1i}^\top\bgamma^*_{j_1}+\bS_{1i}^\top(\widetilde{\bgamma}_{j_1}-\bar\bgamma_{j_1}^{*})\}\bigg]\label{gamma_hat_deco}\\
		&\xrightarrow{\rm d}\mathcal{N}(0,\bM_{j_1}^{\bgamma})\nonumber,
	\end{align}
	where
	\begin{align*}
	\bM_{j_1}^{\bgamma}&=E(\bS_{\bar{j}_1}\bS_{\bar{j}_1}^\top)^{-1}E(\bQ_{j_1}^{\bgamma}\bQ_{j_1}^{\bgamma\top})E(\bS_{\bar{j}_1}\bS_{\bar{j}_1}^\top)^{-1},\\
	\bQ_{j_1}^{\bgamma}&=\bS_{\bar{j}_1}(\bS_1^\top\bar\bgamma^{*}_{j_1}-\bS_{\bar{j}_1i}^\top\bgamma^*_{j_1})+E\{\bS_{\bar{j}_1}\bS_{1}^\top\}E\{{\rm var}(A_{1}\mid \bS_{1})\bS_{1}\bS_{1}^\top\}^{-1}	\bQ^{\bar\gamma}.
	\end{align*}
	
	\noindent\textbf{Step 3b: Consistency of $\widetilde{\bgamma}_{j_1}$ and $\widehat{\bgamma}_{j_1}$} 
	
	Now we prove the consistency of $\widetilde{\bgamma}_{j_1}$. As described in \eqref{hat_Delta}, it follows that $
	F_{12}=O\{(1/\sqrt{n})\sum\nolimits_{i\in[n]}|\Delta_{j_1j_2^{\rm f}i}^E|\}=O_p(1)$ under Assumption \ref{ass:margin2} with $r\ge 1$. We also obtain $F_{13}=O\{\sqrt{n}\|\widehat{\balpha}-\bar\balpha^*\|_2\}=O_p(1)$. From \eqref{B_J2_J2p}, for any $j_2\neq j_2^\prime$, $(1/\sqrt{n})\sum\nolimits_{i\in[n]}I(\widehat{\mathcal{B}}_{j_1j_2i}\cap\mathcal{B}_{j_1j_2^\prime i})=O_p(1)$ under Assumption \ref{ass:margin2} with $r\ge 1$, which implies $F_{16}=O_p(1)$. Moreover, $F_{14}=\sum_{j_2\in\mathcal{J}_2}\sum_{j_2^\prime\neq j_2} \|\widetilde{\bbeta}_{j_1j_2}-\bar\bbeta_{j_1j_2}^{*}\|_2=O_p(1)$. Combining the above results, we conclude that all of $F_{12}, F_{13}, F_{14}$, and $F_{16}$ are $O_p(1)$ terms. Together with the upper bounds for $F_{15}, F_2,F_3,\cdots,F_7$, and using \eqref{tilde_gamma2} and \eqref{tilde_gamma3}, it follows that $\sqrt{n}(\widetilde{\bgamma}_{j_1}-\bar\bgamma_{j_1}^*)=O_p(1)$. At this point, the quantity $n^{-1}\sum_{i\in [n]}\bS_{\bar{j}_1i}\bS_{1i}^\top(\widetilde{\bgamma}_{j_1}-\bar\bgamma_{j_1}^{\prime*})$ appearing in \eqref{gamma_hat_deco} is $O_p(n^{-1/2})$ term, we obtain that $\sqrt{n}(\widehat{\bgamma}_{j_1}-\bgamma_{j_1}^{*})=O_p(1)$.

	\noindent\textbf{Step 4: Asymptotic Normality $\widehat{\bdelta}_{j_1}$.} 
	
	Let $\Delta^{C,1c}_{j_1}=C^{1c}(j_1)-C^{1c}(j_1^{\rm f})$, we have
	\begin{align*}
	Y_{j_1}^{1c*}-Y_{j_1^{\rm f}}^{1c*}&=Y_{j_1}^{1t*}+\bS_1^\top\bar\bgamma_{j_1}^{*}[I\{\bS_{\bar{j}_1}^\top\bgamma_{j_1}^*>0\}-A_1]-Y_{j_1^{\rm f}}^{1t*}+\bS_1^\top\bar\bgamma_{j_1^{\rm f}}^{*}[I(\bS_1^\top\bgamma_{j_1^{\rm f}}^{*}>0)-A_1]-\Delta^{C,1c}_{j_1},\\
	\widehat{Y}_{j_1}^{1c}-\widehat{Y}_{j_1^{\rm f}}^{1c}&=\widehat{Y}_{j_1}^{1t}+\bS_1^\top\widetilde{\bgamma}_{j_1}[I\{\bS_{\bar{j}_1}^\top\widehat{\bgamma}_{j_1}>0\}-A_1]-\widehat{Y}_{j_1^{\rm f}}^{1t}+\bS_1^\top\widetilde{\bgamma}_{j_1^{\rm f}}[I(\bS_1^\top\widehat{\bgamma}_{j_1^{\rm f}}>0)-A_1]-\Delta^{C,1c}_{j_1}.
	\end{align*}
	It follows that
	\begin{align*}
	&\bS_1^\top\widetilde{\bgamma}_{j_1}[I\{\bS_{\bar{j}_1}^\top\widehat{\bgamma}_{j_1}>0\}-A_1]-\bS_1^\top\bar\bgamma_{j_1}^{*}[I\{\bS_{\bar{j}_1}^\top\bgamma_{j_1}^*>0\}-A_1]\\
	&=\bS_1^\top(\widetilde{\bgamma}_{j_1}-\bar{\bgamma}_{j_1}^*)[I\{\bS_{\bar{j}_1}^\top\bgamma^*_{j_1}>0\}-A_1]\\
	&\qquad + \bS_1^\top(\widetilde{\bgamma}_{j_1}-\bar{\bgamma}_{j_1}^*)[I\{\bS_{\bar{j}_1}^\top\widehat\bgamma_{j_1}>0\}-I\{\bS_{\bar{j}_1}^\top\bgamma^*_{j_1}>0\}]\\
	&\qquad +\bS_1^\top\bar{\bgamma}_{j_1}^*[I\{\bS_{\bar{j}_1}^\top\widehat\bgamma_{j_1}>0\}-I\{\bS_{\bar{j}_1}^\top\bgamma^*_{j_1}>0\}].
	\end{align*}
 	Recall that
	\begin{align*}
		\widehat{Y}_{j_1i}^{1t}-Y_{j_1i}^{1t*}&=\bar{\bX}_{2i}^\top(\widehat{\balpha}-\bar\balpha^*)\widetilde{E}_{j_1j_2^{\rm f}i}+\bar{\bX}_{2i}^\top\bar\balpha^*\Delta_{j_1j_2^{\rm f}i}^E+\bar{\bX}_{2i}^\top(\widehat{\balpha}-\bar\balpha^*)\Delta_{j_1j_2^{\rm f}i}^E\\
		&\ \ +\sum_{j_2\in\mathcal{J}_2}I(\mathcal{B}_{j_1j_2i}^*)(\widetilde{G}_{j_1j_2i}-\bar G^{*}_{j_1j_2i})+\sum_{j_2\in\mathcal{J}_2}\{I(\widehat{\mathcal{B}}_{j_1j_2i})-I(\mathcal{B}_{j_1j_2i})\}(\widetilde{G}_{j_1j_2i}-\bar G^{*}_{j_1j_2i})\\
		&\ \ +\sum_{j_2\in\mathcal{J}_2}\{I(\mathcal{B}_{j_1j_2i})-I(\mathcal{B}_{j_1j_2i}^*)\}(\widetilde{G}_{j_1j_2i}-\bar G^{*}_{j_1j_2i})+\sum_{j_2\in\mathcal{J}_2}\{I(\widehat{\mathcal{B}}_{j_1j_2i})-I(\mathcal{B}_{j_1j_2i})\}\bar{G}^{*}_{j_1j_2i},
	\end{align*}
	where $\widetilde{E}_{j_1j_{2}^{\rm f}i}=I(\bS_{\bar{j}_2^{\rm f}i}^\top\balpha^*_{j_1j_{2}^{\rm f}A_{1i}}> 0)-A_{2i}$, $\Delta_{j_1j_{2}^{\rm f}i}^E=I(\bS_{\bar{j}_2^{\rm f}i}^\top\widehat{\balpha}_{j_1A_{1i}j_{2}^{\rm f}}> 0)-I(\bS_{\bar{j}_2^{\rm f}i}^\top\balpha^*_{j_1A_{1i}j_{2}^{\rm f}}> 0)$,
		\begin{align*}
		\widehat{G}_{j_1j_2i}&=\bS_{\bar{l}_2i}^\top\widehat{\bbeta}_{j_1A_{1i}j_2},\;\; 	\widetilde{G}_{j_1j_2i}=\bX_{\bar{l}_2^{\,\rm f}i}^\top\widetilde{\bbeta}_{j_1j_2},\;\; 
		G^*_{j_1j_2i}=\bS_{\bar{l}_2i}^\top\bbeta^*_{j_1A_{1i}j_2},\;\;
		\bar G^{*}_{j_1j_2i}=\bX_{\bar{l}_2^{\,\rm f}i}^\top\bar\bbeta^{*}_{j_1j_2},\\
		\widehat{\mathcal{B}}_{j_1j_2i}&=\{\widehat{G}_{j_1j_2i}\ge \widehat{G}_{j_1j_2^\prime i}, \widehat{G}_{j_1j_2i}> \widehat{G}_{j_1j_2^{\prime\prime} i}, j_2^\prime<j_2, j_2^{\prime\prime}>j_2\},\\
		\mathcal{B}_{j_1j_2i}&=\{G_{j_1j_2i}^*\ge G^*_{j_1j_2^\prime i}, G^*_{j_1j_2i}> G^*_{j_1j_2^{\prime\prime} i}, j_2^\prime<j_2, j_2^{\prime\prime}>j_2\},\\
		\mathcal{B}_{j_1j_2i}^*&=\{G_{j_1j_2i}^*> G_{j_1j_2^\prime i}^*, j_2^{\prime}\neq j_2\}.
	\end{align*}	
	Consequently, we obtain
		\begin{align*}
		\widehat{\bdelta}_{j_1}-\bdelta^*_{j_1}
		&=\bZ^{-1}n^{-1}\sum_{i\in[n]}\bS_{l_1i}\left\{(Y_{j_1i}^{1c*}-Y_{j_1^{\rm f}i}^{1c*}-\bS_{l_1i}^\top\bdelta_{j_1}^*)+(\widehat{Y}_{j_1i}^{1c}-\widehat{Y}_{j_1^{\rm f}i}^{1c})-(Y_{j_1i}^{1c*}-Y_{j_1^{\rm f}i}^{1c*})\right\}\\
		&=\bZ^{-1}n^{-1}\sum_{i\in[n]}\bS_{l_1i}\bigg[(Y_{j_1i}^{1c*}-Y_{j_1^{\rm f}i}^{1c*}-\bS_{l_1i}^\top\bdelta_{j_1}^*)+\bar{\bX}_{2i}^\top(\widehat{\balpha}-\bar\balpha^*)(\widetilde{E}_{j_1j_2^{\rm f}i}-\widetilde{E}_{j_1^{\rm f}j_2^{\rm f}i})\\
		&\quad\quad+\sum_{j_2\in\mathcal{J}_2}\{I(\mathcal{B}_{j_1j_2i}^*)\bX_{\bar{l}_2^{\,\rm f}i}^\top(\widetilde{\bbeta}_{j_1j_2}-\bar\bbeta^{*}_{j_1j_2})-I(\mathcal{B}_{j_1^{\rm f}j_2i}^*)\bX_{\bar{l}_2^{\,\rm f}i}^\top(\widetilde{\bbeta}_{j_1^{\rm f}j_2}-\bar\bbeta^{*}_{j_1^{\rm f}j_2})\}\\
		&\quad\quad + \bS_{1i}^\top(\widetilde{\bgamma}_{j_1}-\bar\bgamma_{j_1}^{*})\{I(\bS_{\bar{j}_1}^\top\bgamma_{j_1}^*>0)-A_{1i}\}- \bS_{1i}^\top(\widetilde{\bgamma}_{j_1^{\rm f}}-\bar\bgamma_{j_1^{\rm f}}^{*})\{I(\bS_{1i}^\top\bgamma_{j_1^{\rm f}}^*>0)-A_{1i}\}\bigg]\\
		&\quad+\bZ^{-1}(H_2+H_3),
	\end{align*}
	where $\bZ=n^{-1}\sum_{i\in[n]}\bS_{l_1i}\bS_{l_1i}^\top$,
		\begin{align*}
		H_2&=n^{-1}\sum_{i\in[n]}\bS_{l_1i}\bigg[\bS_{1i}^\top(\widetilde{\bgamma}_{j_1}-\bar{\bgamma}_{j_1}^{*})\{I(\bS_{\bar{j}_1i}^\top\widehat{\bgamma}_{j_1}>0)-I(\bS_{\bar{j}_1i}^\top\bgamma_{j_1}^*>0)\}\\
		&\qquad -\bS_{1i}^\top(\widetilde{\bgamma}_{j_1^{\rm f}}-\bar\bgamma_{j_1^{\rm f}}^{*})\{I(\bS_{1i}^\top\widehat{\bgamma}_{j_1^{\rm f}}>0)-I(\bS_{1i}^\top\bgamma_{j_1^{\rm f}}^*>0)\}\\
		&\qquad+\bS_{1i}^\top\bar\bgamma_{j_1}^{*}\{I(\bS_{\bar{j}_1i}^\top\widehat{\bgamma}_{j_1}>0)-I(\bS_{\bar{j}_1}^\top\bgamma_{j_1}^*>0)\}- \bS_{1i}^\top\bar\bgamma_{j_1^{\rm f}}^{*}\{I(\bS_{1i}^\top\widehat{\bgamma}_{j_1^{\rm f}}>0)-I(\bS_{1i}^\top\bgamma_{j_1^{\rm f}}^*>0)\}\bigg],
	\end{align*}
	and
	\begin{align*}
	H_3&=n^{-1}\sum_{i\in[n]}\bS_{l_1i}\bigg[\bar{\bX}_{2i}^\top\bar\balpha^*\Delta_{j_1j_2^{\rm f}i}^E+\bar{\bX}_{2i}^\top(\widehat{\balpha}-\bar\balpha^*)\widehat{\Delta}_{j_1j_2^{\rm f}i}^E-\bar{\bX}_{2i}^\top\bar\balpha^*{\Delta}_{j_1^{\rm f}j_2^{\rm f}i}^E-\bar{\bX}_{2i}^\top(\widehat{\balpha}-\bar\balpha^*){\Delta}_{j_1^{\rm f}j_2^{\rm f}i}^E\\
	&\quad +\sum_{j_2\in\mathcal{J}_2}\{I(\widehat{\mathcal{B}}_{j_1j_2i})-I(\mathcal{B}_{j_1j_2i})\}(\widetilde{G}_{j_1j_2i}-\bar{G}^{*}_{j_1j_2i})+\sum_{j_2\in\mathcal{J}_2}\{I(\mathcal{B}_{j_1j_2i})-I(\mathcal{B}^*_{j_1j_2i})\}(\widetilde{G}_{j_1j_2i}-\bar G^{*}_{j_1j_2i})\\
	&\quad+\sum_{j_2\in\mathcal{J}_2}\{I(\widehat{\mathcal{B}}_{j_1j_2i})-I(\mathcal{B}_{j_1j_2i})\}\bar{G}^{*}_{j_1j_2i} -\sum_{j_2\in\mathcal{J}_2}\{I(\widehat{\mathcal{B}}_{j_1^{\rm f}j_2i})-I(\mathcal{B}_{j_1^{\rm f}j_2i})\}(\widetilde{G}_{j_1^{\rm f}j_2i}-\bar{G}^{*}_{j_1^{\rm f}j_2i})\\
	&\quad-\sum_{j_2\in\mathcal{J}_2}\{I(\mathcal{B}_{j_1^{\rm f}j_2i})-I(\mathcal{B}_{j_1^{\rm f}j_2i}^*)\}(\widetilde{G}_{j_1^{\rm f}j_2i}-\bar{G}^{*}_{j_1^{\rm f}j_2i})-\sum_{j_2\in\mathcal{J}_2}\{I(\widehat{\mathcal{B}}_{j_1^{\rm f}j_2i})-I(\mathcal{B}_{j_1^{\rm f}j_2i})\}\bar{G}^{*}_{j_1^{\rm f}j_2i}\bigg].
	\end{align*}
	By Assumptions \ref{ass:uni_bound_} and \ref{ass:PD_matrix_},
	\begin{align*}
		c_3\|\bar\bgamma^{*}_{j_1}\|_2^2&\le \lambda_{\rm min}[E\{{\rm var}(A_{1i}\mid\bS_{1i})\bS_{1i}\bS_{1i}^\top\}]\cdot \|\bar\bgamma^{*}_{j_1}\|_2^2\le E[\{A_{1i}-g_1(\bS_{1i})\}^2(\bS_{1i}^\top\bar\bgamma^{*}_{j_1})^2]\\
		&\le E[\{A_{2i}-g_1(\bS_{1i})\}^2\{Q^{1t*}(\bS_{1},j_1,1)-Q^{1t*}(\bS_{1},j_1,0)\}^2]=O(1),
	\end{align*}
	hence $\|\bar\bgamma^{*}_{j_1}\|_2=O(1)$. By repeating the steps used in deriving the bounds for $D_2$ and $D_3$, we obtain $H_2=o_p(n^{-1/2})$ under Assumption \ref{ass:margin2} with $r>1$ and $H_2=O_p(n^{-1/2})$ under Assumption \ref{ass:margin2} with $r\ge 1$. By analogy with the argument used to bound $F_{12}+F_{13}+F_{14}+F_{15}+F_{16}$ in \eqref{F1}, we can establish that $H_3=o_p(n^{-1/2})$ under Assumption \ref{ass:margin2} with $r>1$, and $H_3=O_p(n^{-1/2})$ under Assumption \ref{ass:margin2} with $r\ge 1$.  Moreover, it follows by Assumption \ref{ass:PD_matrix_} and the fact that $\bS_{l_1}$ is a subvector of $\bS_1$ that $\lambda_{\min}\{E(\bS_{l_1}\bS_{l_1}^\top)\}\ge \lambda_{\min}\{E(\bS_{1}\bS_{1}^\top)\}\ge \lambda_{\min}[E\{{\rm var}(A_{1}\mid \bS_{1})\bS_{1}\bS_{1}^\top\}]>c_2$. Using the same
	argument as for showing $\|\bU^{-1}\|_\infty=O_p(1)$, we can prove that $\|\bZ^{-1}\|_\infty=O_p(1)$. Consequently, it follows that under Assumption \ref{ass:margin2} with $r>1$
	\begin{align*}
		\sqrt{n}(\widehat{\bdelta}_{j_1}-\bdelta^*_{j_1})\xrightarrow{\rm d}\mathcal{N}(0,\bM_{j_1}^{\bdelta}),
	\end{align*}
 	where
	\begin{align*}
		\bM_{j_1}^{\bdelta}&=E(\bS_{l_1}\bS_{l_1}^\top)^{-1}E(\bQ^{\bdelta}_{j_1}\bQ^{\bdelta\top}_{j_1})E(\bS_{l_1}\bS_{l_1}^\top)^{-1},\\
		\bQ^{\bdelta}_{j_1}&=\bS_{l_1i}(Y_{j_1i}^{1c*}-Y_{j_1^{\rm f}i}^{1c*}-\bS_{l_1i}^\top\bdelta_{j_1}^*)+E\{(\widetilde{E}_{j_1j_2^{\rm f}}-\widetilde{E}_{j_1^{\rm f}j_2^{\rm f}})\bS_{l_1}\bar{\bX}_2^\top\}E\{{\rm var}(A_{2}\mid \bar{\bX}_{2})\bar{\bX}_{2}\bar{\bX}_{2}^\top\}^{-1}\bQ^{\bar\balpha}\\
		& +\sum\nolimits_{j_2\in\mathcal{J}_2}E\left[I(\bS_{\bar{l}_2}^\top\bbeta^*_{j_1A_1j_2}> \max_{j_2^\prime\neq j_2}\bS_{\bar{l}_2}^\top\bbeta_{j_1A_1j_2^\prime}^*)\bS_{l_1}\bX_{\bar{l}_2^{\,\rm f}}^\top\right]E(\bX_{\bar{l}_2^{\,\rm f}}\bX_{\bar{l}_2^{\,\rm f}}^\top)^{-1}\bQ_{j_1j_2}^{\bar\bbeta}\\
		&-\sum\nolimits_{j_2\in\mathcal{J}_2}E\left[I(\bS_{1}^\top\bbeta^*_{j_1^{\rm f}A_1j_2}> \max_{j_2^\prime\neq j_2}\bS_{1}^\top\bbeta_{j_1^{\rm f}A_1j_2^\prime }^*)\bS_{l_1}\bX_{\bar{l}_2^{\,\rm f}}^\top\right]E(\bX_{\bar{l}_2^{\,\rm f}}\bX_{\bar{l}_2^{\,\rm f}}^\top)^{-1}\bQ_{j_1^{\rm f}j_2}^{\bar\bbeta}\\
		&+E[\{I(\bS_{\bar{j}_1}^\top\bgamma_{j_1}^*>0)-A_1\}\bS_{l_1}\bS_1^\top]E\{{\rm var}(A_{1}\mid \bS_1)\bS_{1}\bS_{1}^\top\}^{-1}\bQ_{j_1}^{\bar\bgamma}\\
		&-E[\{I(\bS_{1}^\top\bgamma_{j_1^{\rm f}}^*>0)-A_1\}\bS_{l_1}\bS_1^\top]E\{{\rm var}(A_{1}\mid \bS_1)\bS_{1}\bS_{1}^\top\}^{-1}\bQ_{j_1^{\rm f}}^{\bar\bgamma}.
	\end{align*}
	Moreover, under Assumption \ref{ass:margin2} with $r\ge 1$, $\|\widehat{\bdelta}_{j_1}-\bdelta_{j_1}^*\|_2=O_p(n^{1/2})$.
	\end{proof}

	\begin{proof}[Proof of Theorem \ref{thm:regret}]
		We denote $J_1=\hat{\pi}^{1c}(\bS_{l_1})$, $J_2=\hat{\pi}^{2c}(\bS_{\bar{J}_1},J_1,\hat{\pi}^{1t}(\bS_{\bar{J}_1},J_1))$. By the proof of Theorem \ref{thm:optimal_DTR}, we obtain
		\begin{align*}
			{\rm Profit}(\check\pi^{1c},\check\pi^{1t},\check\pi^{2c},\check\pi^{2t})-{\rm Profit}(\hat{\pi}^{1c},\hat{\pi}^{1t},\hat{\pi}^{2c},\hat{\pi}^{2t})=\hat{G}^{1c}+\hat{G}^{1t}+\hat{G}^{2c}+\hat{G}^{2t},
		\end{align*}
		where
		\begin{align*}
			\hat{G}^{2t}&=E[Q^{2t}\{\bS_{\bar{J}_2},J_1,\hat{\pi}^{1t}(\bS_{\bar{J}_1},J_1),J_2,\check{\pi}^{2t}(\bS_{\bar{J}_2},J_1,\hat{\pi}^{1t}(\bS_{\bar{J}_1},J_1),J_2)\}\\
			&\qquad\qquad-Q^{2t}\{\bS_{\bar{J}_2},J_1,\hat{\pi}^{1t}(\bS_{\bar{J}_1},J_1),J_2,\hat{\pi}^{2t}(\bS_{\bar{J}_2},J_1,\hat{\pi}^{1t}(\bS_{\bar{J}_1},J_1),J_2)\}],\\
			\hat{G}^{2c}&=E[Q^{2c}\{\bS_{\bar{L}_2},J_1,\hat{\pi}^{1t}(\bS_{\bar{J}_1},J_1),\check{\pi}^{2c}(\bS_{\bar{L}_2},J_1,\hat{\pi}^{1t}(\bS_{\bar{J}_1},J_1))\}\\
			&\qquad\qquad-Q^{2c}\{\bS_{\bar{L}_2},J_1,\hat{\pi}^{1t}(\bS_{\bar{J}_1},J_1),\hat{\pi}^{2c}(\bS_{\bar{L}_2},J_1,\hat{\pi}^{1t}(\bS_{\bar{J}_1},J_1))\}],\\
			\hat{G}^{1t}&=E[Q^{1t}\{\bS_{\bar{J}_1},J_1,\check{\pi}^{1t}(\bS_{\bar{J}_1},J_1)\}-Q^{1t}\{\bS_{\bar{J}_1},J_1,\hat{\pi}^{1t}(\bS_{\bar{J}_1},J_1)\}],\\
			\hat{G}^{1c}&=E[Q_1^{1c}\{\bS_{l_1},\check{\pi}^{1c}(\bS_{l_1})\}-Q_1^{1c}\{\bS_{l_1},\hat{\pi}^{1c}(\bS_{l_1})\}].
		\end{align*}
		We next control the four terms above, respectively. For simplicity, we use $\hat{\pi}^{1t}$ denote $\hat{\pi}^{1t}(\bS_{\bar{J}_1},J_1)$,
		\begin{align*}
			&E\left[\Delta^{2t}(\bS_{\bar{J}_2},J_1,\hat{\pi}^{1t},J_2)\{I(\Delta^{2t}(\bS_{\bar{J}_2},J_1,\hat{\pi}^{1t},J_2)>0)-I(\bS_{\bar{J}_2}^\top\widehat{\balpha}_{J_1\hat{\pi}^{1t}J_2}>0)\}\right]\\
			&=E\left[\Delta^{2t}(\bS_{\bar{J}_2},J_1,\hat{\pi}^{1t},J_2)\{I(\Delta^{2t}(\bS_{\bar{J}_2},J_1,\hat{\pi}^{1t},J_2)>0)-I(\bS_{\bar{J}_2}^\top\balpha^*_{J_1\hat{\pi}^{1t}J_2}>0)\}\right]\\
			&\qquad+E\left[\bS_{\bar{J}_2}^\top\balpha^*_{J_1\hat{\pi}^{1t}J_2}\{I(\bS_{\bar{J}_2}^\top\balpha^*_{J_1\hat{\pi}^{1t}J_2}>0)-I(\bS_{\bar{J}_2}^\top\widehat{\balpha}_{J_1\hat{\pi}^{1t}J_2}>0)\}\right]\\
			&\qquad+E\left[\{\Delta^{2t}(\bS_{\bar{J}_2},J_1,\hat{\pi}^{1t},J_2)-\bS_{\bar{J}_2}^\top\balpha^*_{J_1\hat{\pi}^{1t}J_2}\}\{I(\bS_{\bar{J}_2}^\top\balpha^*_{J_1\hat{\pi}^{1t}J_2}>0)-I(\bS_{\bar{J}_2}^\top\widehat{\balpha}_{J_1\hat{\pi}^{1t}J_2}>0)\}\right]\\
			&=T_1+T_2+T_3.
		\end{align*}
		Since $\bS_{\bar{J}_2}^\top\balpha^*_{J_1\hat{\pi}^{1t}J_2}\{I(\Delta^{2t}(\bS_{\bar{J}_2},J_1,\hat{\pi}^{1t},J_2)>0)-I(\bS_{\bar{J}_2}^\top\balpha^*_{J_1\hat{\pi}^{1t}J_2}>0)\}\le 0$, the first term can be bounded by
		\begin{align*}
			T_1&\le E\left[\{\Delta^{2t}(\bS_{\bar{J}_2},J_1,\hat{\pi}^{1t},J_2)-\bS_{\bar{J}_2}^\top\balpha^*_{J_1\hat{\pi}^{1t}J_2}\}\{I(\Delta^{2t}(\bS_{\bar{J}_2},J_1,\hat{\pi}^{1t},J_2)>0)-I(\bS_{\bar{J}_2}^\top\balpha^*_{J_1\hat{\pi}^{1t}J_2}>0)\}\right]\\
			&\le E\big[|\Delta^{2t}(\bS_{\bar{J}_2},J_1,\hat{\pi}^{1t},J_2)-\bS_{\bar{J}_2}^\top\balpha^*_{J_1\hat{\pi}^{1t}J_2}|\\
			&\qquad\qquad\times I\{|\bS_{\bar{J}_2}^\top\balpha^*_{J_1\hat{\pi}^{1t}J_2}|\le |\Delta^{2t}(\bS_{\bar{J}_2},J_1,\hat{\pi}^{1t},J_2)-\bS_{\bar{J}_2}^\top\balpha^*_{J_1\hat{\pi}^{1t}J_2}|\}\big]\\
			&\le \epsilon_n\cdot P\left\{|\bS_{\bar{J}_2}^\top\balpha^*_{J_1\hat{\pi}^{1t}J_2}|\le \epsilon_n\right\}\\
			&=\epsilon_n\sum_{j_1,j_2,a_1} P\{|\bS_{\bar{j}_2}^\top\balpha^*_{j_1a_1j_2}|\le \varepsilon,J_1=j_1,J_2=j_2,\hat{\pi}^{1t}=a_1\}=O(\epsilon_n^{r+1}),
		\end{align*}
		where the second inequality holds, since for any $a,b>0$, we have $|a-b|\ge |a|$ whenever $ab<0$, the third inequality is by Assumption \ref{ass:margin2}. It follows by $\|\widehat{\balpha}_{j_1a_1j_2}-\balpha^*_{j_1a_1j_2}\|_2=O_p(n^{-1/2})$, Assumptions \ref{ass:uni_bound_}, and \ref{ass:margin2} with $r\ge 1$ that
		\begin{align*}
			T_2&\le E\left[|\bS_{\bar{J}_2}^\top\balpha^*_{J_1\hat{\pi}^{1t}J_2}|\cdot|I(\bS_{\bar{J}_2}^\top\balpha^*_{J_1\hat{\pi}^{1t}J_2}>0)-I(\bS_{\bar{J}_2}^\top\widehat\balpha_{J_1\hat{\pi}^{1t}J_2}>0)|\right]\\
			&\le E\left[|\bS_{\bar{J}_2}^\top\balpha^*_{J_1\hat{\pi}^{1t}J_2}|\cdot I\{|\bS_{\bar{J}_2}^\top\balpha^*_{J_1\hat{\pi}^{1t}J_2}|\le |\bS_{\bar{J}_2}^\top(\widehat\balpha_{J_1\hat{\pi}^{1t}J_2}-\balpha^*_{J_1\hat{\pi}^{1t}J_2})|\}\right]\\
			&\le E\left[|\bS_{\bar{J}_2}^\top(\widehat\balpha_{J_1\hat{\pi}^{1t}J_2}-\balpha^*_{J_1\hat{\pi}^{1t}J_2})|\cdot I\{|\bS_{\bar{J}_2}^\top\balpha^*_{J_1\hat{\pi}^{1t}J_2}|\le |\bS_{\bar{J}_2}^\top(\widehat\balpha_{J_1\hat{\pi}^{1t}J_2}-\balpha^*_{J_1\hat{\pi}^{1t}J_2})|\}\right]\\
			&=O\left[\|\widehat\balpha_{J_1\hat{\pi}^{1t}J_2}-\balpha^*_{J_1\hat{\pi}^{1t}J_2}\|_2\cdot P\{|\bS_{\bar{J}_2}^\top\balpha^*_{J_1\hat{\pi}^{1t}J_2}|\le |\bS_{\bar{J}_2}^\top(\widehat\balpha_{J_1\hat{\pi}^{1t}J_2}-\balpha^*_{J_1\hat{\pi}^{1t}J_2})|\}\right]\\
			&=O_p\{(1/\sqrt{n})^{r+1}\}.
		\end{align*}
		Moreover, 
		\begin{align*}
			T_3&\le E\left[|\Delta^{2t}(\bS_{\bar{J}_2},J_1,\hat{\pi}^{1t},J_2)-\bS_{\bar{J}_2}^\top\balpha^*_{J_1\hat{\pi}^{1t}J_2}|\cdot |I(\bS_{\bar{J}_2}^\top\balpha^*_{J_1\hat{\pi}^{1t}J_2}>0)-I(\bS_{\bar{J}_2}^\top\widehat\balpha_{J_1\hat{\pi}^{1t}J_2}>0)|\right]\\
			&\le E\left[|\Delta^{2t}(\bS_{\bar{J}_2},J_1,\hat{\pi}^{1t},J_2)-\bS_{\bar{J}_2}^\top\balpha^*_{J_1\hat{\pi}^{1t}J_2}|\cdot I\{|\bS_{\bar{J}_2}^\top\balpha^*_{J_1\hat{\pi}^{1t}J_2}|\le |\bS_{\bar{J}_2}^\top(\widehat\balpha_{J_1\hat{\pi}^{1t}J_2}-\balpha^*_{J_1\hat{\pi}^{1t}J_2})|\}\right]\\
			&\le\epsilon_n\cdot P\{|\bS_{\bar{J}_2}^\top\balpha^*_{J_1\hat{\pi}^{1t}J_2}|\le |\bS_{\bar{J}_2}^\top(\widehat\balpha_{J_1\hat{\pi}^{1t}J_2}-\balpha^*_{J_1\hat{\pi}^{1t}J_2})|\}=O_p\{\epsilon_n(1/\sqrt{n})^r\}.
		\end{align*}
		Consequently, $\hat{G}^{2t}=O_p\{(1/\sqrt{n})^{r+1}+\epsilon_n(1/\sqrt{n})^r+\epsilon_n^{r+1}\}$. As for the second term,
		\begin{align*}
			\hat{G}^{2c}
			&= E\bigg[\sum\nolimits_{j_2\in\mathcal{J}_2}\Delta^{2c}(\bS_{\bar{L}_2},J_1,\hat{\pi}^{1t},j_2)\{I(\mathcal{D}_{J_1j_2})-I(\mathcal{E}_{J_1j_2})\}\bigg]\\
			&=E\bigg[\sum\nolimits_{j_2\in\mathcal{J}_2}\Delta^{2c}(\bS_{\bar{L}_2},J_1,\hat{\pi}^{1t},j_2)\{I(\mathcal{D}_{J_1j_2})-I(\mathcal{F}_{J_1j_2})\}\bigg]\\
			&\qquad+E\bigg[\sum\nolimits_{j_2\in\mathcal{J}_2}\bS_{\bar{L}_2}^\top\bbeta_{J_1\hat{\pi}^{1t}j_2}^*\{I(\mathcal{F}_{J_1j_2})-I(\mathcal{E}_{J_1j_2})\}\bigg]\\
			&\qquad +E\bigg[\sum\nolimits_{j_2\in\mathcal{J}_2}\{\Delta^{2c}(\bS_{\bar{L}_2},J_1,\hat{\pi}^{1t},j_2)-\bS_{\bar{L}_2}^\top\bbeta_{J_1\hat{\pi}^{1t}j_2}^*\}\{I(\mathcal{F}_{J_1j_2})-I(\mathcal{E}_{J_1j_2})\}\bigg]\\
			&=T_4+T_5+T_6,
		\end{align*}
		where
		\begin{align*}
			\mathcal{D}_{J_1j_2}&=\{\Delta^{2c}(\bS_{\bar{L}_2},J_1,\hat{\pi}^{1t},j_2)\ge \Delta^{2c}(\bS_{\bar{L}_2},J_1,\hat{\pi}^{1t},j_2^\prime),\\
			&\qquad\qquad\Delta^{2c}(\bS_{\bar{L}_2},J_1,\hat{\pi}^{1t},j_2)>\Delta^{2c}(\bS_{\bar{L}_2},J_1,\hat{\pi}^{1t},j_2^{\prime\prime}),j_2^\prime<j_2,j_2^{\prime\prime}>j_2\},\\
			\mathcal{E}_{J_1j_2}&=\{\bS_{\bar{L}_2}^\top\widehat\bbeta_{J_1\hat{\pi}^{1t}j_2}\ge \bS_{\bar{L}_2}^\top\widehat\bbeta_{J_1\hat{\pi}^{1t}j_2^\prime},\bS_{\bar{L}_2}^\top\widehat\bbeta_{J_1\hat{\pi}^{1t}j_2}>\bS_{\bar{L}_2}^\top\widehat\bbeta_{J_1\hat{\pi}^{1t}j_2^{\prime\prime}},j_2^\prime<j_2,j_2^{\prime\prime}>j_2\},\\
			\mathcal{F}_{J_1j_2}&=\{\bS_{\bar{L}_2}^\top\bbeta_{J_1\hat{\pi}^{1t}j_2}^*\ge \bS_{\bar{L}_2}^\top\bbeta_{J_1\hat{\pi}^{1t}j_2^\prime}^*,\bS_{\bar{L}_2}^\top\bbeta_{J_1\hat{\pi}^{1t}j_2}^*>\bS_{\bar{L}_2}^\top\bbeta_{J_1\hat{\pi}^{1t}j_2^{\prime\prime}}^*,j_2^\prime<j_2,j_2^{\prime\prime}>j_2\}.
		\end{align*}
		Here we consider the case where more than one element in the sequence $\{\Delta^{2c}(\bS_{\bar{L}_2},J_1,\hat{\pi}^{1t},j_2)\}_{j_2\in\mathcal{J}_2}$ attain the maximum value, and the event $\mathcal{D}_{J_1j_2}$ denotes that the last occurrence of the maximum value is at index $j_2$. The events $\mathcal{E}_{J_1j_2}$ and $\mathcal{F}_{J_1j_2}$ are defined similarly. We first obtain that the events $\mathcal{F}_{J_1j_2}$ and $\mathcal{F}_{J_1j_2^{\prime}}$ are disjoint for any pairs $j_2\neq j_2^\prime$. Consequently, $\sum_{j_2}\bS_{\bar{L}_2}^\top\bbeta_{J_1\hat{\pi}^{1t}j_2}^*I(\mathcal{F}_{J_1j_2})=\max_{j_2}\bS_{\bar{L}_2}^\top\bbeta_{J_1\hat{\pi}^{1t}j_2}^*\ge \sum_{J_2}\bS_{\bar{L}_2}^\top\bbeta_{J_1\hat{\pi}^{1t}j_2}^*I(\mathcal{D}_{J_1j_2})$, which implies that $\bS_{\bar{L}_2}^\top\bbeta_{J_1\hat{\pi}^{1t}j_2}^*\{I(	\mathcal{D}_{J_1j_2})-I(\mathcal{F}_{J_1j_2})\}\le 0$. Moreover, since both $\cup_{j_2\in\mathcal{J}_2} \mathcal{D}_{J_1j_2}$ and $\cup_{j_2\in\mathcal{J}_2} \mathcal{F}_{J_1j_2}$ are full sets, as well as $\mathcal{D}_{J_1j_2}\cap \mathcal{D}_{J_1j_2^\prime}$ and $\mathcal{F}_{J_1j_2}\cap \mathcal{F}_{J_1j_2^\prime}$ are empty sets for $j_2\neq j_2^\prime$,
		\begin{align*}
			T_4&\le E\bigg[\sum_{j_2\in\mathcal{J}_2}\{\Delta^{2c}(\bS_{\bar{L}_2},J_1,\hat{\pi}^{1t},j_2)-\bS_{\bar{L}_2}^\top\bbeta_{J_1\hat{\pi}^{1t}j_2}^*\}\{I(	\mathcal{D}_{J_1j_2})-I(\mathcal{F}_{J_1j_2})\}\bigg]\\
			&=E\bigg(\sum_{j_2\in\mathcal{J}_2}\bigg[\sum_{j_2^{\prime\prime}\in\mathcal{J}_2}\{\Delta^{2c}(\bS_{\bar{L}_2},J_1,\hat{\pi}^{1t},j_2^{\prime\prime})-\bS_{\bar{L}_2}^\top\bbeta_{J_1\hat{\pi}^{1t}j_2^{\prime\prime}}^*\}I(\mathcal{D}_{J_1j_2^{\prime\prime}})\\
			&\qquad\qquad\qquad\quad-\sum_{j_2^\prime\in\mathcal{J}_2}\{\Delta^{2c}(\bS_{\bar{L}_2},J_1,\hat{\pi}^{1t},j_2^\prime)-\bS_{\bar{L}_2}^\top\bbeta_{J_1\hat{\pi}^{1t}j_2^\prime}^*\}I(\mathcal{F}_{J_1j_2^\prime})\bigg]I(\mathcal{D}_{J_1j_2})\bigg)\\
			&=E\bigg(\sum_{j_2\in\mathcal{J}_2}\bigg[\{\Delta^{2c}(\bS_{\bar{L}_2},J_1,\hat{\pi}^{1t},j_2)-\bS_{\bar{L}_2}^\top\bbeta_{J_1\hat{\pi}^{1t}j_2}^*\}\\
			&\qquad\qquad\qquad\quad-\sum_{j_2^\prime\in\mathcal{J}_2}\{\Delta^{2c}(\bS_{\bar{L}_2},J_1,\hat{\pi}^{1t},j_2^\prime)-\bS_{\bar{L}_2}^\top\bbeta_{J_1\hat{\pi}^{1t}j_2^\prime}^*\}I(\mathcal{F}_{J_1j_2^\prime})\bigg]I(\mathcal{D}_{J_1j_2})\bigg)\\
			&=E\bigg[\sum_{j_2\in\mathcal{J}_2}\sum_{j_2^\prime\neq J_2}\{\Delta^{2c}(\bS_{\bar{L}_2},J_1,\hat{\pi}^{1t},j_2)-\bS_{\bar{L}_2}^\top\bbeta_{J_1\hat{\pi}^{1t}j_2}^*\\
			&\qquad\qquad\qquad\quad-\Delta^{2c}(\bS_{\bar{L}_2},J_1,\hat{\pi}^{1t},j_2^\prime)+\bS_{\bar{L}_2}^\top\bbeta_{J_1\hat{\pi}^{1t}j_2^\prime}^*\}\{I(	\mathcal{D}_{J_1j_2}\cap \mathcal{F}_{J_1j_2^\prime})\}\bigg]\\
			&\le E\bigg[\sum_{j_2\in\mathcal{J}_2}\sum_{j_2^\prime\neq j_2}2\epsilon_n\cdot I(	\mathcal{D}_{J_1j_2}\cap \mathcal{F}_{J_1j_2^\prime})\bigg].
		\end{align*}
		For each pairs $j_2\neq j_2^\prime$,
		\begin{align*}
		&E\{I(\mathcal{D}_{J_1j_2}\cap \mathcal{F}_{J_1j_2^\prime})\}\le P\{\Delta^{2c}(\bS_{\bar{L}_2},J_1,\hat{\pi}^{1t},j_2)\ge \Delta^{2c}(\bS_{\bar{L}_2},J_1,\hat{\pi}^{1t},j_2^\prime),\bS_{\bar{L}_2}^\top\bbeta_{J_1\hat{\pi}^{1t}j_2^\prime}^*\ge \bS_{\bar{L}_2}^\top\bbeta_{J_1\hat{\pi}^{1t}j_2}^*\}\\
		&\qquad\le P\{|\bS_{\bar{L}_2}^\top(\bbeta_{J_1\hat{\pi}^{1t}j_2}^*-\bbeta_{J_1\hat{\pi}^{1t}j_2^\prime}^*)|\\
		&\qquad\qquad\le |\bS_{\bar{L}_2}^\top(\bbeta_{\hat{J}_1J_2\hat{a}_1}^*-\bbeta_{\hat{J}_1J_2^\prime \hat{a}_1}^*)-\Delta^{2c}(\bS_{\bar{L}_2},J_1,\hat{\pi}^{1t},j_2)+\Delta^{2c}(\bS_{\bar{L}_2},J_1,\hat{\pi}^{1t},j_2^\prime)|\}\\
		&\qquad\le P\{|\bS_{\bar{L}_2}^\top(\bbeta_{J_1\hat{\pi}^{1t}j_2}^*-\bbeta_{J_1\hat{\pi}^{1t}j_2^\prime}^*)|\le 2\epsilon_n\} = O(\epsilon_n^r),
		\end{align*}
		where the second inequality holds, since $I(a\le 0, b\ge 0)\le I(ab\le 0)\le I(|a|\le |a-b|)$, the last equality is by Assumption \ref{ass:margin2}. Hence $T_4=\varepsilon^{r+1}$.
		Since $\cup_{j_2\in\mathcal{J}_2}\mathcal{E}_{J_1j_2}$ is the full set, using the same argument as for $T_4$ yields
		\begin{align*}
			T_5
			&=E\bigg\{\sum_{j_2\in\mathcal{J}_2}\sum_{j_2^\prime\neq j_2}\bS_{\bar{L}_2}^\top(\bbeta_{J_1\hat{\pi}^{1t}j_2}^*-\bbeta_{J_1\hat{\pi}^{1t}j_2^\prime}^*)I(\mathcal{F}_{J_1j_2}\cap \mathcal{E}_{J_1j_2^\prime})\bigg\}.
		\end{align*}
		For each pairs $j_2\neq j_2^\prime$, 
		\begin{align*}
			&E\bigg\{\bS_{\bar{L}_2}^\top(\bbeta_{J_1\hat{\pi}^{1t}j_2}^*-\bbeta_{J_1\hat{\pi}^{1t}j_2^\prime}^*)I(\mathcal{F}_{J_1j_2}\cap \mathcal{E}_{J_1j_2^\prime})\bigg\}\\
			\le\ & E\bigg[\bS_{\bar{L}_2}^\top(\bbeta_{J_1\hat{\pi}^{1t}j_2}^*-\bbeta_{J_1\hat{\pi}^{1t}j_2^\prime}^*)I\{\bS_{\bar{L}_2}^\top(\bbeta_{J_1\hat{\pi}^{1t}j_2^\prime}^*-\bbeta_{J_1\hat{\pi}^{1t}j_2}^*)\le 0, \bS_{\bar{L}_2}^\top(\widehat\bbeta_{J_1\hat{\pi}^{1t}j_2^\prime}-\widehat\bbeta_{J_1\hat{\pi}^{1t}j_2}^*)\ge 0 \}\bigg]\\
			\le\ & E\bigg[|\bS_{\bar{L}_2}^\top(\bbeta_{J_1\hat{\pi}^{1t}j_2}^*-\bbeta_{J_1\hat{\pi}^{1t}j_2^\prime}^*)|\\
			&\qquad \times I\{|\bS_{\bar{L}_2}^\top(\bbeta_{J_1\hat{\pi}^{1t}j_2^\prime}^*-\bbeta_{J_1\hat{\pi}^{1t}j_2}^*)|\le |\bS_{\bar{L}_2}^\top(\bbeta^*_{J_1\hat{\pi}^{1t}j_2^\prime}-\widehat\bbeta_{J_1\hat{\pi}^{1t}j_2^\prime})-\bS_{\bar{L}_2}^\top(\bbeta^*_{J_1\hat{\pi}^{1t}j_2}-\widehat\bbeta_{J_1\hat{\pi}^{1t}j_2})| \}
			\bigg]\\
			\le\ & E\bigg[|\bS_{\bar{L}_2}^\top(\bbeta^*_{J_1\hat{\pi}^{1t}j_2^\prime}-\widehat\bbeta_{J_1\hat{\pi}^{1t}j_2^\prime})-\bS_{\bar{L}_2}^\top(\bbeta^*_{J_1\hat{\pi}^{1t}j_2}-\widehat\bbeta_{J_1\hat{\pi}^{1t}j_2})|\\
&\qquad \times I\{|\bS_{\bar{L}_2}^\top(\bbeta_{J_1\hat{\pi}^{1t}j_2^\prime}^*-\bbeta_{J_1\hat{\pi}^{1t}j_2}^*)|\le |\bS_{\bar{L}_2}^\top(\bbeta^*_{J_1\hat{\pi}^{1t}j_2^\prime}-\widehat\bbeta_{J_1\hat{\pi}^{1t}j_2^\prime})-\bS_{\bar{L}_2}^\top(\bbeta^*_{J_1\hat{\pi}^{1t}j_2}-\widehat\bbeta_{J_1\hat{\pi}^{1t}j_2})| \}
\bigg]\\
			=&\ O\bigg[(\|\widehat\bbeta_{J_1\hat{\pi}^{1t}j_2^\prime}-\bbeta^*_{J_1\hat{\pi}^{1t}j_2^\prime}\|_2+\|\widehat\bbeta_{J_1\hat{\pi}^{1t}j_2}-\bbeta^*_{J_1\hat{\pi}^{1t}j_2}\|_2)\\
			&\qquad \times P\{|\bS_{\bar{L}_2}^\top(\bbeta_{J_1\hat{\pi}^{1t}j_2^\prime}^*-\bbeta_{J_1\hat{\pi}^{1t}j_2}^*)|\le |\bS_{\bar{L}_2}^\top(\bbeta^*_{J_1\hat{\pi}^{1t}j_2^\prime}-\widehat\bbeta_{J_1\hat{\pi}^{1t}j_2^\prime})-\bS_{\bar{L}_2}^\top(\bbeta^*_{J_1\hat{\pi}^{1t}j_2}-\widehat\bbeta_{J_1\hat{\pi}^{1t}j_2})| \}\bigg]\\
			=&\ O_p\{(1/\sqrt{n})^{r+1}\},
		\end{align*}
		where the fourth equality is by $\|\widehat{\bbeta}_{j_1a_1j_2}-\bbeta_{j_2a_1j_2}^*\|_2=O_p(n^{-1/2})$ for any $j_1,j_2$, and $a_1$, as well as Assumptions \ref{ass:uni_bound_}, and \ref{ass:margin2} with $r\ge 1$. Hence $T_5=O_p\{(1/\sqrt{n})^{r+1}\}$. Similarily, 
		\begin{align*}
			T_6
			&=E\bigg\{\sum_{j_2\in\mathcal{J}_2}\sum_{j_2^\prime\neq J_2}\{\Delta^{2c}(\bS_{\bar{L}_2},J_1,\hat{\pi}^{1t},j_2)-\bS_{\bar{L}_2}^\top\bbeta_{J_1\hat{\pi}^{1t}j_2}^*\\
			&\qquad\qquad\qquad\qquad\qquad-\Delta^{2c}(\bS_{\bar{L}_2},J_1,\hat{\pi}^{1t},j_2^\prime)+\bS_{\bar{L}_2}^\top\bbeta_{J_1\hat{\pi}^{1t}j_2^\prime}^*\}I(\mathcal{F}_{J_1j_2}\cap \mathcal{E}_{J_1j_2^\prime})\bigg\}\\
			&=O\bigg[\epsilon_n\cdot\sum_{j_2\in\mathcal{J}_2}\sum_{j_2^\prime\neq j_2} P\{|\bS_{\bar{L}_2}^\top(\bbeta_{J_1\hat{\pi}^{1t}j_2^\prime}^*-\bbeta_{J_1\hat{\pi}^{1t}j_2}^*)|\\
			&\qquad\qquad\qquad\qquad\qquad\le |\bS_{\bar{L}_2}^\top(\bbeta^*_{J_1\hat{\pi}^{1t}j_2^\prime}-\widehat\bbeta_{J_1\hat{\pi}^{1t}j_2^\prime})-\bS_{\bar{L}_2}^\top(\bbeta^*_{J_1\hat{\pi}^{1t}j_2}-\widehat\bbeta_{J_1\hat{\pi}^{1t}j_2})| \}\bigg]\\
			&=O\{\epsilon_n(1/\sqrt{n})^r\}.
		\end{align*}
		Consequently, $\hat{G}^{2c}=O_p\{(1/\sqrt{n})^{r+1}+\epsilon_n(1/\sqrt{n})^r+\epsilon_n^{r+1}\}$. Repeating the argument used to control the term $\hat{G}^{2t}$, we have
		\begin{align*}
			\hat{G}^{1t}&\le E\left[\Delta^{1t}(\bS_{\bar{J}_1},J_1)\{I(\Delta^{1t}(\bS_{\bar{J}_1},J_1)>0)-I(\bS_{\bar{J}_1}^\top\widehat{\bgamma}_{J_1}>0)\}\right]\\
			&=O_p\{(1/\sqrt{n})^{r+1}+\epsilon_n(1/\sqrt{n})^r+\epsilon_n^{r+1}\}.
		\end{align*}
		Moreover, we denote
		\begin{align*}
			\mathcal{G}_{j_1}&=\{\Delta^{1c}(\bS_{l_1},j_1)\ge \Delta^{1c}(\bS_{l_1},j_1^\prime),\Delta^{1c}(\bS_{l_1},j_1)> \Delta^{1c}(\bS_{l_1},j_1^{\prime\prime}),j_1^\prime<j_1,j_1^{\prime\prime}>j_1\},\\
			\mathcal{H}_{j_1}&=\{\bS_{l_1}^\top\widehat{\bdelta}_{j_1}\ge \bS_{l_1}^\top\widehat{\bdelta}_{j_1^\prime},\bS_{l_1}^\top\widehat{\bdelta}_{j_1}> \bS_{l_1}^\top\widehat{\bdelta}_{j_1^{\prime\prime}},j_1^\prime<j_1,j_1^{\prime\prime}>j_1\}.
		\end{align*}
		Repeating the argument used to control term $\hat{G}^{2c}$, we obtain
		\begin{align*}
			\hat{G}^{1c}&=E\bigg[\sum_{j_1\in\mathcal{J}_1}\Delta^{1c}(\bS_{l_1},j_1)\{I(\mathcal{G}_{j_1})-I(\mathcal{H}_{j_1})\}\bigg]=O_p\{(1/\sqrt{n})^{r+1}+\epsilon_n(1/\sqrt{n})^r+\epsilon_n^{r+1}\}.
		\end{align*}
		Hence,
		\begin{align*}
		{\rm Profit}(\check\pi^{1c},\check\pi^{1t},\check\pi^{2c},\check\pi^{2t})-{\rm Profit}(\hat{\pi}^{1c},\hat{\pi}^{1t},\hat{\pi}^{2c},\hat{\pi}^{2t})&=O_p\{(1/\sqrt{n})^{r+1}+\epsilon_n(1/\sqrt{n})^r+\epsilon_n^{r+1}\}\\
		&=O_p\{(1/\sqrt{n})^{r+1}+\epsilon_n^{r+1}\}.
		\end{align*}
	\end{proof}

\end{document}